\documentclass{article}

\PassOptionsToPackage{numbers}{natbib}
\usepackage[final]{neurips_2020}
\usepackage[utf8]{inputenc}

\usepackage{booktabs} 
\usepackage[ruled]{algorithm2e} 

\SetAlFnt{\small}
\SetAlCapFnt{\small}
\SetAlCapNameFnt{\small}
\SetAlCapHSkip{0pt}
\IncMargin{-\parindent}

\usepackage{graphicx}
\usepackage{amsmath,amssymb,amsthm}
\usepackage{bm}
\usepackage{rotating}
\usepackage{multicol}
\usepackage{titlesec}
\usepackage{latexsym}
\usepackage{color}
\usepackage{hyperref}


\newcounter{newct}

\newcommand{\bv}{\begin{array}}


\newcommand{\x}{{ \bf x}}

\newcommand{\appProp}[2]{{\bf Proposition~\ref{#1}.} {\em #2}}

\newcommand{\appLem}[2]{{\bf Lemma~\ref{#1}.} {\em #2}}
\newcommand{\appThm}[2]{{\bf Theorem~\ref{#1}.} {\em #2}}


\newtheorem{thm}{Theorem}
\newtheorem{dfn}{Definition}

\newtheorem{lem}{Lemma}
\newtheorem{ex}{Example}

\newtheorem{prop}{Proposition}
\newtheorem{coro}{Corrollary}

\newtheorem{claim}{Claim}

\newcommand{\ma}{\mathcal A}
\newcommand{\mm}{\mathcal M}
\newcommand{\ms}{\mathcal S}

\newcommand{\ml}{\mathcal L}

\renewcommand{\cal}{\mathcal }

\newcommand{\calT}{\mathcal T}
\newcommand{\calQ}{\mathcal Q}
\newcommand{\ra}{\rightarrow}

\newcommand{\cw}{\rel_\text{CW}}
\newcommand{\mpsr}{\text{MPSR}}
\newcommand{\cor}{c} 
\newcommand{\rank}{\text{Rank}}
\newcommand{\tb}{\text{TB}}

\newcommand{\kt}{\text{KT}}
\newcommand{\mallows}{\text{Ma}}
\newcommand{\pl}{\text{Pl}}

\newcommand{\expect}{{\mathbb E}}
\newcommand{\perm}{\text{Perm}}
\newcommand{\hist}{\text{Hist}}

\newcommand{\ind}{\text{Ind}}

\newcommand{\rel}{\text{S}}
\newcommand{\id}{\text{Id}}
\newcommand{\gm}{{\cal G}_m}

\newcommand{\Omit}[1]{}
\newcommand{\calU}{\mathcal U}
\newcommand{\mc}{\mathcal C}
\newcommand{\mg}{\mathcal G}
\newcommand{\mx}{\mathcal X}

\newcommand{\ba}{{\mathbf E}}
\newcommand{\bb}{{\mathbf S}}
\newcommand{\pba}[1]{{\mathbf E}^{#1}}
\newcommand{\pbb}[1]{{\mathbf S}^{#1}}

\newcommand{\bd}{{\mathbf D}}
\newcommand{\cons}{\text{C}^{\ba\bb}}
\newcommand{\pcons}[1]{\text{C}^{#1}}
\newcommand{\rcons}{\text{C}^{\ba\bb}_{\le 0}}
\newcommand{\consa}{\text{C}^{\ba}}
\newcommand{\consap}{\text{C}^{\ba'}}
\newcommand{\conv}{\text{CH}}

\newcommand{\ppoly}[1]{{\mathcal H}^{#1}}
\newcommand{\ppolyz}[1]{{\mathcal H}_{\le 0}^{#1}}
\newcommand{\sol}{{\mathcal H}}
\newcommand{\rsol}{{\mathcal H}_{\le 0}}
\newcommand{\wmg}{\text{WMG}}
\newcommand{\umg}{\text{UMG}}

\newcommand{\pair}{\text{Pair}}
\newcommand{\wcw}{\text{WCW}}
\newcommand{\ncc}{\rel_\text{NCC}}

\newcommand{\ties}{\text{Ties}}
\newcommand{\sgroup}{{\cal S}_{\ma}}
\newcommand{\lex}{{\sc Lex}}
\newcommand{\fa}{{\sc Fa}}
\newcommand{\rt}{\text{RunTime}}
\newcommand{\loss}{\text{Loss}}

\newcommand{\invert}[1]{(#1)^\top}



\begin{document}

\title{The Smoothed Possibility of Social Choice
}
\author{Lirong Xia, RPI, xialirong@gmail.com
}\date{}
\maketitle
\begin{abstract} 
We develop a framework that leverages the smoothed complexity analysis by~\citet{Spielman2004:Smoothed} to circumvent paradoxes and impossibility theorems in social choice, motivated by modern applications of social choice powered by AI and ML. 
For Condrocet's paradox, we prove that the smoothed likelihood of  the paradox either vanishes at an exponential rate as the number of agents increases, or does not vanish at all. For the ANR impossibility on the non-existence of voting rules that simultaneously satisfy anonymity, neutrality, and resolvability, we characterize the rate for the impossibility to vanish, to be either polynomially fast or exponentially fast. We also propose a novel easy-to-compute tie-breaking mechanism that optimally preserves anonymity and neutrality for even number of alternatives in natural settings. Our results illustrate the smoothed possibility of social choice---even though the paradox and the impossibility theorem hold in the worst case, they may not be a big concern in practice. 
\end{abstract}

\section{Introduction}
Dealing with paradoxes and impossibility theorems is a major challenge in social choice theory, because ``{\em the force and widespread presence of impossibility results generated a consolidated sense of pessimism, and this became a dominant theme in welfare economics and social choice theory in general}'', as the eminent economist Amartya Sen commented in his Nobel prize lecture~\cite{Sen1999:The-Possibility}.

Many paradoxes and impossibility theorems in social choice are based on worst-case analysis. Take perhaps the earliest one, namely {\em Condorcet's (voting) paradox}~\cite{Condorcet1785:Essai}, for example. Condorcet's paradox states that, when there are at least three alternatives, it is impossible for pairwise majority aggregation to be transitive. 
The proof  is done by explicitly constructing a worst-case scenario---a profile $P$ that contains a {\em Condorcet cycle}. For example, in $P=\{a\succ b\succ c, b\succ c\succ a, c\succ a\succ b\}$, there is a cycle $a\succ b$, $b\succ c$, and $c\succ a$ of pairwise majority. Condorcet's paradox is closely related to the celebrated Arrow's impossibility theorem: if Condorcet's paradox can be avoided, then  the pairwise majority rule can avoid Arrow's impossibility theorem.

As another example, the {\em ANR impossibility theorem}  (e.g.~\cite[Problem~1]{Moulin1983:The-Strategy} and~\cite{Ozkes2020:Anonymous,Dogan2015:Anonymous,Campbell2015:The-finer})  states that no voting rule $r$ can simultaneously satisfy anonymity ($r$ is insensitive to the identities of agents) and neutrality ($r$ is insensitive to the identities of alternatives), and resolvability ($r$ always chooses a single winner). The proof is done by  analyzing a worst-case scenario $P=\{a\succ b, b\succ a\}$. Suppose  for the sake of contradiction that a resolvable $r$ satisfies anonymity and neutrality, and without loss of generality let $r(P)=a$. After exchanging $a$ and $b$, the winner ought to be $b$ due to neutrality. But since the permuted profile still contains one vote for $a\succ b$ and one vote for $b\succ a$, the winner ought to be $a$ due to anonymity, which is a contradiction.

There is an enormous literature in social choice on circumventing the impossibilities, most of which belongs to the following two approaches. {\em (1) Domain restrictions}, namely, agents' reported preferences are assumed to come from a subset of all linear orders such as single-peaked preferences~\cite{Black58:Theory,Arrow63:Social,Sen1966:A-Possibility,Moulin80:Strategy,Conitzer09:Eliciting,Brandt10:Bypassing,Elkind2016:Preference}; and {\em (2) likelihood analysis}, where impossibility theorems are evaluated by the likelihood of their occurrence in profiles randomly generated from a distribution such as the i.i.d.~uniform distribution, a.k.a.~{\em Impartial Culture (IC)}~\cite{Gehrlein2002:Condorcets,Green-Armytage2016:Statistical}. Both approaches have been criticized for making strong and unrealistic assumptions on the domain and on the probability distributions, respectively. In particular, IC has received much criticism, see e.g.~\cite{Deemen2014:On-the-empirical}, yet no widely-accepted probabilisitic model exists to the best of our knowledge. 


The worst-case nature behind the impossibility theorems might be desirable for high-stakes, less-frequent applications such as political elections, but it may not be appropriate for modern low-stakes, frequently-used applications of social choice, many of which are supported by AI systems that   learn agents' preferences to help them make group decisions~\cite{Xia2017:Improving}. While AI-powered social choice appears to be a promising solution to the long-standing turnout problem~\citep{Sen1999:The-Possibility} and can therefore promote democracy to a larger scale and with a higher frequency, it questions the relevance of worst-case analysis in social choice theory. 
This motivates us to ask the following key question:

{\bf How serious are the impossibilities in  frequently-used modern applications of social choice?}

The  frequently-used feature naturally leads to the analysis of average likelihood of the impossibility theorems. But in light of the criticism of the classical likelihood analysis approach discussed above, what is a realistic model to answer the key question?

Interestingly, computer science has encountered a similar challenge and has gone through a similar path in the analysis of practical performance of algorithms. Initially, the analysis mostly focused on the worst case, as in the spirit of big $O$ notation and {\sc NP}-hardness. Domain restrictions have  been a popular approach beyond the worst-case analysis. For example, while SAT is NP-hard, its restriction 2-SAT is in P. 
Likelihood analysis, in particular average-case complexity analysis~\cite{Bogdanov2006:Average-Case}, has also been a popular approach, yet it suffers from the same criticism as its counterpart in social choice---the distribution used in the analysis may well be unrealistic~\cite{Spielman2009:Smoothed}.

The challenge was addressed by the {\em smoothed (complexity) analysis} introduced by~\citet{Spielman2004:Smoothed}, which focuses on the ``worst average-case'' scenario that combines the worst-case analysis and the average-case analysis. The idea is based on the fact that the  input $\vec x$ of an algorithm is often a noisy perception of the ground truth $\vec x^*$. Therefore, the worst-case is analyzed by assuming that an adversary chooses a ground truth $\vec x^*$ and then Nature adds a noise $\vec\epsilon$ (e.g.~a Gaussian noise) to it, such that the algorithm's input becomes $\vec x=\vec x^*+\vec \epsilon$. The smoothed runtime of an algorithm is therefore $\sup_{\vec x^*}\expect_{\vec\epsilon}\  \rt(\vec x^*+\vec\epsilon)$, in contrast to the worst-case runtime $\sup_{\vec x^*}\rt(\vec x^*)$ and the average-case runtime $\expect_{\vec x^*\sim\pi} \rt(\vec x^*)$, where $\pi$ is a given distribution over data.

\noindent{\bf  Our Contributions.} We propose a framework that leverages the elegant smoothed complexity analysis to answer the key question above. In social choice, the data is a {\em profile}, which consists of agents' reported preferences that are often represented by linear orders over a set $\ma$ of $m$ alternatives. Like in the smoothed complexity analysis, in our framework there is an adversary who controls agents' ``ground truth'' preferences, which may be ordinal (as rankings over alternatives) or cardinal (as utilities over alternatives). Then, Nature adds a ``noise'' to the ground truth preferences and outputs a preference profile, which consists of linear orders over alternatives. 

Following the convention in average-case complexity analysis~\cite{Bogdanov2006:Average-Case}, we use a  statistical model to model Nature's noising procedure. 
As in many smoothed-analysis approaches, we assume that noises in agents' preferences are independently generated, yet agents' ground truth preferences can be arbitrarily correlated, which constitutes the basis for the worst-case analysis. 
Using our smoothed analysis framework, we obtain the following two dichotomy theorems on the asymptotic smoothed likelihood of Condorcet's paradox and the ANR impossibility under mild assumptions, when the number of alternatives $m$ is fixed and the number of agents $n$ goes to infinity.  

\vspace{1mm}
{\noindent \sc Theorem~\ref{thm:smoothedCondorcet}.} {\bf (Smoothed Condorcet's paradox, informally put).} {\em The smoothed likelihood of Condorcet's Paradox either vanishes at an exponential rate, or does not vanish at all.
}

\vspace{1mm}
{\noindent \sc Theorem~\ref{thm:smoothedANR}.} {\bf (Smoothed ANR (im)possibility theorem, informally put).} {\em The theorem has two parts. The {\bf smoothed possibility} part states that there exist resolute voting rules under which the impossibility theorem either vanishes at an exponential rate or at a polynomial rate. The {\bf smoothed impossibility part} states that there does not exists a resolute voting rule under which the impossibility theorem vanishes   faster than under the rules in the smoothed possibility part.
}

Both theorems are quite general and 
their formal statements also characterize conditions for each case. Such conditions in Theorem~\ref{thm:smoothedCondorcet} tell us when  Condorcet's Paradox vanishes (at an exponential rate), which is positive. While the theorem may be expected at a high level and part of it is easy to prove, for example the exponential-rate part can be proved by using a similar idea as in the proof of minimaxity/sample complexity of MLE under a large class of distance-based models~\citep{Caragiannis2016:When}, we are not aware of a previous work that provides a {\em complete dichotomy} that draws a clear line between paradoxes and non-paradoxes as Theorem~\ref{thm:smoothedCondorcet} does. In addition, we view such expectedness positive news, because it provides a theoretical confirmation of well-believed hypotheses under natural settings, as smoothed complexity analysis did for the runtime of a simplex algorithm.

The smoothed possibility part of Theorem~\ref{thm:smoothedANR} is also positive because it states that the ANR impossibility vanishes as the number of agents $n$ increases. 
The smoothed impossibility part of Theorem~\ref{thm:smoothedANR} is mildly negative, because it states that no voting rule can do better, though the impossibility may still vanish as $n$ increases. Together, Theorem~\ref{thm:smoothedCondorcet}  and~\ref{thm:smoothedANR}  illustrate the smoothed possibility of social choice---even though the paradox and the impossibility theorem hold in the worst case, they may not be a big concern in practice in some natural settings. 

Our framework also allows us to develop a novel easy-to-compute tie-breaking mechanism called {\em most popular singleton ranking (\mpsr)} tie-breaking, which tries to break ties using a linear order that uniquely occurs most often in the profile (Definition~\ref{dfn:mpsrtb}). We prove that  $\mpsr$ is better than the commonly-used lexicographic tie-breaking and fixed-agent tie-breaking mechanisms w.r.t.~the smoothed likelihood of the ANR impossibility---\mpsr{} reduces the smoothed likelihood from $n^{-0.5}$ to $n^{-\frac{m!}{4}}$ for many commonly-studied voting rules under natural assumptions (Proposition~\ref{prop:nonoptlexfa} and Theorem~\ref{thm:mpsropt}), and is optimal for even number of alternatives $m$ (Theorem~\ref{thm:smoothedANR} and Lemma~\ref{lem:taubounds}).

\vspace{1mm}
\noindent{\bf Proof Techniques.} Standard approximation techniques such as Berry-Esseen theorem and its high-dimensional counterparts, e.g.~\cite{Valiant2011:Estimating,Daskalakis2016:A-Size-Free,Diakonikolas2016:The-fourier}, due to their  $O(n^{-0.5})$ error terms, are too coarse for the (tight) bound in Theorem~\ref{thm:smoothedANR}. To prove our theorems, we first model various events of interest as systems of linear constraints.  Then, we develop a technical tool (Lemma~\ref{lem:maintech}) to provide a dichotomy characterization  for the {\em Poisson Multinomial Variables (PMV)} that corresponds to the histogram of a randomly generated profile to satisfy the constraints. We further show in Appendix~\ref{sec:application} that Lemma~\ref{lem:maintech} is a general and useful tool for analyzing smoothed likelihood of many other commonly-studied events in social choice (Table~\ref{tab:propositions}), which are otherwise hard to analyze. 


\subsection{Related Work and Discussions} 
\noindent{\bf Smoothed analysis.} Smoothed analysis has been applied to a wide range of problems in mathematical programming, machine learning, numerical analysis, discrete math, combinatorial optimization, and equilibrium analysis  and price of anarchy~\cite{Chung2008:The-Price}, see~\cite{Spielman2009:Smoothed} for a survey.  In a recent position paper, \citet{Baumeister2020:Towards} proposed to conduct smoothed analysis on computational aspects of social choice and mentioned that their model can be used to analyze voting paradoxes and ties, but the paper does not contain technical results. Without knowing their work, we independently proposed and formulated the smoothed analysis framework for social choice in this paper.

\noindent{\bf The worst average-case idea.}  While our framework is inspired by the smoothed complexity analysis, the worst average-case idea is deeply rooted in (frequentist) statistics and can be viewed as a measure of robustness. Taking a statistical decision theory~\cite{Berger85:Statistical} point of view, the frequentist's loss of a decision rule $r:\text{Data}\ra \text{Decision}$ under a statistical model $\mm = (\Theta,\ms,\Pi)$ is measured by

$\hfill\sup\nolimits_{\theta\in \Theta}\expect_{P\sim \pi_\theta}(\loss(\theta, r(P))),\hfill$

where the expectation evaluates the average-case loss under the worst-case distribution $\pi_\theta\in\Pi$. There is a large literature in statistical aspects of social choice (e.g.~\cite{Condorcet1785:Essai,Caragiannis2016:When,Xia2016:Bayesian}) and preference learning (e.g.~\cite{Khetan2016:Data-driven,Nihar-B.-Shah2016:Estimation}) that study the frequentist loss w.r.t.~classical loss functions in statistics that depend on both $P$ and $\theta$, leading to consistency and minimaxity results. The idea is also closely related to the ``min of means'' criteria in decision theory~\citep{Gilboa1989:Maxmin}.
Our framework explicitly models smoothed likelihood of social choice events via loss functions that measure the dissatisfaction of axioms w.r.t.~the data (profile) and do not depend on the ``ground truth'' $\theta$. This is similar to smoothed complexity analysis, where the loss function is the runtime of an algorithm, which also only depends on the input data $P$ but not on $\theta$.


\noindent{\bf Correlations among agents' preferences.} In our model, agents' ground truth preferences can be arbitrarily correlated while the randomness  comes from independent noises. This is a standard assumption in smoothed complexity analysis as well as in relevant literatures in psychology, economics, and behavioral science, as evident in random utility models, logistic regression, MLE interpretation of the ordinary least squares method, etc.~\cite{Train09:Discrete,Xia2019:Learning}. As another justification, the adversary can be seen as a manipulator who wants to control agents' reported preferences, but is only able to do it in a probabilistic way.

\noindent{\bf Generality of results and techniques.}   Our technical results are quite general and can be immediately applied to classical likelihood analysis in social choice under i.i.d.~distribution, to answer open questions, obtain new results, and provide new insights. For example, a straightforward application of Lemma~\ref{lem:maintech} gives an asymptotic answer to an open question by~\citet{Tsetlin2003:The-impartial} (after Corollary~\ref{coro:table} in Appendix~\ref{sec:application}). As another example, we are not aware of a previous work on the asymptotic likelihood of the ANR impossibility even under IC, which is a special case of Theorem~\ref{thm:smoothedANR}. 



\noindent{\bf Other related work.}   As discussed above, there is a large literature on domain restrictions and the likelihood analysis toward circumventing impossibility theorems, see for example, the book by~\citet{Gehrlein2017:Elections} for a recent survey. In particular, there is a large literature on the likelihood of Condorcet voting paradox and the likelihood of (non)-existence of Condorcet winner under i.i.d.~distributions especially IC~\cite{DeMeyer1970:The-Probability,May1971:Some,Gehrlein1976:The-probability,Tsetlin2003:The-impartial,Jones1995:Condorcet,Green-Armytage2016:Statistical,Brandt2016:Analyzing,Brandt2019:Exploring}. The IC assumption has also been used to prove quantitative versions of other impossibility theorems in social choice such as Arrow's impossibility theorem~\cite{Kalai02:Fourier,Keller2010:On-the-probability,Mossel2012:A-quantitative} and Gibbard-Satterthwaite theorem~\cite{Friedgut2011:A-quantitative,Mossel2015:A-quantitative}, as well as in judgement aggregation~\cite{Nehama2013:Approximately,Filmus2019:AND-testing}. 
Other works have studied social choice problems when each agent's preferences are represented by a probability distribution~\cite{Bachrach10:Probabilistic,Noam-Hazon2012:-On-the-Evaluation,Procaccia16:Optimal,Zhao2018:A-Cost-Effective,Noothigattu2018:A-Voting-Based,Li2019:Minimizing}. These works focused on computing the outcome efficiently, which is quite different from our goal. 

\section{Preliminaries}
\noindent{\bf Basic Setting.} Let $\ma=[m]=\{1,\ldots,m\}$ denote the set of $m\ge 3$ {\em alternatives}. Let $\ml(\ma)$ denote the set of all linear orders (a.k.a.~rankings) over $\ma$. Let $n\in\mathbb N$ denote the number of agents. Each agent uses a linear order to represent his or her preferences. The vector of $n\in\mathbb N$ agents' votes $P$ is called a {\em (preference) profile}, or sometimes an $n$-profile. For any profile $P$, let $\hist(P)\in {\mathbb Z}_{\ge 0}^{m!}$ denote the anonymized profile of $P$,  also called the {\em histogram} of $P$, which counts the multiplicity of each linear order in $P$. 
A  {\em resolute voting rule} $r$ is a
mapping from each profile to a single winner in $\ma$.  A {\em voting correspondence} $\cor$ is a mapping from  each  profile to a non-empty set of co-winners. 

\noindent{\bf Tie-Breaking Mechanisms.}  Many commonly-studied voting rules are defined as correspondences combined with a tie-breaking mechanism. For example, a  {\em positional scoring correspondence}  is characterized by a scoring vector $\vec s=(s_1,\ldots,s_m)$ with $s_1\ge s_2\ge \cdots\ge s_m$ and $s_1>s_m$. For any alternative $a$ and any linear order $R\in\ml(\ma)$, we let $\vec s(R,a)=s_i$, where $i$ is the rank of $a$ in $R$. Given a profile $P$,  the  positional scoring correspondence $c_{\vec s}$ chooses all alternatives $a$ with maximum $\sum_{R\in P}\vec s(R,a)$. For example, {\em Plurality} uses the scoring vector $(1,0,\ldots,0)$ and {\em Borda} uses the scoring vector $(m-1,m-2,\ldots,0)$. The positional scoring rule $r_{\vec s}$ chooses a single alternative by further applying a tie-breaking mechanism.  The {\em lexicographic tie-breaking}, denoted by \lex-$R$ where $R\in\ml(\ma)$, breaks ties in favor of alternatives ranked higher in $R$. The fixed-agent tie-breaking, denoted by \fa-$j$ where $1\le j\le n$, uses  agent $j$'s preferences to break ties. 

\noindent{\bf (Un)weighted Majority Graphs.}  For any profile $P$ and any pair of alternatives $a,b$, let $ P[a\succ b]$ denote the number of rankings in $P$ where $a$ is preferred to $b$. Let $\wmg(P)$ denote the {\em weighted majority graph} of $P$, which is a complete graph where the vertices are $\ma$ and the edge weights are $w_P(a,b) = P[a\succ b] - P[b\succ a]$. For any distribution $\pi$ over $\ml(\ma)$, let $\wmg(\pi)$ denote the weighted majority graph where $\pi$ is treated as a {\em fractional} profile, where for each $R\in\ml(\ma)$ there are $\pi(R)$ copies of $R$. The {\em unweighted majority graph (UMG)} of a profile $P$, denoted by $\umg(P)$, is the unweighted directed graph where the vertices are the alternatives and there is an edge $a\ra b$ if and only if $P[a\succ b] > P[b\succ a]$. If $a$ and $b$ are tied, then there is no edge between $a$ and $b$. $\umg(\pi)$ is defined similarly. A {\em Condorcet cycle}  of a profile $P$ is a cycle in $\umg(P)$. A {\em weak Condorcet cycle} of a profile $P$ is a cycle in any supergraph of $\umg(P)$.



\noindent{\bf Axiomatic Properties.}  A voting rule $r$ satisfies {\em anonymity}, if the winner is insensitive to the identity of the voters. That is, for any pair of profiles $P$ and $P'$ with $\hist(P)=\hist(P')$, we have $r(P)=r(P')$. $r$ satisfies {\em neutrality} if the winner is insensitive to the identity of the alternatives. That is, for any permutation $\sigma$ over $\ma$, we have $r(\sigma(P))=\sigma(r(P))$, where $\sigma(P)$ is the obtained from $P$ by permuting alternatives according to $\sigma$. 

\noindent{\bf Single-Agent Preference Models.}  A statistical model $\mm=(\Theta, \ms, \Pi)$ has three components: the parameter space $\Theta$, which contains the ``ground truth''; the sample space $\ms$, which contains all possible data; and the set of probability distributions $\Pi$, which contains a distribution $\pi_{\theta}$ over $\ms$ for each $\theta\in\Theta$. In this paper we adopt single-agent preference models, where $\ms=\ml(\ma)$. 

\begin{dfn}A {\em single-agent preference model} is denoted by $\mm=(\Theta,\ml(\ma),\Pi)$. $\mm$ is {\em strictly positive} if there exists $\epsilon>0$ such that the probability of any ranking under any distribution in $\Pi$ is at least $\epsilon$.  $\mm$ is {\em closed} if $\Pi$ is a closed set in ${\mathbb R}_{\ge 0}^{m!}$, where each distribution in $\Pi$ is viewed as a vector in $m!$-probability simplex. $\mm$ is {\em neutral} if for any $\theta\in\Theta$ and any permutation $\sigma$ over $\ma$, there exists $\eta\in\Theta$ such that for all $R\in\ml(\ma)$, we have $\pi_\theta(R) = \pi_\eta(\sigma(R))$. 
\end{dfn}
See Example~\ref{ex:sapm} in Appendix~\ref{app:modelex} for two examples of single-agent preference models that correspond to the celebrated Mallows model and Plackett-Luce model, respectively.


\section{Smoothed Analysis Framework and The Main Technical Lemma}
\label{sec:maintech}
Many commonly-studied axioms and events in social choice, denoted by $X$, are defined based on per-profile properties in the following way. Let $r$ denote a voting rule or correspondence and let $P$ denote a profile. There is a function $\rel_X(r,P)\in \{0,1\}$ that indicates whether $X$ holds for $r$ at $P$. Then, $r$ satisfies $X$ if $\forall P, \rel_X(r,P)=1$, or equivalently, $\inf_P \rel_X(r,P)=1$. For example, for {\em anonymity}, let $\rel_\text{ano}(r,P)=1$ iff for all profiles $P'$ with $\hist(P')=\hist(P)$, $r(P')=r(P)$. For {\em neutrality}, let $\rel_\text{neu}(r,P)=1$ iff for all permutation $\sigma$ over $\ma$, $\sigma(r(P))=r(\sigma(P))$. For {\em non-existence of Condorcet cycle}, let $\ncc(P)=1$ iff there is no Condorcet cycle in $P$.

Our smoothed analysis framework assumes that each of the $n$ agents' preferences are chosen from a single-agent preference model $\mm=(\Theta,\ml(\ma),\Pi)$ by the adversary.
\begin{dfn}[\bf Smoothed likelihood of events]\label{dfn:smoothedlikelihood} Given a single-agent preference model $\mm=(\Theta,\ml(\ma),\Pi)$, $n\in\mathbb N$ agents, a function $\rel_X$ that characterizes an axiom or event $X$, and a voting rule  (or correspondence) $r$, the {\em smoothed likelihood} of $X$ is defined as $\inf_{\vec\pi\in \Pi^n}\expect_{P\sim {\vec\pi}} \rel_X(r,P)$.
\end{dfn}


For example, $\inf_{\vec\pi\in \Pi^n}\expect_{P\sim {\vec\pi}} \ncc(P)$ is the smoothed likelihood of non-existence of Condorcet cycle, which corresponds to the avoidance of Condorcet's paradox. $\inf_{\vec\pi\in \Pi^n}\Pr_{P\sim {\vec\pi}} (\rel_\text{ano}(P)+\rel_\text{neu}(P)=2)$ is the smoothed likelihood of satisfaction of anonymity$+$neutrality, which corresponds to the avoidance of the ANR impossibility.

We use the following simple example to show how to model events of interest as a system of linear constraints. The first event is closely related  to the smoothed Condorcet's paradox (Theorem~\ref{thm:smoothedCondorcet}) and the second event is closely related to the smoothed ANR theorem (Theorem~\ref{thm:smoothedANR}).

\begin{ex}\label{ex:cc} Let $m=3$ and $\ma = \{1,2,3\}$. For any profile $P$, let $x_{123}$ denote the number of $1\succ 2\succ 3$ in $P$. The event ``there is a Cordorcet cycle $1\ra 2\ra 3\ra 1$" can be represented by:
\begin{align}
(x_{213}+x_{231}+x_{321}) - (x_{123}+x_{132}+x_{312}) & < 0\label{eq:1->2}\\
(x_{312}+x_{321}+x_{132}) - (x_{231}+x_{213}+x_{123})&<0\label{eq:2->3}\\
(x_{123}+x_{132}+x_{213}) - (x_{312}+x_{321}+x_{231})&<0\label{eq:3->1}
\end{align}
Equation (\ref{eq:1->2}) (respectively, (\ref{eq:2->3}) and (\ref{eq:3->1})) states that $\umg(P)$ has edge $1\ra 2$ (respectively, $2\ra 3$ and $3\ra 1$). As another example, the event ``$\hist(P)$ is invariant to the permutation $\sigma$ over $\ma$ that exchanges $1$ and $2$'' can be represented by $\{x_{123} - x_{213} = 0, x_{132} - x_{231} = 0, x_{312} - x_{321} = 0\}$.
\end{ex}
Notice that in this example each constraint has the form $\vec E\cdot\vec x =0$ or $\vec S\cdot\vec x <0$, where $\vec E\cdot\vec 1 = 0$ and $\vec S\cdot\vec 1=0$. More generally, our main technical lemma upper-bounds the smoothed likelihood for the  Poisson multinomial variable $\hist(P)$ to satisfy a similar system of linear inequalities below.
\begin{dfn} Let $q, n\in \mathbb N$. For any vector of $n$ distributions $\vec\pi= (\pi_1,\ldots,\pi_n)$, each of which is over $[q]$, let $\vec Y=(Y_1,\ldots,Y_n)$ denote the vector of $n$ random variables distributed as $\pi_1,\ldots,\pi_n$, respectively, and  let $\vec X_{\vec \pi} =  \hist(\vec Y)$, i.e.~the Poisson multinomial variable that corresponds to $\vec Y$. 
\end{dfn}

\begin{dfn} \label{dfn:cab} 
Let 
$\cons(\vec x)=\{\ba\cdot (\vec x)^\top = (\vec 0)^\top \text{ and }\bb\cdot (\vec x)^\top < (\vec 0)^\top\}$, where $\ba$ is a $K\times q$ integer matrix that represents the {\em equations} and $\bb$ is an $L\times q$ integer matrix  with $K+L\ge 1$ that represents the {\em strict inequalities}.  Let $\rcons(\vec x)=\{\ba\cdot (\vec x)^\top = (\vec 0)^\top \text{ and }\bb\cdot (\vec x)^\top \le (\vec 0)^\top\}$ denote the relaxation of $\cons(\vec x)$.  Let $\ppoly{}$ and  $\ppolyz{}$  denote the solutions to $\cons(\vec x)$ and $\rcons(\vec x)$, respectively.  
\end{dfn}


\begin{lem}[\bf Main technical lemma]\label{lem:maintech} Let $q\in\mathbb N$ and let $\Pi$ be a closed set of strictly positive distributions over  $[q]$. Let $\conv(\Pi)$ denote the convex hull of $\Pi$. 

\text{\bf Upper bound.}  For any $n\in\mathbb N$ and any $\vec \pi\in \Pi^n$, 

$\hfill\Pr\left(\vec X_{\vec \pi} \in \sol\right)=\left\{\begin{array}{ll}0 &\text{if }\sol = \emptyset\\
\exp(-\Omega(n)) &\text{if }\sol \ne \emptyset \text{ and }\rsol\cap\conv(\Pi)= \emptyset\\
O(n^{-\frac{\rank(\ba)}{2}}) &\text{if }\sol \ne \emptyset \text{ and }\rsol\cap\conv(\Pi)\ne \emptyset
\end{array}\right.\hfill$
\text{\bf Tightness of the upper bound.}  There exists a constant $C$ such that for any $n'\in\mathbb N$, there exists $n'\le n\le Cn'$ and $\vec \pi\in\Pi^n$ such that

$\hfill
\Pr\left(\vec X_{\vec \pi} \in \sol\right)=\left\{\begin{array}{ll}
\exp(-O(n)) &\text{if }\sol \ne \emptyset \text{ and }\rsol\cap\conv(\Pi)= \emptyset\\
\Omega(n^{-\frac{\rank(\ba)}{2}}) &\text{if }\sol \ne \emptyset \text{ and }\rsol\cap\conv(\Pi)\ne \emptyset
\end{array}\right.\hfill$
\end{lem}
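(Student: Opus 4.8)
The plan is to separate the three cases and handle each with a different tool, while unifying the first two through a large-deviations argument and the third through a local central limit theorem for Poisson multinomial variables (PMVs). First consider the \textbf{upper bound}. If $\sol=\emptyset$ there is nothing to prove, so assume $\sol\ne\emptyset$. The crucial structural observation is that, for any $\vec\pi\in\Pi^n$, the mean $\E[\vec X_{\vec\pi}]=n\bar\pi$ where $\bar\pi=\frac1n\sum_i\pi_i\in\conv(\Pi)$; and $\vec X_{\vec\pi}$ concentrates around this mean in the sense that $\vec X_{\vec\pi}/n\to\bar\pi$. So the event $\vec X_{\vec\pi}\in\sol$ forces $\vec X_{\vec\pi}/n$ to be (approximately) a point satisfying the relaxed system $\rcons$ --- the strict inequalities $\bb\cdot\vec x<\vec 0$, once divided by $n$, can at best survive as weak inequalities $\bb\cdot\vec x\le\vec 0$ in the limit, while the equalities $\ba\cdot\vec x=\vec 0$ are scale-invariant. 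Hence if $\rsol\cap\conv(\Pi)=\emptyset$, the event requires $\vec X_{\vec\pi}/n$ to be bounded away (by a fixed distance $\delta>0$, using that both $\rsol$ and $\conv(\Pi)$ are closed and $\conv(\Pi)$ is compact) from $\conv(\Pi)$, hence bounded away from its own mean; a Hoeffding/Chernoff bound for sums of independent bounded vectors then gives $\exp(-\Omega(n))$, where the constant is uniform over $\vec\pi$ because $\delta$ is. For the third case, $\rsol\cap\conv(\Pi)\ne\emptyset$, the bound $O(n^{-\rank(\ba)/2})$ comes from the equality constraints alone: $\ba\cdot(\vec X_{\vec\pi})^\top=\vec 0$ asks an integer-valued random vector, whose distribution after the linear map $\ba$ is a PMV in $\mathbb Z^K$ living on a lattice coset and (after the strict-positivity hypothesis rules out degeneracy) has covariance growing like $n$ in every direction of $\mathrm{rowspace}(\ba)$, to hit a single lattice point; a local CLT / Berry–Esseen-type estimate for PMVs (e.g.\ the anticoncentration bounds in the references cited in the paper) bounds this probability by $O(n^{-\rank(\ba)/2})$. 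One must be slightly careful that $\ba\cdot\vec 1=0$ is not assumed in the lemma's generality, but since $\vec X_{\vec\pi}$ sums to $n$, only the component of each row of $\ba$ orthogonal to $\vec 1$ matters, and its rank is still $\rank(\ba)$ after absorbing the constant; alternatively restrict attention to the affine slice $\{\vec x:\vec 1\cdot\vec x=n\}$ throughout.

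For the \textbf{tightness} direction, the goal is to exhibit, for each target size $n'$, some $n\in[n',Cn']$ and some $\vec\pi\in\Pi^n$ achieving the stated lower bounds. In the exponential case, pick any point $\vec x^\star$ in the (nonempty) solution set $\sol$; since $\sol\ne\emptyset$ one can, by rounding coordinates and rescaling, find integer-valued profiles $P$ of size $\Theta(n)$ with $\hist(P)\in\sol$, so simply take all $\pi_i$ to be a single fixed distribution $\pi^\star\in\Pi$ and lower-bound $\Pr(\vec X_{\vec\pi}=\hist(P))\ge \epsilon^n$ by the strict-positivity constant $\epsilon$ raised to the $n$-th power (choosing the particular profile that the i.i.d.\ draw must reproduce), giving $\exp(-O(n))$; the freedom to choose $n$ in a bounded window $[n',Cn']$ is exactly what lets us hit a lattice point in $\sol$ exactly. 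In the polynomial case, choose $\bar\pi\in\rsol\cap\conv(\Pi)$; write $\bar\pi$ as a convex combination of at most $q$ points of $\Pi$ and let $\vec\pi$ repeat those points with the appropriate (rounded) multiplicities summing to $n$, so $\E[\vec X_{\vec\pi}]\approx n\bar\pi$ lands on (or within $O(1)$ of) the face $\ppolyz{}$. Then $\ba\cdot(\vec X_{\vec\pi})^\top=\vec 0$ with probability $\Omega(n^{-\rank(\ba)/2})$ by the matching \emph{lower} bound in the local CLT for PMVs (the Gaussian density at the mean is $\Theta(n^{-\rank(\ba)/2})$, and anticoncentration shows the PMV is comparable), while the strict inequalities $\bb\cdot(\vec X_{\vec\pi})^\top<\vec 0$ hold with probability $\Omega(1)$: starting from a mean on the relaxed polytope, one perturbs $\bar\pi$ slightly into the strict interior of the halfspaces $\bb\cdot\vec x<0$ (possible when $\sol\ne\emptyset$, which guarantees those halfspaces have nonempty interior intersecting near $\bar\pi$) so that a constant fraction of the Gaussian mass of $\vec X_{\vec\pi}$ lies strictly inside; intersecting a probability-$\Omega(n^{-\rank(\ba)/2})$ event (the lattice hit) with a positively-correlated or at least not-too-negatively-correlated constant-probability event requires a short argument, e.g.\ conditioning on $\ba\cdot(\vec X_{\vec\pi})^\top=\vec 0$ and showing the conditional law of $\bb\cdot(\vec X_{\vec\pi})^\top$ still puts $\Omega(1)$ mass in the negative orthant via its own CLT.

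The \textbf{main obstacle} I expect is the third case of the upper bound together with the polynomial lower bound: both require a sufficiently sharp local limit theorem for Poisson multinomial variables that is (a) uniform over the adversary's choice of $\vec\pi\in\Pi^n$, (b) gives matching $\Theta(n^{-\rank(\ba)/2})$ upper and lower bounds after the linear projection by $\ba$, and (c) correctly handles the lattice/coset structure so that the ``hit a single point'' probability is not artificially zero for parity reasons. Off-the-shelf Berry–Esseen bounds have $O(n^{-1/2})$ error terms that, as the paper notes, are too weak to be used as a black box for the $n^{-\rank(\ba)/2}$ rate when $\rank(\ba)$ is large; so one genuinely needs either a Fourier-analytic (characteristic function) argument exploiting strict positivity to control the characteristic function away from the origin, or a careful invocation of the PMV structure theory. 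Ensuring uniformity over $\vec\pi$ hinges on strict positivity and closedness of $\Pi$: these guarantee a uniform lower bound on the smallest eigenvalue of the per-step covariance (restricted to the relevant subspace) and a uniform upper bound on third moments, which together feed the local CLT with constants independent of $\vec\pi$. Coordinating the strict inequalities with the equalities in the lower-bound construction (the last paragraph above) is the other delicate point, but it is a ``soft'' CLT argument rather than a hard estimate, so I expect it to be routine once the local limit theorem is in hand.
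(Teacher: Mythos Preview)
Your plan for the empty and exponential cases matches the paper's proof almost exactly (hyperplane separation plus Hoeffding for the upper bound; an integer point in $\sol$ plus the $\epsilon^n$ lower bound for tightness, with the window $[n',Cn']$ used to land on the lattice).

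For the polynomial \emph{upper} bound your route diverges from the paper's. You aim to invoke a multivariate local CLT for the projected variable $\ba\cdot\vec X_{\vec\pi}$ directly. The paper deliberately avoids this: it reduces to the equality system, passes to the reduced row echelon form of $[\ba;\vec 1]$ to split coordinates into $I_0$ (determined) and $I_1$ (free), and then introduces an auxiliary pair $(Z_j,W_j)$ where $Z_j\in\{0,1\}$ records whether $Y_j\in I_0$ or $Y_j\in I_1$. Conditioning on $\vec Z$ decouples the two blocks, and the key estimate becomes a \emph{pointwise} anti-concentration bound for a PMV on $|I_0|=\rank(\ba)+1$ coordinates (their Lemma~2), proved by induction on the number of coordinates using only one-dimensional Berry--Esseen at each step. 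This is more elementary than a uniform multivariate local CLT and sidesteps exactly the obstacle you flag at the end.

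For the polynomial \emph{tightness} there is a real gap in your argument. You write ``perturb $\bar\pi$ slightly into the strict interior of the halfspaces $\bb\cdot\vec x<0$ (possible when $\sol\ne\emptyset$)''. But $\sol\ne\emptyset$ does \emph{not} guarantee that a point of $\rsol\cap\conv(\Pi)$ can be perturbed into the strict interior \emph{while remaining in} $\conv(\Pi)$; the only point $\bar\pi$ you are allowed to center $\vec\pi$ at must lie in $\conv(\Pi)$, and every such point may sit on the boundary $\bb_i\cdot\bar\pi=0$ of some strict constraint. Your fallback---condition on $\ba\cdot\vec X_{\vec\pi}=0$ and run a CLT for $\bb\cdot\vec X_{\vec\pi}$---would need a conditional local CLT with uniform constants, which is again the heavy tool you were hoping to avoid. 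The paper resolves this with a two-scale construction: it picks a strict solution $\vec x^{\#}\in\sol$ (not required to be in $\conv(\Pi)$) and a relaxed solution $\vec x^{*}\in\rsol\cap\conv(\Pi)$, and targets the integer vector $\vec y^l\approx \sqrt{l}\,\vec x^{\#}+l\,\vec x^{*}$. The $\sqrt{l}$-scale piece makes every $\bb$-inequality strict by a margin $\Theta(\sqrt{n})$, while the $l$-scale piece keeps the mean of $\vec X_{\vec\pi}$ within $O(\sqrt{n})$ of $\vec y^l$. Then the paper does not condition at all: it \emph{counts} $\Omega(n^{((q-1)(k-1)+q-o-1)/2})$ explicit integer tuples $(\vec x_1,\dots,\vec x_k)$ (one per block of the convex combination of $\vec x^*$) whose sum lies in $\sol$, and lower-bounds the probability of each tuple pointwise via Stirling (their Lemma~3), multiplying out to $\Omega(n^{-\rank(\ba)/2})$. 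That two-scale idea and the explicit counting are the missing ingredients in your proposal.
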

Lemma~\ref{lem:maintech} is quite general because the assumptions on $\Pi$ are mild and $\ba$ and $\bb$ are general enough to model a wide range of events in social choice as we will see later in the paper. It provides asymptotically tight upper bounds on the probability for the histogram of a randomly generated profile from $\vec\pi\in\Pi^n$ to satisfy all constraints in $\cons$. The bounds provide a trichotomy: if no vector satisfies all constraints in $\cons$, i.e.~$\sol = \emptyset$, then the upper bound is $0$; otherwise if $\sol \ne \emptyset$ and its relaxation $\rsol$ does not contain a vector in the convex hull of $\Pi$, then the upper bound is exponentially small; otherwise the upper bound is polynomially small in $n$, and the degree of polynomial is determined by the rank of $\ba$, specifically $-\frac{\rank(\ba)}{2}$. The tightness part of the lemma states that the upper bounds cannot be improved for all $n$.

At a high level the lemma is quite natural and follows the intuition of multivariate central limit theorem as follows. Roughly, $\vec X_{\vec \pi}$ is distributed  like a multinomial Gaussian whose expectation is $\vec \pi\cdot\vec 1 = \sum_{j=1}^n \pi_j$ (which is a $q$-dimensional vector). Then, the zero part of Lemma~\ref{lem:maintech} is trivial; the exponential part makes sense because the expectation $\vec \pi\cdot\vec 1$ is $\Theta(n)$ away from any vector in $\sol$; and the last part is expected to be $O(n^{-0.5})$ because the center $\vec \pi\cdot\vec 1$ satisfies the $\ba$ part of $\cons$. 

The surprising part of  the lemma is the degree of polynomial $-\frac{\rank(\ba)}{2}$ and its tightness. As discussed in the Introduction, all central limit theorems we are aware of are too coarse for proving the $O(n^{-\frac{\rank(\ba)}{2}})$ bound. To prove the polynomial upper bound, we introduce an alternative representation of $\vec X_{\vec \pi}$ to tackle the dependencies among its components, prove novel fine-grained concentration and anti-concentration bounds, focus on the reduced row echelon form of  $\ba$ plus an additional constraint on the total number of agents to characterize $\sol$, and then do a weighted counting of vectors that satisfy $\cons$. The full proof can be found in Appendix~\ref{sec:maintechproof}.


\section{Smoothed  Condorcet's Paradox and ANR (Im)possibility Theorem}
We first apply the main technical lemma (Lemma~\ref{lem:maintech}) to characterize the smoothed likelihood of Condorcet's paradox in the following dichotomy theorem, which holds for any fixed $m\ge 3$.
\begin{thm}[\bf Smoothed likelihood of Codorcet's paradox]\label{thm:smoothedCondorcet}
 Let $\mm= (\Theta,\ml(\ma),\Pi)$ be a strictly positive and closed single-agent preference model. 
 
\noindent{\bf Smoothed avoidance of Condorcet's paradox.} Suppose for all $\pi\in\conv(\Pi)$,  $\umg(\pi)$ does not contain a weak Condorcet cycle.  Then, for any $n\in\mathbb N$, we have:

$\hfill \inf_{\vec\pi\in \Pi^n}\expect_{P\sim \vec\pi} \ncc(P)= 1- \exp(-\Omega(n))\hfill$

\noindent{\bf Smoothed Condorcet's paradox.} Suppose there exists $\pi\in\conv(\Pi)$ such that  $\umg(\pi)$ contains a weak Condorcet cycle.  Then, there exist infinitely many $n\in\mathbb N$ such that:

$\hfill \inf_{\vec\pi\in \Pi^n}\expect_{P\sim \vec\pi} \ncc(P)= 1- \Omega(1)\hfill$
 
 \end{thm}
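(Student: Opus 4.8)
The plan is to derive Theorem~\ref{thm:smoothedCondorcet} as a direct corollary of the main technical lemma (Lemma~\ref{lem:maintech}) by expressing ``$P$ contains a Condorcet cycle'' as a finite union of constraint systems of the form $\cons(\vec x)$, one per directed cycle on the $m$ alternatives. Concretely, for each cyclic sequence $C=(a_1\to a_2\to\cdots\to a_k\to a_1)$ of distinct alternatives ($3\le k\le m$), the event ``$\umg(P)$ contains exactly the edges of $C$ in that orientation'' is captured by a system $\cons_C$ consisting of $k$ strict inequalities $\bb_C\cdot(\hist(P))^\top<(\vec 0)^\top$, each encoding $P[a_i\succ a_{i+1}]-P[a_{i+1}\succ a_i]<0$, exactly as in Example~\ref{ex:cc}; here $K=\rank(\ba_C)=0$. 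Since $\ncc(P)=0$ iff $\umg(P)$ contains a cycle, and there are only finitely many such $C$, a union bound gives $\expect_{P\sim\vec\pi}(1-\ncc(P))\le\sum_C\Pr(\hist(P)\in\sol_C)$, and conversely $\expect_{P\sim\vec\pi}(1-\ncc(P))\ge\max_C\Pr(\hist(P)\in\sol_C)$ for an appropriate witnessing $C$.

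For the \textbf{smoothed avoidance} part, I would argue that the hypothesis---no $\pi\in\conv(\Pi)$ has $\umg(\pi)$ containing a weak Condorcet cycle---forces $\rsol_C\cap\conv(\Pi)=\emptyset$ for every cycle $C$. The point is that $\rsol_C$ relaxes the strict inequalities to $\le 0$, i.e.~it allows the majority margins along $C$ to be nonpositive (weak edges), which is precisely the ``supergraph of $\umg$'' condition defining a weak Condorcet cycle; a distribution $\pi$ (viewed as a fractional profile) lies in $\rsol_C$ iff $\umg(\pi)$ together with the ties along $C$ forms a directed cycle, i.e.~iff $\umg(\pi)$ has a weak Condorcet cycle through those vertices. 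So under the hypothesis each $\rsol_C\cap\conv(\Pi)=\emptyset$, and Lemma~\ref{lem:maintech} (middle case, using $\sol_C\ne\emptyset$ whenever the cycle is combinatorially realizable, which it is for $m\ge 3$) gives $\Pr(\hist(P)\in\sol_C)=\exp(-\Omega(n))$ uniformly in $\vec\pi\in\Pi^n$. Summing the finitely many terms preserves $\exp(-\Omega(n))$, hence $\inf_{\vec\pi}\expect_{P\sim\vec\pi}\ncc(P)\ge 1-\exp(-\Omega(n))$; the matching upper bound $\le 1$ is trivial.

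For the \textbf{smoothed paradox} part, pick $\pi^\ast\in\conv(\Pi)$ whose $\umg(\pi^\ast)$ has a weak Condorcet cycle, and let $C$ be that cycle; then $\pi^\ast\in\rsol_C\cap\conv(\Pi)$, so this intersection is nonempty, and $\sol_C\ne\emptyset$. By the tightness part of Lemma~\ref{lem:maintech}, there are infinitely many $n$ and $\vec\pi\in\Pi^n$ with $\Pr(\hist(P)\in\sol_C)=\Omega(n^{-\rank(\ba_C)/2})=\Omega(1)$ since $\rank(\ba_C)=0$. For such $n$ and $\vec\pi$, $\expect_{P\sim\vec\pi}(1-\ncc(P))\ge\Pr(\hist(P)\in\sol_C)=\Omega(1)$, giving $\inf_{\vec\pi\in\Pi^n}\expect_{P\sim\vec\pi}\ncc(P)\le 1-\Omega(1)$ for those $n$. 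One subtlety to handle carefully: Lemma~\ref{lem:maintech}'s tightness statement uses $\vec\pi\in\Pi^n$ rather than $\conv(\Pi)^n$, and $\pi^\ast$ may be a strict convex combination of points of $\Pi$; but the lemma is stated precisely for $\Pi$ a closed set of strictly positive distributions and its constant-order lower bound only needs the relevant relaxed polytope to meet $\conv(\Pi)$, so no extra work is needed there.

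The main obstacle is the bookkeeping translating the graph-theoretic hypotheses (``$\umg(\pi)$ has no weak Condorcet cycle'' for all $\pi\in\conv(\Pi)$) into the polytope conditions ($\rsol_C\cap\conv(\Pi)=\emptyset$ for all $C$), including checking that a fractional profile's membership in $\rsol_C$ is equivalent to having the corresponding weak cycle, and verifying $\sol_C\ne\emptyset$ for every combinatorial cycle when $m\ge 3$ (one can exhibit an explicit integer profile, e.g.~the rotation of a fixed ranking, realizing any single directed cycle as the strict part of its $\umg$). Everything else is a finite union bound plus a direct invocation of Lemma~\ref{lem:maintech} with $\ba_C$ empty (rank $0$), so the exponential-vs-constant dichotomy in the conclusion mirrors exactly the middle-vs-third cases of the lemma.
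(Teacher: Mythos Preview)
Your proposal is correct and follows essentially the paper's approach: both express the Condorcet-cycle event as a finite union of linear constraint systems and invoke Lemma~\ref{lem:maintech}, translating the weak-cycle hypothesis into the polytope condition $\rsol\cap\conv(\Pi)=\emptyset$ (respectively $\ne\emptyset$). The only cosmetic difference is that the paper's proof of Theorem~\ref{thm:smoothedCondorcet} indexes its systems by full unweighted majority graphs $G$ containing a Condorcet cycle (Definition~\ref{dfn:cg}), while you index directly by the cycles $C$ themselves with $\ba_C$ empty---which is exactly the decomposition the paper itself uses later in Proposition~\ref{prop:wcwcycle}.
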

The smoothed avoidance  part of the theorem  is positive news: if there is no weak Condorcet cycle in the UMG of any distribution in the convex hull of $\Pi$, then no matter how the adversary sets agents' ground truth preferences, the probability for Condorcet's paradox to hold, which is $1-\inf_{\vec\pi\in \Pi^n}\expect_{P\sim \vec\pi} \ncc(P)=\sup_{\vec\pi\in \Pi^n}\Pr_{P\sim \vec\pi}(\ncc(P)=0)$, vanishes at an exponential rate  as $n\ra\infty$. Consequently, in such cases Arrows' impossibility theorem can be avoided because the pairwise majority rule satisfies all desired properties  mentioned in the theorem. The second part (smoothed paradox) states that otherwise the adversary can make Condorcet's paradox occur with constant probability. The proof is done by modeling  $\ncc(P)=0$ as systems of linear constraints as in Definition~\ref{dfn:cab}, each of which represents a target UMG with a weak Codorcet cycle as in Example~\ref{ex:cc},  then applying Lemma~\ref{lem:maintech}. The full proof is in Appendix~\ref{sec:smoothedCondorcetproof}.


We now turn to the smoothed ANR impossibility. 
We will reveal a relationship between all $n$-profiles and all permutation groups over $\ma$ after recalling some basic notions in group theory. The {\em symmetric group} over $\ma=[m]$, denoted by $\sgroup$, is the set of all permutations over $\ma$.  

\begin{dfn}
For any profile $P$, let $\perm(P)$ denote the set of all permutations $\sigma$ over $\ma$ that maps $\hist(P)$ to itself. Formally, $\perm(P)=\{\sigma \in \sgroup: \hist(P)=\sigma(\hist(P))\}$.
\end{dfn}

See Appendix~\ref{sec:prelimgroup} for additional notation and examples about group theory.\footnote{Some group theoretic notation and ideas in this paper are similar to those in a 2015 working paper by~\citet{Dogan2015:Anonymous} whose main results are different. See Appendix~\ref{sec:prelimgroup} for more details and discussions.} It is not hard to see that $\perm(P)$ is a permutation group.  
We now define a special type of permutation groups that ``cover'' all alternatives in $\ma$, which are closely related to the impossibility theorem.

\begin{dfn}For any permutation group $U\subseteq \sgroup$ and any alternative $a\in\ma$, we say that $U$ {\em covers} $a$ if there exists $\sigma\in U$ such that $a\ne \sigma(a)$. We say that $U$ {\em covers} $\ma$ if it covers all alternatives in $\ma$. For any $m$, let $\calU_m$ denote the set of all permutation groups that cover $\ma$.
\end{dfn} 

For example, when $m=3$, $\calU_3=\{\id,(1,2,3), (1,3,2)\}$, where $\id$ is the identity permutation and $(1,2,3)$ is the  circular permutation $1\ra2\ra3\ra1$. 
See Example~\ref{ex:permutationgroups} in Appendix~\ref{sec:prelimgroup} for the list of all permutation groups for $m=3$. In general $|\calU_m|>1$.
\begin{thm}[\bf Smoothed ANR (im)possibility]\label{thm:smoothedANR} Let $\mm= (\Theta,\ml(\ma),\Pi)$ be a strictly positive and closed single-agent preference model.  Let $\calU_m^\Pi= \{U\in \calU_m: \exists \pi\in \conv(\Pi), \forall \sigma\in U, \sigma(\pi)=\pi\}$, and when $\calU_m^\Pi\ne \emptyset$, let $l_{\min}= \min_{U\in  \calU_m^\Pi}|U|$ and $l_\Pi= \frac{l_{\min}-1}{l_{\min}}m!$.

\noindent{\bf Smoothed possibility.} There exist an anonymous voting rule $r_\text{ano}$ and a neutral voting rule $r_\text{neu}$ such that for any $r\in \{r_\text{ano}, r_\text{neu}\}$, any $n$, and any $\vec\pi\in \Pi^n$, we have:

$\hfill\Pr_{P\sim\vec\pi}(\rel_\text{ano}(r, P)+\rel_\text{neu}(r, P)<2)=\left\{\begin{array}{ll}O(n^{-\frac{l_\Pi}{2}})&\text{if } \calU_m^\Pi\ne\emptyset \\  \exp(-\Omega(n))&\text{otherwise}\end{array}\right.\hfill$

\noindent{\bf Smoothed impossibility.} For any voting rule $r$, there exist infinitely many $n\in\mathbb N$ such that:

$\hfill\sup_{\vec\pi\in\Pi^n}\Pr_{P\sim\vec\pi}(\rel_\text{ano}(r, P)+\rel_\text{neu}(r, P)<2)=\left\{\begin{array}{ll}\Omega(n^{-\frac{l_\Pi}{2}})&\text{if } \calU_m^\Pi\ne\emptyset \\  \exp(-O(n))&\text{otherwise}\end{array}\right.\hfill$
\end{thm}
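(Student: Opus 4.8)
\textbf{Proof plan for Theorem~\ref{thm:smoothedANR}.}

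The plan is to reduce both parts of the theorem to applications of Lemma~\ref{lem:maintech}, by expressing the failure event $\{\rel_\text{ano}(r,P)+\rel_\text{neu}(r,P)<2\}$ in terms of the group-theoretic structure of $\perm(P)$. The key observation to establish first is that the ANR impossibility at a profile $P$ is governed by whether $\perm(P)$ covers $\ma$: if $\perm(P)\in\calU_m$ then no resolute rule can satisfy both anonymity and neutrality at $P$ (the classical argument generalizes --- any $\sigma\in\perm(P)$ fixing $\hist(P)$ but moving the winner $a$ yields a contradiction between anonymity, which forces $r(\sigma(P))=r(P)=a$, and neutrality, which forces $r(\sigma(P))=\sigma(a)\ne a$; covering ensures some such $\sigma$ exists for whichever $a$ is chosen). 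Conversely, if $\perm(P)$ does \emph{not} cover $\ma$, one can pick a winner among the uncovered alternatives consistently. So for the anonymous rule $r_\text{ano}$, the failure event is exactly $\{\perm(P)\in\calU_m\}$ (neutrality fails iff $P$ is a ``bad'' profile), and symmetrically for $r_\text{neu}$ anonymity fails only on such profiles; the job is to define these two rules explicitly and bound $\Pr_{P\sim\vec\pi}(\perm(P)\in\calU_m)$.

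Next I would decompose the event $\{\perm(P)\in\calU_m\}$ over permutation groups: $\Pr(\perm(P)\in\calU_m)\le\sum_{U\in\calU_m}\Pr(U\subseteq\perm(P))$. For a fixed group $U$, the event $U\subseteq\perm(P)$ --- i.e.~$\sigma(\hist(P))=\hist(P)$ for every $\sigma\in U$ --- is precisely a system of homogeneous linear \emph{equations} $\ba_U\cdot\hist(P)^\top=\vec 0^\top$ of the type in Definition~\ref{dfn:cab}, with no strict inequalities, as illustrated for the transposition $(1,2)$ in Example~\ref{ex:cc}. Here the crucial computation is the rank of $\ba_U$: the orbits of $U$ acting on $\ml(\ma)$ have sizes dividing $|U|$, and $\rank(\ba_U)=m!-(\text{number of orbits})$. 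By a counting/averaging argument (Burnside-type), the number of orbits is at most $m!/|U|$ when $U$ acts ``efficiently,'' but the right bound for the theorem is $\rank(\ba_U)\ge\frac{|U|-1}{|U|}m!=l_\Pi$ when $|U|=l_{\min}$ --- this follows because each nontrivial orbit has size at least $|U|/\gcd$ considerations, and the extremal case is when $U$ acts with all orbits of full size $|U|$, giving $m!/|U|$ orbits and rank $m!-m!/|U|$. Then Lemma~\ref{lem:maintech} applied to this pure-equation system gives: if some $\pi\in\conv(\Pi)$ satisfies $\sigma(\pi)=\pi$ for all $\sigma\in U$ (i.e.~$U\in\calU_m^\Pi$), the probability is $O(n^{-\rank(\ba_U)/2})$, dominated by the smallest rank, which corresponds to the largest $|U|$ --- wait, smallest rank means we want the \emph{minimum} over $U\in\calU_m^\Pi$ of $\rank(\ba_U)$, which is achieved at $|U|=l_{\min}$ giving $n^{-l_\Pi/2}$; and if no such $\pi$ exists for any $U\in\calU_m$ (i.e.~$\calU_m^\Pi=\emptyset$), every term is $\exp(-\Omega(n))$, and there are finitely many $U$, so the sum is $\exp(-\Omega(n))$.

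For the smoothed impossibility direction, the argument must show no resolute rule beats $n^{-l_\Pi/2}$. Given an arbitrary $r$, I would use a symmetrization/pigeonhole argument: fix a group $U^*\in\calU_m^\Pi$ with $|U^*|=l_{\min}$ and a witnessing $\pi^*\in\conv(\Pi)$ invariant under $U^*$; write $\pi^*$ as a convex combination of distributions in $\Pi$ and take $\vec\pi\in\Pi^n$ drawing (roughly) $n$ samples whose mixture approximates $\pi^*$, as permitted by the ``$n'\le n\le Cn'$'' slack in Lemma~\ref{lem:maintech}. On the event $\{U^*\subseteq\perm(P)\}$, whichever alternative $a=r(P)$ is chosen, $U^*$ covers $a$, so there is $\sigma\in U^*$ with $\sigma(a)\ne a$; then at least one of $\rel_\text{ano}(r,\sigma(P)\text{-related profiles})$ fails --- more carefully, among the profiles in the $\perm(P)$-orbit one can always find one where $r$ violates anonymity or neutrality, so $\rel_\text{ano}+\rel_\text{neu}<2$ holds on a constant fraction of this event. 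The tightness clause of Lemma~\ref{lem:maintech} then gives $\Pr(U^*\subseteq\perm(P))=\Omega(n^{-\rank(\ba_{U^*})/2})=\Omega(n^{-l_\Pi/2})$ for infinitely many $n$, and in the case $\calU_m^\Pi=\emptyset$ one instead picks any ground truth where some $U$ is ``approximately'' invariant to force the $\exp(-O(n))$ lower bound, or argues the worst case still has the impossibility profile $\{a\succ b,b\succ a\}$-type configuration with nonzero probability.

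\textbf{Main obstacle.} The delicate point is the second direction and the rank computation. For the impossibility bound I must argue that \emph{every} resolute rule fails on a constant fraction of the event $\{U^*\subseteq\perm(P)\}$ --- this is not immediate because $r$ might fail only on a vanishing subset of that event; the resolution is that $\{U^*\subseteq\perm(P)\}$ is itself $\sigma$-invariant for $\sigma\in U^*$ and the $\rel$-failure is ``generic'' on it, so one needs a careful bookkeeping of how $r$'s choices interact with the $U^*$-action on the fibers $\hist^{-1}(\text{fixed histogram})$. Establishing $\rank(\ba_U)=m!-|\{\text{orbits of }U\}|$ and pinning down that the minimizing configuration yields exactly $l_\Pi=\frac{l_{\min}-1}{l_{\min}}m!$ (rather than something larger) requires showing that the extremal groups act freely-enough on $\ml(\ma)$ --- which uses that $\ml(\ma)$ with $|\ml(\ma)|=m!$ carries a free action of any $U\le\sgroup$ via left-translation-like embedding, so orbit count is exactly $m!/|U|$. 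I expect verifying this free-action claim and connecting $l_{\min}$ to the actual optimum (including why $r_\text{ano},r_\text{neu}$ achieve it and nothing does better) to be the technical heart.
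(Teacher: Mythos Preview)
Your plan is essentially the paper's own proof: identify $\calT_{m,n}=\{P:\perm(P)\in\calU_m\}$ as the obstruction set, build $r_\text{ano}$ and $r_\text{neu}$ to agree (satisfying both axioms) outside $\calT_{m,n}$, and bound $\Pr(P\in\calT_{m,n})$ via the union $\bigcup_{U\in\calU_m}\{U\subseteq\perm(P)\}$, each piece a pure-equation system fed to Lemma~\ref{lem:maintech}.

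Two of the obstacles you flag are lighter than you fear. First, your ``constant fraction'' worry in the impossibility direction is unnecessary: the paper's Lemma~\ref{lem:sumdegree} shows that for \emph{every} $P\in\calT_{m,n}$ one already has $\rel_\text{ano}(r,P)+\rel_\text{neu}(r,P)\le 1$. The argument is a dichotomy on the histogram fiber $\hist^{-1}(H)$: either $r$ is non-constant there (anonymity fails at every $P$ in the fiber by definition of $\rel_\text{ano}$), or $r\equiv a$ there, in which case the covering $\sigma\in\perm(P)$ with $\sigma(a)\ne a$ gives $\sigma(P)$ in the same fiber with $r(\sigma(P))=a\ne\sigma(a)$, so neutrality fails at $P$. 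No averaging or ``generic'' reasoning is needed; the failure event equals $\calT_{m,n}$ up to inclusion, and the tightness clause of Lemma~\ref{lem:maintech} applied to a single $U^*\in\calU_m^\Pi$ with $|U^*|=l_{\min}$ finishes it. Second, the free-action claim is immediate: if $\sigma(R)=R$ for a linear order $R$ then $\sigma$ fixes every position of $R$, hence $\sigma=\id$. Thus every $U\le\sgroup$ acts freely on $\ml(\ma)$, giving exactly $m!/|U|$ orbits and $\rank(\ba_U)=m!-m!/|U|=\frac{|U|-1}{|U|}m!$. One bookkeeping point you should make explicit for the upper bound: the union is over all $U\in\calU_m$, not just $\calU_m^\Pi$; for $U\notin\calU_m^\Pi$ the relaxed solution set misses $\conv(\Pi)$ and Lemma~\ref{lem:maintech} gives the exponential bound, so those terms are harmless regardless of $|U|$.
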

Again, Theorem~\ref{thm:smoothedANR} holds for fixed $m\ge 3$. We note that for any profile $P$, $\rel_\text{ano}(r, P)+\rel_\text{neu}(r, P)<2$ if and only if at least one of anonymity or neutrality is violated at $P$. In other words, if $\rel_\text{ano}(r, P)+\rel_\text{neu}(r, P)=2$ then both anonymity and neutrality are satisfied at $P$. Therefore, the first part of Theorem~\ref{thm:smoothedANR} is called ``smoothed possibility'' because it states that 
no matter how the adversary sets agents' ground truth preferences, the probability for  $r_\text{ano}$ (respectively, $r_\text{neu}$)  to satisfy both anonymity and neutrality converges to $1$. 
The second part (smoothed impossibility) shows that the rate of convergence in the first part is asymptotically tight for all $n$. 
This is a mild impossibility theorem because violations of anonymity or neutrality may still vanish (at a slower rate) as $n\ra\infty$. 

The proof proceeds in the following three steps. Step 1.~For any $m$ and $n$, we  define a set of profiles, denoted by $\calT_{m,n}$, that represent the source of impossibility. In fact, $\calT_{m,n}$ is the set of all $n$-profiles $P$ such that $\perm(P)$ covers $\ma$, i.e.~$\perm(P)\in\calU_m$. 
Step 2.~To prove the smoothed possibility part, we define $r_\text{ano}$ and $r_\text{neu}$ that satisfy both anonymity and neutrality for all profiles that are not in $\calT_{m,n}$. Then, we apply Lemma~\ref{lem:maintech} to upper-bound the probability of $\calT_{m,n}$.  
Step 3.~The smoothed impossibility part is proved by applying the tightness part of Lemma~\ref{lem:maintech} to the probability of $\calT_{m,n}$.  The full proof can be found in Appendix~\ref{sec:proofsmoothedANR}.

In general $l_{\min}$ in Theorem~\ref{thm:smoothedANR} can be hard to characterize. The following lemma provides a lower bound on $l_{\min}$ by characterizing $\min_{U\in \calU_m}|U|$, whose group-theoretic proof is in Appendix~\ref{appendix:lemmatauproof}.
\begin{lem}\label{lem:taubounds}
For any $m\ge 2$, let  $l^* = \min_{U\in \calU_m}|U|$. We have  $l^* = 2$ if $m$ is even;  $l^* = 3$ if $m$ is odd and $3\mid m$; $l^* = 5$ if $m$ is odd, $3\nmid m$, and $5\mid m$; and $l^*=6$ for other $m$. 
\end{lem}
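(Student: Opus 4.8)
\textbf{Proof proposal for Lemma~\ref{lem:taubounds}.}

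The plan is to reduce the quantity $l^* = \min_{U \in \calU_m} |U|$ to a question about the smallest possible order of a nontrivial element in a permutation group that moves every point, and then to a purely number-theoretic optimization over cycle-type decompositions. The key observation is that for a \emph{cyclic} group $U = \langle \sigma \rangle$, $U$ covers $\ma$ if and only if $\sigma$ is a fixed-point-free (derangement-type) permutation of $[m]$, and $|U| = \mathrm{ord}(\sigma)$. So the first step is to argue that among all covering groups, the minimum order is attained by a cyclic group generated by a single fixed-point-free permutation. This should follow because any $U \in \calU_m$ contains, for each $a \in \ma$, some $\sigma_a$ with $\sigma_a(a) \neq a$; but a slicker route is: if $U$ covers $\ma$ then $U$ has no global fixed point, and one can pass to a cyclic subgroup $\langle \tau \rangle \subseteq U$ with $\tau$ fixed-point-free on $[m]$ (this requires a small argument — see the obstacle paragraph below). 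Since $|\langle \tau\rangle| \le |U|$, it suffices to minimize $\mathrm{ord}(\tau)$ over fixed-point-free $\tau \in \sgroup$.

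The second step is the number-theoretic core. A fixed-point-free permutation of $[m]$ is a product of disjoint cycles of lengths $c_1, \dots, c_k \ge 2$ with $\sum_i c_i = m$, and its order is $\mathrm{lcm}(c_1,\dots,c_k)$. So I want
$$ l^* = \min\Bigl\{ \mathrm{lcm}(c_1,\dots,c_k) : k \ge 1,\ c_i \ge 2,\ \textstyle\sum_i c_i = m \Bigr\}. $$
Now the casework is routine. If $m$ is even, take all $c_i = 2$, giving $\mathrm{lcm} = 2$; and $l^* \ge 2$ trivially since $U$ is nontrivial, so $l^* = 2$. If $m$ is odd, every $c_i$ cannot all be $2$ (parity), so some $c_i \ge 3$ and hence $l^* \ge 3$; if $3 \mid m$ we may take all $c_i = 3$ and get $l^* = 3$. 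If $m$ is odd and $3 \nmid m$: we cannot use only $2$'s (parity) or only $3$'s ($3\nmid m$), and mixing a $2$ and a $3$ already forces $\mathrm{lcm} \ge 6 > 5$; using a single $5$ with the rest $2$'s requires $m - 5$ even, i.e.\ $m$ odd, which holds — so if $5 \mid m$ we would want all $c_i=5$ giving $\mathrm{lcm}=5$, but actually any decomposition avoiding lcm $<5$ must avoid having both an even part and an odd part $\ge 3$ simultaneously with lcm $<6$; the cleanest finish is: the only way to get $\mathrm{lcm} \in \{3,4,5\}$ with all parts $\ge 2$ is parts drawn from a single prime-power-compatible set, and checking $\mathrm{lcm}=4$ needs a part equal to $4$ plus possibly $2$'s, total even — impossible for odd $m$; $\mathrm{lcm}=3$ needs all parts $=3$, sum divisible by $3$ — impossible; so $\mathrm{lcm}=5$ (all parts $=5$, possible iff $5\mid m$) is the smallest, giving $l^* = 5$ when $5 \mid m$. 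For the remaining case ($m$ odd, $3\nmid m$, $5 \nmid m$), none of lcm $\in\{3,4,5\}$ is achievable, so $l^* \ge 6$, and $l^* \le 6$ by writing $m = 3 + 2t$ for some $t \ge 1$ (possible since $m \ge 7$ is odd) — wait, that gives $\mathrm{lcm}(3,2,\dots,2) = 6$. So $l^* = 6$.

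The main obstacle is the first step: justifying that the minimum over all covering groups $U$ (not just cyclic ones) equals the minimum of $\mathrm{ord}(\tau)$ over fixed-point-free $\tau$. The subtlety is that a covering group need not contain a single fixed-point-free element — e.g., the Klein four-group acting as $\langle (12)(34), (13)(24)\rangle$ on $[4]$ covers $[4]$ and every nonidentity element is fixed-point-free, but one should check that a covering group of \emph{odd} order could fail to have a fixed-point-free element of small order. The resolution: given $U \in \calU_m$ with $|U| = l$, consider the permutation action of $U$ on $[m]$; an element $\tau \in U$ is fixed-point-free iff it lies in no point-stabilizer $U_a$. Since $U$ covers $\ma$, no $U_a$ equals $U$, so each $U_a$ is a proper subgroup; the union $\bigcup_{a} U_a$ cannot exhaust $U$ if $U$ is cyclic (a cyclic group is not a union of proper subgroups), giving a fixed-point-free generator. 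For general $U$ one instead argues directly on $|U|$: if $|U| = l$ then $U$ contains an element of order $p$ for each prime $p \mid l$ (Cauchy), and a short argument shows $U$ must contain a fixed-point-free element whose order divides $l$ — alternatively, and most simply, one shows that if $U$ covers $[m]$ then $\sum_{a} |U_a| \le (m-1)|U| $ forces (by an averaging / Burnside-type count on fixed points) the existence of a fixed-point-free element, and then bounds its order by $l$. I expect the writeup to invoke exactly such an averaging argument (the number of fixed-point-free elements is positive because the average number of fixed points of a nonidentity element in a transitive-on-some-orbit covering group is small), after which the number-theoretic casework above finishes the proof. I would also double-check the boundary case $m = 2$ separately ($l^* = 2$, consistent with the even case).
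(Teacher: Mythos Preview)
Your reduction to the number-theoretic problem
\[
l^* \;=\; \min\Bigl\{\mathrm{lcm}(c_1,\dots,c_k)\;:\;k\ge 1,\ c_i\ge 2,\ \textstyle\sum_i c_i = m\Bigr\}
\]
is correct, and your casework on this optimization is essentially right. The gap is exactly where you flagged it: your proposed justification that every covering group $U$ contains a fixed-point-free element is \emph{false}. On $[6]$ the Klein four group $\{\id,(12)(34),(34)(56),(12)(56)\}$ covers $\ma$ but every nonidentity element fixes two points, so neither your Cauchy sketch nor the Burnside averaging you describe can produce such a $\tau$ in general. The fix is much simpler than any of your suggestions: use orbit--stabilizer. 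If $U\in\calU_m$, then every $U$-orbit has size at least $2$ (covering means no global fixed point) and dividing $|U|$; so the orbit sizes $d_1,\dots,d_t$ give a decomposition of $m$ into parts $\ge 2$ with each $d_i\mid |U|$, whence $\mathrm{lcm}(d_i)\mid |U|$ and therefore $|U|\ge \mathrm{lcm}(d_i)\ge l^*$. With this one-line repair your argument is complete.

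The paper takes a different route for the lower bounds: it enumerates the groups of order at most $5$ (namely $Z_2,Z_3,Z_4,Z_2\oplus Z_2,Z_5$) and rules each out directly. Cyclic groups are handled as you do (the generator is fixed-point-free, forcing divisibility of $m$), and the non-cyclic case $Z_2\oplus Z_2$ is dispatched by a separate combinatorial claim that traces how the fixed-point set of one involution must be permuted by the other. Your approach, once patched with the orbit-size observation, is more uniform and in fact shorter: for $Z_2\oplus Z_2$ it immediately gives that all orbit sizes lie in $\{2,4\}$, so $m$ is even --- no ad hoc argument needed. The paper's approach buys concreteness (one sees the explicit group table), while yours buys a single clean inequality $|U|\ge \mathrm{lcm}(\text{orbit sizes})$ that handles all cases at once.
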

A notable special case of Theorem~\ref{thm:smoothedANR} is $\pi_\text{uni}\in\conv(\Pi)$, where $\pi_\text{uni}$ is the uniform distribution over $\ml(\ma)$. We note that for any permutation $\sigma$, $\pi_\text{uni} = \sigma( \pi_\text{uni})$, which means that  $\calU_m^\Pi = \calU_m$. Therefore,  only the polynomial bound in Theorem~\ref{thm:smoothedANR}  remains, with $l_\Pi=\frac{l^*-1}{l^*}m!$. In particular, $\pi_\text{uni}\in\conv(\Pi)$ for all neutral single-agent preference models  under IC, which corresponds to $\Pi=\{\pi_\text{uni}\}$.
See Corollary~\ref{coro:smoothedANR} in Appendix~\ref{appendix:ANRcoro} for the formal statement.

\section{Optimal Tie-Breaking for Anonymity $+$ Neutrality}
\label{sec:opttiebreaking}
While $r_\text{ano}$ and $r_\text{neu}$ in Theorem~\ref{thm:smoothedANR} are asymptotically optimal w.r.t.~anonymity + neutrality, they may be hard to compute. 
The following proposition shows that the commonly-used \lex{} and \fa{} mechanisms are far from being optimal for positional scoring rules.

%
%

\begin{prop}\label{prop:nonoptlexfa} Let $r$  be a voting rule obtained from a positional scoring correspondence by applying  \lex{} or \fa{}. Let $\mm= (\Theta,\ml(\ma),\Pi)$ be a strictly positive and closed single-agent preference model with $\pi_\text{uni}\in \conv(\Pi)$. There exist infinitely many $n\in\mathbb N$ such that:

$\hfill\sup_{\vec\pi\in\Pi^n}\Pr_{P\sim\vec\pi}\left(\rel_\text{ano}(r, P)+\rel_\text{neu}(r, P)<2\right)=\Omega(n^{- 0.5})\hfill$
\end{prop}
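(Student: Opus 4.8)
\textbf{Proof proposal for Proposition~\ref{prop:nonoptlexfa}.}

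The plan is to exhibit, for infinitely many $n$, a single ground-truth vector $\vec\pi\in\Pi^n$ under which the event $\{\rel_\text{ano}(r,P)+\rel_\text{neu}(r,P)<2\}$ has probability $\Omega(n^{-1/2})$. The natural choice is to take $\vec\pi=(\pi_\text{uni},\ldots,\pi_\text{uni})$, which is legitimate since $\pi_\text{uni}\in\conv(\Pi)$ and, by the argument used for Lemma~\ref{lem:maintech}, the upper/lower bounds for i.i.d.\ draws from a point in $\conv(\Pi)$ transfer to draws from $\Pi^n$ up to constants (or one can work directly with the convex-hull formulation as in Theorem~\ref{thm:smoothedANR}). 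Under i.i.d.\ uniform votes, I want to produce a positive-probability family of profiles on which the positional scoring correspondence $c_{\vec s}$ has a two-way tie between some alternatives $a$ and $b$, and then show that for such profiles \emph{either} $\rel_\text{ano}$ \emph{or} $\rel_\text{neu}$ is violated by $r$.

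The core combinatorial observation is the one already used in the worst-case ANR argument in the Introduction: consider the permutation $\sigma$ that transposes $a$ and $b$. If a profile $P$ is tied between $a$ and $b$ under $c_{\vec s}$, is ``$\sigma$-symmetric'' in the sense that $\hist(\sigma(P))=\hist(P)$, and $\sigma$ fixes every other alternative's score (automatic here, since transposing $a,b$ only swaps their score contributions and leaves all others' scores unchanged), then the co-winner set of $c_{\vec s}$ on $P$ is $\sigma$-invariant and contains $\{a,b\}$. Now \lex-$R$ picks whichever of $a,b$ is higher in $R$, say $a$; but $\sigma(P)=$ (a profile with the same histogram as $P$), so anonymity forces $r(\sigma(P))=r(P)=a$, while neutrality forces $r(\sigma(P))=\sigma(r(P))=b$ — contradiction, so at $P$ at least one of the two axioms is violated, i.e.\ $\rel_\text{ano}(r,P)+\rel_\text{neu}(r,P)<2$. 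The same works for \fa-$j$: it outputs a fixed function of agent $j$'s ranking, which cannot simultaneously be anonymous and neutral on a $\sigma$-symmetric tied profile. So it suffices to lower-bound the probability that an i.i.d.-uniform $n$-profile is $\sigma$-symmetric \emph{and} tied between $a$ and $b$ under $c_{\vec s}$.

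To get the $\Omega(n^{-1/2})$ rate I encode ``$\hist(P)$ is $\sigma$-invariant'' as the system $\ba\cdot\hist(P)^\top=\vec 0^\top$ with $\ba$ the matrix whose rows are $e_R-e_{\sigma(R)}$ over the orbits of $\sigma$ on $\ml(\ma)$ (as in the second half of Example~\ref{ex:cc}); note $\sigma$-invariance of the histogram already implies $a,b$ receive equal $\vec s$-scores, so the tie is automatic and no strict inequalities are needed ($\bb$ empty). The rank of $\ba$ is exactly the number of size-$2$ orbits of $\sigma$ on $\ml(\ma)$, which for a transposition is $m!/2\ge 1$; and $\pi_\text{uni}$ lies in the solution set of the relaxation, so $\rsol\cap\conv(\Pi)\ne\emptyset$ and $\sol\ne\emptyset$. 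Lemma~\ref{lem:maintech} (tightness part) then gives $\Pr(\hist(P)\in\sol)=\Omega(n^{-\rank(\ba)/2})$ for infinitely many $n$ — but that exponent is $m!/4$, far smaller than $n^{-1/2}$, which is the \emph{wrong direction}: I need a \emph{large} lower bound, not a small one. So the encoding must be coarsened: instead of forcing the full histogram to be $\sigma$-symmetric, I only force the single scalar linear constraint ``$\score_{\vec s}(P,a)-\score_{\vec s}(P,b)=0$'' (one equation, $\rank(\ba)=1$), and then separately argue that \emph{conditioned} on this tie occurring, the probability that the tie actually causes an axiom violation for \lex/\fa\ is $\Omega(1)$. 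Concretely, conditioned on $\score_{\vec s}(P,a)=\score_{\vec s}(P,b)$, with constant probability $a$ and $b$ are the \emph{unique} maximizers (a separate constant-probability event, handled by a standard CLT/anticoncentration estimate since all score gaps are $\Theta(\sqrt n)$ on the relevant scale), and on that event \lex/\fa\ must break the $\{a,b\}$ tie, which as argued above violates anonymity or neutrality. Applying Lemma~\ref{lem:maintech} to the single equation $\score_{\vec s}(P,a)=\score_{\vec s}(P,b)$ (with $\rank(\ba)=1$, $\sol\ne\emptyset$, and $\pi_\text{uni}$ in the relaxed solution set) gives $\Pr=\Omega(n^{-1/2})$ for infinitely many $n$, and multiplying by the constant conditional probability yields the claim.

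\textbf{Main obstacle.} The delicate point is the second half of the last step: showing that the \emph{conditional} probability (given the $a$--$b$ score tie) of ``$\{a,b\}$ are the unique co-winners'' is bounded below by a constant, uniformly in the subsequence of $n$'s produced by Lemma~\ref{lem:maintech}. This needs a joint lower bound on the probability that the score vector lies in the region $\{\score(a)=\score(b)>\score(c)\ \forall c\notin\{a,b\}\}$; the cleanest route is to note this region is a full-dimensional cone intersected with a hyperplane through the mean direction $n\cdot\wmg(\pi_\text{uni})$-analogue (which has all scores equal), so by a local-CLT / Gaussian-approximation argument the conditional mass is $\Theta(1)$ — but making this rigorous without re-proving a local CLT means I should instead route it through Lemma~\ref{lem:maintech} itself one more time, applied to the refined system that keeps the one equation $\score(a)=\score(b)$ and adds the strict inequalities $\score(a)>\score(c)$; since $\rank(\ba)$ is still $1$ and the relaxed solution set still meets $\conv(\Pi)$ (the uniform mean satisfies all the relaxed inequalities with equality, which is enough), the tightness part again yields $\Omega(n^{-1/2})$ directly, absorbing the conditional-probability argument into the lemma and avoiding any hand-rolled anticoncentration estimate.
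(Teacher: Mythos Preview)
Your final approach---apply the tightness half of Lemma~\ref{lem:maintech} to the system with the single equation $\score_{\vec s}(P,a)=\score_{\vec s}(P,b)$ together with the strict inequalities $\score_{\vec s}(P,a)>\score_{\vec s}(P,c)$ for $c\notin\{a,b\}$---is exactly the paper's route: it says one encodes ``$|\cor(P)|=2$'' and invokes Lemma~\ref{lem:maintech} to get $\Omega(n^{-1/2})$. So the probabilistic part is right.

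The gap is in the deterministic step, where you argue that a profile $P$ with $\cor(P)=\{a,b\}$ must violate anonymity or neutrality under $r$. Your argument for \lex{} reads: ``$\sigma(P)$ has the same histogram as $P$, so anonymity forces $r(\sigma(P))=r(P)=a$, while neutrality forces $r(\sigma(P))=b$.'' The first clause requires $\hist(\sigma(P))=\hist(P)$, i.e.\ full $\sigma$-symmetry of the histogram---precisely the condition you deliberately \emph{dropped} when you coarsened the encoding. In your final system you only have the score tie, and for a generic such $P$ the histogram is \emph{not} $\sigma$-invariant, so this step fails. The clean fix (which the paper uses) bypasses anonymity entirely for \lex{}: since $\cor$ is neutral, $\cor(\sigma(P))=\sigma(\{a,b\})=\{a,b\}$, so \lex-$R$ again selects $a$ from $\sigma(P)$; but $\sigma(r(P))=b$, so $\rel_\text{neu}(r,P)=0$ directly---no histogram symmetry needed.

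For \fa{} your sentence ``it outputs a fixed function of agent $j$'s ranking, which cannot simultaneously be anonymous and neutral on a $\sigma$-symmetric tied profile'' is not a proof and again leans on the abandoned symmetry. The paper's argument is different: if agent $j$ has $a\succ b$ then $r(P)=a$; since $\cor(P)=\{a,b\}$, some agent $j'$ has $b\succ a$ (this needs one extra word of care, but is certainly true on an event of probability $1-\exp(-\Omega(n))$ under strictly positive $\Pi$, so it does not affect the $\Omega(n^{-1/2})$ rate); swapping the $j$-th and $j'$-th votes gives $P'$ with $\hist(P')=\hist(P)$ but $r(P')=b$, so $\rel_\text{ano}(r,P)=0$.

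A minor point: you cannot simply set $\vec\pi=(\pi_\text{uni},\ldots,\pi_\text{uni})$, since the hypothesis is only $\pi_\text{uni}\in\conv(\Pi)$, not $\pi_\text{uni}\in\Pi$. But you do not need to: the tightness part of Lemma~\ref{lem:maintech} already manufactures the required $\vec\pi\in\Pi^n$ once you know $\pi_\text{uni}$ lies in the relaxed solution set, so just invoke the lemma directly and drop the opening sentence.
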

The proof is done by modeling ties under positional scoring correspondences as systems of linear constraints, then applying the tightness of the polynomial bound in Lemma~\ref{lem:maintech}. The full proof can be found in Appendix~\ref{sec:nonoptlexfaproof}. 
We now introduce a new class of easy-to-compute tie-breaking mechanisms that achieve the optimal upper bound $O(n^{- \frac{m!}{4}})$ in Theorem~\ref{thm:smoothedANR} when $m$ is even.


\begin{dfn}[\bf Most popular singleton ranking]  Given a profile $P$, we define its {\em most popular singleton ranking (\mpsr)} as $\mpsr(P) = \arg\max_{R}(P[R]: \nexists W\ne R \text{ s.t. } P[W] = P[R])$.
\end{dfn}
Put differently, a ranking $R$ is called a {\em singleton} in a profile $P$, if there does not exist another linear order that occurs for the same number of times in $P$. $\mpsr(P)$ is the singleton that occurs most frequently in $P$. If no singleton exists, then we let $\mpsr(P)=\emptyset$.  We now define tie-breaking mechanisms based on \mpsr{}.
\begin{dfn}[\bf \mpsr{} tie-breaking mechanism]\label{dfn:mpsrtb} For any voting correspondence $\cor$, any profile $P$, and any backup tie-breaking mechanism $\tb$, the {\em \mpsr-then-\tb} mechanism uses  $\mpsr(P)$ to break ties whenever $\mpsr(P)\ne \emptyset$; otherwise it uses $\tb$ to break ties.
\end{dfn}

\begin{thm}\label{thm:mpsropt} Let $\mm= (\Theta,\ml(\ma),\Pi)$ be a strictly positive and closed single-agent preference model with $\pi_\text{uni}\in \conv(\Pi)$. For any voting correspondence $\cor$ that satisfies anonymity and neutrality, let $r_\mpsr$ denote the voting rule obtained from $\cor$ by \mpsr{}-then-$\tb$. For any $n$ and any $\vec\pi\in \Pi^n$, 

$\hfill\Pr_{P\sim\vec\pi}(\rel_\text{ano}(r_\mpsr, P)+\rel_\text{neu}(r_\mpsr, P)<2)=O(n^{- \frac{m!}{4}})\hfill$

Moreover, if $\tb$ satisfies anonymity (respectively, neutrality) then so does $r_\mpsr$.
\end{thm}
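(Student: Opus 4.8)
The plan is to reduce the event $\{\rel_\text{ano}(r_\mpsr,P)+\rel_\text{neu}(r_\mpsr,P)<2\}$ to the event that the histogram $\hist(P)$ lies in a ``bad'' set, and then bound the probability of that bad set via Lemma~\ref{lem:maintech}. The first step is to understand when $r_\mpsr$ can fail anonymity or neutrality. Anonymity is immediate: $\mpsr(P)$ depends only on $\hist(P)$, and $\cor$ is anonymous, so whenever $\mpsr(P)\ne\emptyset$ the output $r_\mpsr(P)$ is a function of $\hist(P)$; if $\mpsr(P)=\emptyset$ we fall back to $\tb$, which is anonymous by hypothesis (for the ``moreover'' clause) — in any case, anonymity of $r_\mpsr$ is never the issue except possibly through $\tb$, and when $\mpsr(P)\neq\emptyset$ it holds outright. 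So the interesting failure mode is neutrality. I would argue: if $\mpsr(P)\neq\emptyset$, then for any permutation $\sigma$, $\mpsr(\sigma(P))=\sigma(\mpsr(P))$ (the multiplicity profile of $\sigma(P)$ is the same multiset as that of $P$, just relabeled, so the unique-max singleton is exactly the $\sigma$-image of $\mpsr(P)$), and since $\cor$ is neutral, breaking ties by a neutrally-transformed linear order yields $r_\mpsr(\sigma(P))=\sigma(r_\mpsr(P))$. Hence the only way $r_\mpsr$ can violate \emph{either} axiom at $P$ is if $\mpsr(P)=\emptyset$, i.e.\ $P$ has \emph{no singleton ranking at all}: every linear order that appears in $P$ shares its multiplicity with some other linear order (and also, separately, the multiplicity $0$ is shared among all absent orders whenever $\ge 2$ are absent — but the definition only requires the realized max over singletons, so ``no singleton'' means the map $R\mapsto P[R]$ restricted to its range is nowhere injective). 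Actually cleaner: $\mpsr(P)=\emptyset$ iff the function $R\mapsto P[R]$ on $\ml(\ma)$ takes no value exactly once.

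The second step is to model $\{\mpsr(P)=\emptyset\}$ as (a union of polynomially-many systems of) linear constraints of the form in Definition~\ref{dfn:cab}, so that Lemma~\ref{lem:maintech} applies with $q=m!$. The event ``$R\mapsto P[R]$ takes no value exactly once'' is a finite union over all partitions/matchings of the $m!$ coordinates into blocks of size $\ge 2$ of the events ``$x_R=x_{R'}$ for $R,R'$ in a common block, and the blocks have distinct values (or the distinctness constraints are inequalities).'' Each such event is a system of equalities $\ba\cdot\vec x^\top=\vec 0$ together with strict inequalities; crucially, the pairwise-equality constraints within blocks contribute a large rank. The key quantitative claim is that $\rank(\ba)\ge m!/2$ for every such system: if the $m!$ coordinates are partitioned into blocks each of size $\ge 2$, there are at most $m!/2$ blocks, so the equalities ``all coordinates in a block are equal'' cut the dimension down by at least $m! - m!/2 = m!/2$, i.e.\ the rank of the equality system is $\ge m!/2$. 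Then by the polynomial branch of Lemma~\ref{lem:maintech} each system contributes $O(n^{-\rank(\ba)/2})=O(n^{-m!/4})$, and — since these constraint vectors $\vec E$ all satisfy $\vec E\cdot\vec 1=0$ (they are coordinate differences), and $\pi_\text{uni}\in\conv(\Pi)$ lies in every such relaxed polytope $\rsol$ (uniform has all coordinates equal), so we are always in the polynomial, not the exponential or empty, branch — a union bound over the finitely many (a constant depending only on $m$, not on $n$) matchings gives $\Pr(\mpsr(P)=\emptyset)=O(n^{-m!/4})$, which dominates the whole failure probability.

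The third step handles the ``moreover'' clause and is essentially bookkeeping: when $\mpsr(P)\neq\emptyset$ I have already shown $r_\mpsr$ respects whichever of anonymity/neutrality we care about; when $\mpsr(P)=\emptyset$ the behavior is that of $\tb$, which inherits the relevant axiom by hypothesis. (Since anonymity of $r_\mpsr$ holds unconditionally when $\tb$ is anonymous, and likewise for neutrality, the final bound also implies the bound on the joint-violation probability for those sub-cases trivially.) The main obstacle I anticipate is Step~2: verifying carefully that ``no singleton'' decomposes into a \emph{finite, $n$-independent} union of systems each of the exact form required by Definition~\ref{dfn:cab} — in particular that the strict-inequality part $\bb\cdot\vec x^\top<\vec 0$ can encode the ``distinct block values'' condition (one must be slightly careful, since ``distinct'' is not a single closed halfspace, so one further refines into orderings of the block-values, still a constant number of cases), and the rank lower bound $\rank(\ba)\ge m!/2$ is uniform across all these cases. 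Everything else follows mechanically from Lemma~\ref{lem:maintech} and the fact that $\pi_\text{uni}\in\conv(\Pi)$ keeps every branch in the polynomial regime with exponent at least $m!/4$.
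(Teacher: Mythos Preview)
Your proposal is correct and follows essentially the same route as the paper's proof: reduce to the event $\mpsr(P)=\emptyset$, decompose that event over partitions of $\ml(\ma)$ into blocks of size $\ge 2$, observe that the equality constraints within blocks give $\rank(\ba)\ge m!/2$, and apply Lemma~\ref{lem:maintech} with $\pi_\text{uni}\in\rsol\cap\conv(\Pi)$. The one simplification the paper makes that you miss is that the strict ``distinct block values'' inequalities are unnecessary: for any partition $\calQ$ into blocks of size $\ge 2$, the equality system alone (with $\bb=\emptyset$) already forces every ranking to share its multiplicity with another, so $\bigcup_\calQ \ppoly{\calQ}$ exactly equals $\{\mpsr(P)=\emptyset\}$ without any further refinement into block-orderings.
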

The proof is done by showing that (1) anonymity and neutrality are preserved when $\mpsr(P)\ne \emptyset$, and (2) any profile $P$ with $\mpsr(P)= \emptyset$ can be represented by a system of linear constraints, whose smoothed likelihood is upper-bounded by the polynomial upper bound in Lemma~\ref{lem:maintech}. The full proof can be found in Appendix~\ref{sec:mpsroptproof}. 

Note that when $m$ is even, the $O(n^{- \frac{m!}{4}})$ upper bound in Theorem~\ref{thm:mpsropt} matches the optimal upper bound in light of Theorem~\ref{thm:smoothedANR} and Lemma~\ref{lem:taubounds}. This is good news because it implies that any anonymous and neutral correspondence can be made an asymptotically optimal voting rule w.r.t.~anonymity$+$neutrality by MPSR tie-breaking. When $m$ is odd, the $O(n^{- \frac{m!}{4}})$ upper bound in Theorem~\ref{thm:mpsropt} is suboptimal but still significantly better than that of the lexicographic or fixed-agent tie-breaking mechanism, which is $\Omega(n^{-0.5})$ (Proposition~\ref{prop:nonoptlexfa}). 

\section{Future work} 

We have only touched the tip of the iceberg of smoothed analysis in social choice. There are at least three major dimensions for future work: (1) other social choice axioms and impossibility theorems, for example Arrow's impossibility theorem~\cite{Kalai02:Fourier,Keller2010:On-the-probability,Mossel2012:A-quantitative} and the Gibbard-Satterthwaite theorem~\cite{Friedgut2011:A-quantitative,Mossel2015:A-quantitative}, (2) computational aspects in social choice~\cite{Brandt2016:Handbook,Baumeister2020:Towards} such as the smoothed complexity of winner determination and  complexity of manipulation, and (3) other social choice problems such as judgement aggregation~\cite{Nehama2013:Approximately,Filmus2019:AND-testing}, distortion~\cite{Procaccia06:Distortion,Anshelevich2018:Approximating,Mandal2019:Efficient,Kempe2020:Communication,Mandal2020:Optimal}, matching, resource allocation, etc.

\Omit{
\newpage
\newcommand{\rev}[1]{{\color{red}R\#{#1}}}
\section*{Incorporating Reviewers' Comments and Suggestions}
We'd like to thank the reviewers again for their helpful feedback and comments! Below is a list of major points raised by the reviewer and how we addressed them in the camera-ready version. All minor points were addressed as much as we could.

{\noindent \bf \rev{1}. Re.~correlations among noise.} We have added references in Section 1.1 as promised in the rebuttal. 

{\noindent \bf \rev{2}. Re.~Theorem~1 is effectively known.} Thanks for the thoughts after the rebuttal! We added some discussions after Theorem 2 saying that parts of the theorem is easy to prove and referred to the TEAC paper mentioned by the reviewer. 

{\noindent \bf \rev{2}.  Re.~analysis of the style presented in the paper was already done in prior work.} We guess that the reviewer meant that the worst average-case analysis in social choice (about classical statistical properties) is not new, which we totally agree and apologize for the confusion in the original submission. We have emphasized this and rewritten ``{\bf The worst average-case idea}'' paragraph in Section 1.1 to add references to the papers suggested by the reviewer as well as some other relevant papers. We now highlight that the novelty of our framework is to explicitly conduct worst average-case analysis to study {\em per-profile properties} as in the paradox and impossibility theorem (instead of classical statistical properties which has been well-studied in social choice). We have also modified various other places throughout the paper to avoid giving readers an impression that we are the first to consider worst average-case analysis in social choice.

{\noindent \bf \rev{2}.  Re.~the theorems work for fixed $m$.} We have now emphasized right before or after the two main theorems that they hold for any fixed $m\ge 3$.

{\noindent \bf \rev{3}.  Re.~``results are incremental''.}   We agree that some high-level messages behind our results align well with people's intuition, but like \rev{4}, we view it a strength of the paper as 
for  smoothed complexity analysis (the simplex algorithm was widely believed to be fast before it was formally proved by Spielman and Teng). We have added more discussions in the paragraph right after Theorem 2.

{\noindent \bf \rev{3}.  Re.~``the role of AI''.} As promised, we have modified wording of the paragraph before the main question, hoping to make it clearer that the discussion about AI is used to motivate the setting rather than breaking mathematical facts.
}

\newpage

\section{Acknowledgements}
We thank Elliot Anshelevich, Rupert Freeman, Herve Moulin, Marcus Pivato, Nisarg Shah, Rohit Vaish, Bill Zwicker, participants of the COMSOC video seminar, and anonymous reviewers for helpful discussions and comments. This work is supported by NSF \#1453542, ONR \#N00014-17-1-2621, and a gift fund from Google.

\section*{Broader Impact} In this paper we aim to provide smoothed possibilities of social choice, which is an important problem in the society. Therefore, success of the research will benefit general public beyond the CS research community because better solutions are now available for a wide range of group decision-making scenarios.

{
\bibliographystyle{plainnat}
\bibliography{/Users/administrator/GGSDDU/references}
}

\newpage
\appendix
\section{Appendix: An Example of Single-Agent Preference Models}
\label{app:modelex}
\begin{ex}
\label{ex:sapm}
In a  {\em single-agent Mallows' model}  $\mm_\mallows$, $\Theta = \ml(\ma)\times [0,1]$, where in each $(R,\varphi)\in \Theta$, $R$ is  the central ranking and $\varphi$ is  the dispersion parameter. For any $W\in \ml(\ma)$, we have $\pi_{(R,\varphi)} = \varphi^{\kt(R,W)}/Z_{\varphi}$, where $\kt(R,W)$ is the {\em Kendall Tau distance} between $R$ and $W$, namely the number of pairwise disagreements between $R$ and $W$, and $Z_\varphi = \sum_{W\in\ml(\ma)}\varphi^{\kt(R,W)}$ is the normalization constant. For any $0<\underline{\varphi}\le 1$, we let $\mm_\mallows^{[\underline{\varphi},1]}$ denote the Mallows' model where the parameter space is  $\ml(\ma)\times [\underline{\varphi}, 1]$. 

As another example, in the {\em single-agent Plackett-Luce model} $\mm_\pl$, $\Theta= \{\vec \theta\in [0,1]^m: \vec \theta\cdot \vec 1 = 1\}$. For any $\vec \theta\in\Theta$ and any $R=\sigma(1)\succ \sigma(2)\succ\cdots\succ\sigma(m)$, we have $\pi_{\vec\theta} (R) = \prod_{i=1}^{m-1}\frac{\theta_{\sigma(i)}}{\sum_{l=i}^m \theta_{\sigma(l)}}$. For any $0<\underline{\varphi}\le 1$, we let $\mm_\pl^{[\underline{\varphi},1]}$ denote the Plackett-Luce model where $\Theta= \{\vec \theta\in [\underline{\varphi},1]^m: \vec \theta\cdot \vec 1 = 1\}$. 

It follows that for any $0<\underline{\varphi}\le 1$, $\mm_\mallows^{[\underline{\varphi},1]}$ and $\mm_\pl^{[\underline{\varphi},1]}$ are strictly positive, closed, and neutral.
\end{ex}

\section{Appendix: Proof of Lemma~\ref{lem:maintech}}
\label{sec:maintechproof}

\appLem{lem:maintech} {{\bf(Main technical lemma).} Let $q\in\mathbb N$ and $\Pi$ be a closed set of strictly positive distributions over  $[q]$. Let $\conv(\Pi)$ denote the convex hull of $\Pi$. 

\text{\bf Upper bound.}  For any $n\in\mathbb N$ and any $\vec \pi\in \Pi^n$, 
$$\Pr\left(\vec X_{\vec \pi} \in \sol\right)=\left\{\begin{array}{ll}0 &\text{if }\sol = \emptyset\\
\exp(-\Omega(n)) &\text{if }\sol \ne \emptyset \text{ and }\rsol\cap\conv(\Pi)= \emptyset\\
O(n^{-\frac{\rank(\ba)}{2}}) &\text{if }\sol \ne \emptyset \text{ and }\rsol\cap\conv(\Pi)\ne \emptyset
\end{array}\right.$$
\text{\bf Tightness of the upper bound.}  There exists a constant $C$ such that for any $n'\in\mathbb N$, there exists $n'\le n\le Cn'$ and $\vec \pi\in\Pi^n$ such that
$$
\Pr\left(\vec X_{\vec \pi} \in \sol\right)=\left\{\begin{array}{ll}
\exp(-O(n)) &\text{if }\sol \ne \emptyset \text{ and }\rsol\cap\conv(\Pi)= \emptyset\\
\Omega(n^{-\frac{\rank(\ba)}{2}}) &\text{if }\sol \ne \emptyset \text{ and }\rsol\cap\conv(\Pi)\ne \emptyset
\end{array}\right.$$
}

\begin{proof} The $\sol=\emptyset$ case trivially holds. Let $o=\rank(\ba)$. 
Let $\vec X_{\vec \pi} = \hist(\vec Y)=(X_{\vec \pi,1}\ldots,X_{\vec \pi,q})$. That is, for any $i\le q$, $X_{\vec \pi,i}$  represents the number of occurrences of outcome $i$ in $\vec Y$. For any $n\in\mathbb N$ and any $\vec \pi\in \Pi^n$, let $\vec \mu_{\vec\pi} =(\mu_{\vec\pi,1},\ldots,\mu_{\vec\pi,q}) = \expect(\sum _{j=1}^n \vec X_{\vec \pi}/n)$ denote the mean of $\vec X_{\vec \pi}/n$ and let $\vec\sigma_{\vec\pi}=(\sigma_{\vec\pi,1},\ldots,\sigma_{\vec\pi,q})$, where for each $i\le q$, $\sigma_{\vec\pi,i}=\sqrt {\text{Var}(X_{\vec \pi,i})/n}$. Because $\Pi$ is strictly positive, there exists $\epsilon_1 >0, \epsilon_2>0$ such that for all $n$, all $\vec \pi\in \Pi^n$, and all $i\le q$, we have $\epsilon_1< \mu_{\vec\pi,i}<\epsilon_2$ and $\frac{\epsilon_1}{\sqrt n}< \sigma_{\vec\pi,i}<\frac{\epsilon_2}{\sqrt n}$. 


\vspace{1mm}
\noindent{\bf \boldmath Upper bound when $\sol \ne \emptyset \text{ and }\rsol\cap\conv(\Pi)= \emptyset$.}  
It is not hard to see that $\rsol$ is convex and closed. 
Because $\Pi$ is closed and bounded, $\conv(\Pi)$ is convex, closed and compact. Because $\rsol\cap \conv(\Pi)=\emptyset$, by the strict hyperplane separation theorem, there exists a hyperplane that strictly separates  $\rsol$ and $\conv(\Pi)$. Therefore, there exists $\epsilon'>0$ such that for any $\vec x_1\in \rsol$ and any $\vec x_2\in \conv(\Pi)$, we have $|\vec x_1-\vec x_2|_\infty>\epsilon'$, where $|\cdot|_\infty$ is the $L_\infty$ norm.  This means that any  solution to $\conv(\vec x)$ is at least $\epsilon' n$ away from $n\cdot \vec \mu_{\vec \pi}$ in $L_\infty$. Therefore, we have:
\begin{align*}
&\Pr\left(\vec X_{\vec \pi} \in \sol\right) \le \Pr\left( |\vec X_{\vec \pi} - n\cdot \vec \mu_{\vec \pi}|_\infty> \epsilon'n\right) \le \sum_{i=1}^q \Pr(|X_{\vec \pi,i} -n\mu_{\pi,i}|>\epsilon' n)\\
&\le 2q\exp\left(-\frac{(\epsilon')^2 n }{(1-2\epsilon)^2}\right)
\end{align*}
The last inequality follows after  Hoeffding's inequality (Theorem 2 in~\cite{Hoeffding63:Probability}), where $\epsilon$ is a constant such that any distribution in $\Pi$ is above $\epsilon$.

\vspace{1mm}
\noindent{\bf \boldmath Upper bound when $\sol \ne \emptyset \text{ and }\rsol\cap\conv(\Pi)\ne \emptyset$.} Let $\consa(\vec x)=\{\ba\cdot\vec x = (\vec 0)^\top\}$ denote the relaxation of $\cons(\vec x)$ by removing the $\bb$ part. Let $\consa(\vec X_{\vec \pi})$ denote the event that $\vec X_{\vec \pi}$ satisfies all constraints in $\consa$. It follows that each vector in $\sol$ is a solution to $\consa(\vec x)$, which means that $\Pr(\vec X_{\vec \pi}\in \sol)\le \Pr(\consa(\vec X_{\vec \pi}))$. Therefore, it suffices to prove that for any $n\in\mathbb N$ and any $\vec \pi\in \Pi^n$,  $\Pr(\consa(\vec X_{\vec \pi})) = O(n^{-\frac{o}{2}})$. 
Because $\ba\cdot (\vec 1)^\top =(\vec 0)^\top$, it follows that $\vec 1 = \{1\}^q$ is linearly independent with the row vectors of $\ba$. Therefore, the rank of $\ba'=\left[\begin{array}{c}\ba\\\vec 1\end{array}\right]$ is $o+1$. Let $\consap(\vec x)=\{\ba\cdot \vec x = 0 \text{ and } \vec 1\cdot \vec x=n\}$.
From basic linear algebra, in particular the reduced row echelon form (a.k.a.~row canonical form)~\cite{Meyer2000:Matrix} of $\ba'$ computed by Gauss-Jordan elimination where $n$ is treated as a constant, we know that there exist $I_0\subseteq [q]$ and $I_1\subseteq [q]$ such that $I_0\cap I_1=\emptyset$, $|I_0|=\rank(\ba')=\rank(\ba)+1$, and an $|I_0|\times ( |I_1|+1)$ matrix $\bd$ in $\mathbb Q$ such that $\consap(\vec x)$ is equivalent to $(\vec x_{I_0})^\top = \bd\cdot [\vec x_{I_1}, n]^\top$, where $\vec x_{I_0}$ is the subvector of $\vec x$ that contains variables whose subscripts are in $I_0$. In other words, a vector $\vec x$ satisfies $\consap(\vec x)$ if and only if $(\vec x_{I_0})^\top = \bd\cdot [\vec x_{I_1}, n]^\top$. See Example~\ref{ex:bordacondorcetD} in Appendix~\ref{appendix:maintech} for an example of deriving $\bd$ from $\ba'$.

We note that $I_1\cup I_0 = [q]$. For the sake of contradiction suppose this is not true, which means that the reduced row echelon form of $\ba'$ has a column of zeros for some variable $x_j$. However, this means that $\vec 1$ is not a linear combination of the rows of the reduced row echelon form of $\ba'$, which is a contradiction because $\ba'$, which includes $\vec 1$, can be obtained from a series of linear transformations on its reduced row echelon form. W.l.o.g.~in the remainder of this proof we let $I_0 = \{1,\ldots, o+1\}$ and $I_1= \{o+2,\ldots, q\}$. 

The hardness in bounding $\Pr(\consa(\vec X_{\vec \pi}))$ is that elements of $\vec X_{\vec \pi}$ are not independent, and typical asymptotical tools such as  Lyapunov-type bound are too coarse. To solve this issue, we use the following alternative representation of $Y_1,\ldots,Y_n$. 
For each $j\le n$, we use a binary random variable $Z_j\in\{0,1\}$ to represent whether the outcome of $Y_j$ is in $I_0$ (corresponding to $Z_j=0$) or is in $I_1$ (corresponding to $Z_j=1$). Then, we use another random variable $W_j\in [q]$ to represent the outcome of $Y_j$ conditioned on $Z_j$. See Figure~\ref{fig:alt1} for an illustration.
\begin{figure}[htp]
\centering
\includegraphics[width=.7\textwidth]{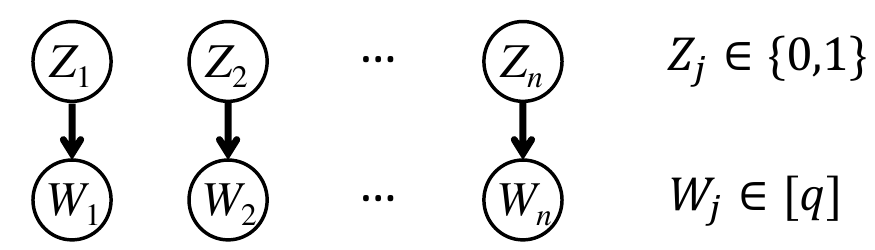}
\caption{The representation of $\vec Y$ as $\vec Z$ and $\vec W$, where $\vec W$ has the same distribution as $\vec Y$.\label{fig:alt1}}
\end{figure}

 At a high level, this addresses the independent issue because components of $\vec W$ are conditionally independent given $\vec Z$, and as we will see below, concentration happens in $\vec W$ when $\vec Z$ contains $\Theta(n)$ many $0$'s.

\begin{dfn}[\bf \boldmath  Alternative representation of $Y_1,\ldots,Y_{n}$]\label{dfn:altfory} For each $j\le n$, we define a Bayesian network with two random variables $Z_j \in \{0,1\}$ and $W_j\in [q]$, where $Z_j$ is the parent of $W_j$, and 
\begin{itemize}
\item for each $l\in \{0,1\}$, $\Pr(Z_j = l) = \Pr(Y_j \in I_l)$; 
\item   for each $l\in \{0,1\}$ and each $t\le q$, $\Pr(W_j = t|Z_j=l) = \Pr(Y_j = t|Y_j\in I_l)$.
\end{itemize}
In particular, if $t\not\in I_l$ then $\Pr(W_j = t|Z_j=l)=0$.
\end{dfn}
It follows that $W_j$ has the same distribution as $Y_j$. For any $\vec z\in \{0,1\}^n$, we let $\ind_0(\vec z)\subseteq [n]$ denote the indices of $\vec z$ that equals to $0$. Given $\vec z$, we let $\vec W_{\ind_0(\vec z)}$ denote  the set of all $W_j$'s with $z_j = 0$, and let $\hist(\vec W_{\ind_0(\vec z)})$ denote the vector of $o+1$ random variable that correspond to the histogram of $\vec W_{\ind_0(\vec z)}$. Similarly,  we let $\hist(\vec W_{\ind_1(\vec z)})$ denote  the vector of $q-o-1$ random variables that correspond to the histogram of $\vec W_{\ind_1(\vec z)}$. Recall that each $\pi_j$ is at least $\epsilon>0$. We have the following calculation on $\Pr(\consa(\vec X_{\vec \pi}))$, i.e.~the probability that $\hist(\vec Y)$ satisfies all constraints in $\consa$.
\begin{align}
&\Pr(\consa(\vec X_{\vec \pi})) =\sum_{\vec z\in \{0,1\}^n}\Pr(\vec Z = \vec z) \Pr\left(\consa(\hist(\vec W))\;\middle\vert\; \vec Z = \vec z \right) (\text{total probability})\notag\\
=&\sum_{\vec z\in \{0,1\}^n}\Pr(\vec Z = \vec z) \Pr\left(\hist(\vec W_{\ind_0(\vec z)})^\top = \bd\cdot [\hist(\vec W_{\ind_1(\vec z)}), n]^\top \;\middle\vert\; \vec Z = \vec z \right)\notag\\
=&\sum_{\vec z\in \{0,1\}^n}\Pr(\vec Z = \vec z)\sum_{\vec x\in {\mathbb Z_{\ge 0}^{q-o-1}}}\Pr\left(\hist(\vec W_{\ind_1(\vec z)}) =\vec x \;\middle\vert\; \vec Z = \vec z \right)\notag
\\
&\hspace{40mm}\times \Pr\left(\hist(\vec W_{\ind_0(\vec z)})^\top =\bd\ \cdot [\vec x, n]^\top \;\middle\vert\; \vec Z = \vec z \right)\label{eq:condz}
\end{align}
\begin{align}
=&\sum_{\vec z\in \{0,1\}^n}\Pr(\vec Z = \vec z)\sum_{\vec x\in {\mathbb Z_{\ge 0}^{q-o-1}}}\Pr\left(\hist(\vec W_{\ind_1(\vec z)}) =\vec x \;\middle\vert\; [\vec Z]_{\ind_1(\vec z)} = \vec 1\right)\notag
\\
&\hspace{40mm}\times \Pr\left(\hist(\vec W_{\ind_0(\vec z)})^\top =\bd\ \cdot [\vec x, n]^\top \;\middle\vert\; [\vec Z]_{\ind_0(\vec z)} = \vec 0\right)\label{eq:condz1}
\end{align}
\begin{align}
\le &\sum_{\vec z\in \{0,1\}^n: [\hist(\vec z)]_0\ge 0.9 \epsilon n} \Pr(\vec Z = \vec z) \sum_{\vec x\in {\mathbb Z_{\ge 0}^{q-o-1}}}\Pr\left(\hist(\vec W_{\ind_1(\vec z)}) =\vec x \;\middle\vert\; [\vec Z]_{\ind_1(\vec z)} = \vec 1\right)\notag\\
&\times \Pr\left(\hist(\vec W_{\ind_0(\vec z)})^\top =\bd\cdot [\vec x, n]^\top \;\middle\vert\; [\vec Z]_{\ind_0(\vec z)} = \vec 0\right) + \Pr([\hist(\vec Z)]_0 < 0.9\epsilon n)\label{eq:histz}
\end{align}
where $[\hist(\vec z)]_0$ is the number of $0$'s in $\vec z$. (\ref{eq:condz}) holds because $W_j$'s are independent of each other given $Z_j$'s, which means that for any $\vec z\in\{0,1\}^n$, $\hist(\vec W_{\ind_0(\vec z)})$ and $\hist(\vec W_{\ind_1(\vec z)})$ are independent given $\vec z$. (\ref{eq:condz1}) holds because 
$\vec W_{\ind_1(\vec z)}$ (respectively, $\vec W_{\ind_0(\vec z)}$) is independent of $[\vec Z]_{\ind_0(\vec z)}$ (respectively, $[\vec Z]_{\ind_1(\vec z)}$) given $[\vec Z]_{\ind_1(\vec z)}$ (respectively, $[\vec Z]_{\ind_0(\vec z)}$)  in light of independence in the Bayesian network.

To simplify notation, we write (\ref{eq:histz}) as follows.
\begin{align*}
&\sum_{\vec z\in \{0,1\}^n: [\hist(\vec z)]_0\ge 0.9 \epsilon n} \Pr(\vec Z = \vec z) \sum_{\vec x\in {\mathbb Z_{\ge 0}^{q-o-1}}} F_1(\vec z,\vec x)\times F_2(\vec z,\vec x) +F_3, \text{ where}\\
&F_1(\vec z,\vec x)=\Pr\left(\hist(\vec W_{\ind_1(\vec z)}) =\vec x \;\middle\vert\; [\vec Z]_{\ind_1(\vec z)} = \vec 1\right)\\
&F_2(\vec z,\vec x)=\Pr\left(\hist(\vec W_{\ind_0(\vec z)})^\top =\bd\cdot [\vec x, n]^\top \;\middle\vert\; [\vec Z]_{\ind_0(\vec z)} = \vec 0\right)\\
&F_3= \Pr([\hist(\vec Z)]_0 < 0.9\epsilon n)
\end{align*}

We now show that  given $[\hist(\vec z)]_0\ge 0.9 \epsilon n$, for any $\vec x\in \mathbb Z_{\ge 0}^{q-o-1}$, 
$$F_2(\vec z,\vec x)=O((0.9\epsilon n)^{-\frac{o}{2}})= O(n^{-\frac{o}{2}}),$$ 
which follows after the following lemma, where $n=|\ind_0(\vec z)|$ and  $q^*=o+1$. The lemma can be seen as a Poisson multivariate extension of the Littlewood-Offord-Erd\H{o}s anti-concentration bound, because it says that the probability for $\hist(\vec W_{\ind_0(\vec z)})^\top$ to take a specific value $\bd\cdot [\vec x, n]^\top$ (which is a constant vector given $\vec x$) is bounded above by $O(n^{-\frac{o}{2}})$.

\begin{lem}[\bf Point-wise anti-concentration bound for Poisson multinomial variables]\label{lem:upperboundp} Given $q^*\in\mathbb N$ and $\epsilon>0$. There exists a constant $C^*>0$ such that for any $n\in \mathbb N$ and any vector $\vec Y' = (Y'_1,\ldots,Y'_n)$ of $n$ independent random variables over $[q^*]$, each of which is above $\epsilon$,  and any vector $\vec x \in {\mathbb Z^{q^*}_{\ge 0}}$, we have $\Pr(\hist(\vec Y')=\vec x)<C^*n^{\frac{1-q^*}{2}}$.
\end{lem}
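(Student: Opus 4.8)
The plan is to prove Lemma~\ref{lem:upperboundp} by induction on $q^*$, reducing the $q^*$-dimensional anti-concentration bound to the one-dimensional (Bernoulli) Littlewood--Offord--Erd\H{o}s phenomenon. Throughout, ``above $\epsilon$'' means every outcome has probability at least $\epsilon$; this forces $q^*\epsilon\le 1$, hence $(q^*-1)\epsilon\le 1-\epsilon$. Since for each fixed $q^*$ there are only finitely many $n$ below any threshold and $\Pr(\hist(\vec Y')=\vec x)\le 1$, it suffices to establish the bound for all sufficiently large $n$ and then enlarge the constant $C^*$.

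The one-dimensional engine is the estimate that there is $C_1=C_1(\epsilon)$ with $\max_k\Pr(\sum_{j=1}^n Z_j=k)\le C_1 n^{-1/2}$ for independent $Z_j\sim\Bernoulli(p_j)$, $p_j\in[\epsilon,1-\epsilon]$. I would prove this by Fourier inversion: $\Pr(\sum_j Z_j=k)=\frac1{2\pi}\int_{-\pi}^{\pi}e^{-\mathrm{i}kt}\prod_j(1-p_j+p_je^{\mathrm{i}t})\dd t$, and since $|1-p_j+p_je^{\mathrm{i}t}|^2=1-2p_j(1-p_j)(1-\cos t)\le e^{-c\,t^2}$ on $[-\pi,\pi]$ for some $c=c(\epsilon)>0$, the integral is at most $\frac1{2\pi}\int_{-\pi}^{\pi}e^{-cnt^2/2}\dd t=O(n^{-1/2})$. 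The companion fact is Hoeffding's inequality $\Pr(\sum_j Z_j\ge k)\le \exp(-2(k-\sum_j p_j)^2/n)$ for $k\ge\sum_j p_j$. The base case $q^*=1$ of the lemma is trivial, since then $\hist(\vec Y')=(n)$ deterministically and $n^{(1-1)/2}=1$.

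For the inductive step, fix $q^*\ge 2$ and assume the lemma for $q^*-1$ with constant $C^*_{q^*-1}$. Condition on the random index set $S=\{j:Y'_j\ne q^*\}$; the event $\{\hist(\vec Y')=\vec x\}$ forces $|S|=n-x_{q^*}$. Given $S=s$, the variables $(Y'_j)_{j\in s}$ are independent, range over $[q^*-1]$, and each outcome still has conditional probability at least $\epsilon$ (conditioning only divides by a number $\le 1$); so by the induction hypothesis, $\Pr(\hist((Y'_j)_{j\in s})=(x_1,\dots,x_{q^*-1})\mid S=s)\le C^*_{q^*-1}(n-x_{q^*})^{-(q^*-2)/2}$ whenever $n-x_{q^*}\ge 1$, and this conditional probability is trivially $\le 1$ when $n-x_{q^*}=0$. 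Averaging over $s$ and noting $\Pr(|S|=n-x_{q^*})=\Pr(\#\{j:Y'_j=q^*\}=x_{q^*})$ gives
\[
\Pr(\hist(\vec Y')=\vec x)\ \le\ C^*_{q^*-1}\,(n-x_{q^*})^{-(q^*-2)/2}\cdot\Pr\big(\#\{j:Y'_j=q^*\}=x_{q^*}\big),
\]
where $\#\{j:Y'_j=q^*\}$ is a Poisson-binomial with parameters $p_j=\Pr(Y'_j=q^*)\in[\epsilon,1-(q^*-1)\epsilon]\subseteq[\epsilon,1-\epsilon]$ and mean $\sum_j p_j\le(1-(q^*-1)\epsilon)n$.

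To finish, fix any $\delta\in(0,(q^*-1)\epsilon)$ and split on the size of $n-x_{q^*}$. If $n-x_{q^*}\ge\delta n$, bound the Poisson-binomial point mass by $C_1 n^{-1/2}$ and use $(n-x_{q^*})^{-(q^*-2)/2}\le(\delta n)^{-(q^*-2)/2}$; the product is $O(n^{-(q^*-1)/2})$ with a constant depending only on $q^*$ and $\epsilon$. If $n-x_{q^*}<\delta n$, then $x_{q^*}>(1-\delta)n$ exceeds the mean $\le(1-(q^*-1)\epsilon)n$ by at least $((q^*-1)\epsilon-\delta)n$, so Hoeffding makes the point mass $\le e^{-\Omega(n)}$, while the other factor is at most $C^*_{q^*-1}$ (since $(n-x_{q^*})^{-(q^*-2)/2}\le 1$ when $n-x_{q^*}\ge 1$, and the $n=x_{q^*}$ case is handled directly by $\Pr(\forall j\ Y'_j=q^*)\le(1-(q^*-1)\epsilon)^n$); hence the product is exponentially small. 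Combining the two cases and absorbing small $n$ into the constant yields $\Pr(\hist(\vec Y')=\vec x)<C^*_{q^*}n^{(1-q^*)/2}$, completing the induction. The main obstacle is the one-dimensional anti-concentration bound with a constant uniform in the $p_j$ (through $\epsilon$ only), together with the careful handling of the degenerate regime $x_{q^*}\approx n$, where a direct application of the induction hypothesis would be vacuous and the Chernoff tail is what rescues it; the conditioning-and-induction skeleton itself is routine.
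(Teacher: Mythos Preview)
Your proof is correct and follows the same inductive skeleton as the paper: condition on which agents land in one distinguished category, apply the induction hypothesis to the remaining $q^*-1$ categories, and combine with a one-dimensional Poisson-binomial point-mass bound. The paper differs from you in two minor tactical choices. First, instead of your case split on whether $n-x_{q^*}$ is large (where you invoke Hoeffding for the degenerate regime), the paper simply relabels so that the peeled-off coordinate is the one with the \emph{smallest} count $x_{q^*}=\min_i x_i$, which forces $n-x_{q^*}\ge\frac{q^*-1}{q^*}n$ automatically and avoids the case analysis altogether. Second, for the one-dimensional anti-concentration $\max_k\Pr(\sum_j Z_j=k)=O(n^{-1/2})$, the paper uses Berry--Esseen rather than your Fourier-inversion argument; both work equally well here. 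Your route is slightly longer but entirely sound; the paper's WLOG trick is the cleaner shortcut you might adopt.
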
 

\begin{proof} When $\vec x\cdot \vec 1\ne n$ the inequality holds for any $C^*>0$. Suppose $\vec x\cdot \vec 1= n$, we prove the claim by induction on $q^*$. When $q^*=1$ the claim holds for any $C_1^*>1$. Suppose the claim holds for $q^*=q'-1$ w.r.t.~constant $C_{q'-1}^*$. W.l.o.g.~suppose $x_1\ge x_2\cdots\ge x_{q^*}$. When $q^*=q'$, we use the following representation of $Y'_1,\ldots,Y'_n$ that is similar to Definition~\ref{dfn:altfory}. For each $j\le n$, let $Y'_j$ be represented by the Bayesian network with two random variables $Z'_j\in \{0,1\}$ and its child $W'_j\in \{1,\ldots, q'\}$. Let  $\Pr(Z'_j=0)=\Pr(Y'_j\in \{1,\ldots,q'-1\})$ and $\Pr(Z'_j=1)=\Pr(Y'_j=q')$ . Let $\Pr(W'_j=q'|Z'_j=1) = 1$ and for all $l\le q'-1$,  $\Pr(W'_j=l|Z'_j=0) = \Pr(Y'_j = l|Y'_j\in \{1,\ldots q'-1\})$.

It follows that $\Pr(Z'_j=0,W'_j=q')=0$, $\Pr(Y'_j=q') = \Pr(Z'_j=1,W'_j=q')$, and for all $l\le q'-1$, $\Pr(Y'_j=l) = \Pr(Z'_j=0,W'_j=l)$ and $\Pr(Z'_j=1,W'_j=l)=0$. In other words, $Z'_j$ determines whether $Y'_j\in \{1,\ldots, q'-1\}$ (corresponding to $Z'_j=0$) or $Y'_j=q'$ (corresponding to $Z'_j=1$), and $W'_j$ determines the value of $Y'_j$ conditioned on $Z'_j$. 
By the law of total probability, we have:
 \begin{align*}
&\Pr(\hist(\vec Y')=\vec x)  =\sum_{\vec z\in \{0,1\}^n: \vec z\cdot \vec 1 = x_{q'}}\Pr\left(\hist(\vec W') = \vec x\;\middle\vert\; \vec Z' = \vec z\right)\cdot \Pr(\vec Z' = \vec z),
\end{align*}
We note that $W_j'$s are independent of each other given $\vec Z'$, and $\Pr(W_j=q'| Z_j=1)=1$. Therefore, for any $\vec z\in \{0,1\}^n$ with $\vec z\cdot \vec 1 = x_{q'}$, we have 
\begin{align}
&\Pr\left(\hist(\vec W') = \vec x\;\middle\vert\; \vec Z' = \vec z\right)=\Pr\left( \hist(\vec W')_{-q'} = \vec x_{-q'}\;\middle\vert\; \vec Z'_{\ind_0(\vec z)} = \vec 0\right)\notag\\
 \le&  C_{q'-1}^*|n-x_{q'}|^{(2-q')/2}\le  C_{q'-1}^* (\frac{q'-1}{q'})^{(2-q')/2} n^{(2-q')/2}\notag 
\end{align}
where $\hist(\vec Z')_{-q'}$ is the subvector of $\hist(\vec Z')$ by taking out the $q'$-th component and $\vec x_{-q'} = (x_1,\ldots,x_{q'-1})$.  
The first inequality follows after the induction hypothesis, because each $W'_j$ with $j\in \ind_0(\vec z)$ is a random variable over $\{1,\ldots, q'-1\}$ that is above $\epsilon$. The second step uses the assumption that $x_1\ge x_2\ge\cdots\ge x_{q'}$, which means that $x_{q'}\le \frac{1}{q'}n$. The next claim, which can be seen as an extension of the Littlewood-Offord-Erd\H{o}s  anti-concentration bound to Poisson binomial distributions, proves that $\Pr(\vec Z'\cdot \vec 1=x_{q'})=O(n^{-1/2})$.

\begin{claim} There exists a constant $C'$ that does not depend on $n$ or $q'$ such that $\Pr(\vec Z'\cdot \vec 1=x_{q'}) \le C'n^{-1/2}$.
\end{claim} 
\begin{proof}For any $j\le n$, recall that $\Pr(Z'_j = 0)\ge \epsilon$ and $\Pr(Z'_j = 1)\ge \epsilon$. Therefore, for all $j\le n$, $\text{Var}(Z'_j)\ge \epsilon(1-\epsilon)$ and there exists a constant $\rho$ such that for all $j\le n$, $\expect(|Z'_j-\expect(Z'_j)|^3)<\rho$. Let $\mu=\sum_{j=1}^{n}Z'_j$ and $\sigma = {\sqrt{\sum_{j=1}^{n}\text{Var}(Z'_j)}}$. We have $\sigma\ge \sqrt{n\epsilon(1-\epsilon)}$. By  Berry-Esseen theorem (see e.g.~\cite{Durrett91:Probability}), there exists a constant $C_0$ such that:
\begin{align*}
&\Pr(\vec Z'\cdot \vec 1=x_{q'}) \le \Pr(x_{q'}-1< \vec Z'\cdot \vec 1\le x_{q'}+1)= 
\Pr(\frac{x_{q'}-1}{\sigma}\le \frac{\vec Z'\cdot \vec 1}{\sigma}\le \frac{x_{q'}+1}{\sigma})\\
\le & (\hist(\frac{x_{q'}-\mu+1}{\sigma})-\hist(\frac{x_{q'}-\mu-1}{\sigma})) + 2C_0(n\epsilon(1-\epsilon))^{-3/2}n\rho\\
\le & \frac{2}{\sigma} +2C_0(n\epsilon(1-\epsilon))^{-3/2}n\rho \le  \left(\frac{2}{\sqrt{\epsilon(1-\epsilon)}}+\frac{2C_0\rho}{(\epsilon(1-\epsilon))^{3/2}}\right)n^{-1/2}
\end{align*}
The claim follows by letting $C'=\frac{2}{\sqrt{\epsilon(1-\epsilon)}}+\frac{2C_0\rho}{(\epsilon(1-\epsilon))^{3/2}}$.
\end{proof}
Finally, we have 
\begin{align*}
&\Pr(\hist(\vec Y')=\vec x) \le C_{q'-1}^*(\frac{q'-1}{q'})^{(2-q')/2}n^{(2-q')/2} \sum_{\vec z\in \{0,1\}^n: \vec z\cdot \vec 1 = x_{q'}} \Pr(\vec Z' = \vec z)\\
 \le& C'n^{-1/2}C_{q'-1}^*(\frac{q'-1}{q'})^{(2-q')/2}n^{(2-q')/2} 
\end{align*}
This proves the $q^*=q'$ case by letting $C_{q'}^*= C'C_{q'-1}^*(\frac{q'-1}{q'})^{(2-q')/2}$. This completes the proof of Lemma~\ref{lem:upperboundp}.\end{proof}

Because random variables in $\vec Y$ are above $\epsilon$, for all $j\le n$, $Z_j$ takes $0$ with probability at least $\epsilon$. Therefore, $\expect(\vec Z\cdot \vec 1) \ge \epsilon n$. By Hoeffding's inequality, $F_3 = \Pr([\hist(\vec Z)]_0 < 0.9\epsilon n)$ is exponentially small in $n$, which means that it is $O(n^{-\frac{o}{2}})$. We also note that for any $\vec z$ and $\vec x$ we have $F_1(\vec z,\vec x)\le 1$. Therefore, continuing (\ref{eq:histz}), we have:
\begin{align*}
&\Pr(\consa(\vec X_{\vec \pi})) 
\le \sum_{\vec z\in \{0,1\}^n: [\hist(\vec z)]_0\ge 0.9 d n} \Pr(\vec Z = \vec z) \sum_{\vec x\in {\mathbb Z_{\ge 0}^{q-o-1}}} F_1(\vec z,\vec x)\times F_2(\vec z,\vec x) +F_3\\
\le &\sum_{\vec z\in \{0,1\}^n: [\hist(\vec z)]_0\ge 0.9 d n} \Pr(\vec Z = \vec z) O(n^{-\frac{o}{2}}) + O(n^{-\frac{o}{2}}) = O(n^{-\frac{o}{2}})
\end{align*}
This proves the upper bound when $\sol \ne \emptyset \text{ and }\rsol\cap\conv(\Pi)\ne \emptyset$.

\vspace{1mm}
\noindent{\bf \boldmath  Tightness of the upper bound when $\sol \ne \emptyset \text{ and }\rsol\cap\conv(\Pi)= \emptyset$.}  We first prove the following claim that will be frequently used in this proof. The claim follows after a straightforward application of Theorem~1(i) in~\cite{Cook86:Sensitivity}. It states that for any $\vec x\in \sol$ and any $l>0$, $\sol$ contains an integer vector that is close to $l\cdot \vec x$.
\begin{claim}\label{claim:existenceintegersolution} Suppose $\sol\ne\emptyset$. There exists a constant $C'>0$ such that for any $\vec x\in\sol$, there exists an integer vector $\vec x^*\in\sol$ with $|\vec x^* - l\cdot \vec x|_\infty<C'$.
\end{claim}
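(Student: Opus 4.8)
The plan is to leverage the fact that $\rsol$ is a \emph{rational} polyhedral cone while $\sol$ merely turns finitely many of its defining inequalities strict; any scaled real point of $\sol$ can then be approximated to within a constant by rounding up a conic combination of integral generators of $\rsol$. I expect no genuine obstacle here — the claim is essentially a repackaging of finite generation of rational cones — the only step needing care being the verification that rounding up does not destroy a strict inequality.

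First I would write $\rsol=\{\vec x:\ba\,(\vec x)^\top=(\vec 0)^\top,\ \bb\,(\vec x)^\top\le(\vec 0)^\top\}$ and note that, being cut out by integer equalities and inequalities, it is a rational polyhedral cone; hence by the Minkowski--Weyl theorem it has finitely many generators, which after clearing denominators may be taken integral, say $\rsol=\cone(\vec g_1,\dots,\vec g_k)$ with $\vec g_i\in\mathbb Z^q$. I would then set $C':=1+\sum_{i=1}^k|\vec g_i|_\infty$, a quantity depending only on $\ba$ and $\bb$. (Alternatively one can reach the same conclusion from the integer-programming proximity bounds of Cook et al.\ after replacing each strict constraint $\bb_j\,(\vec x)^\top<0$ by $\bb_j\,(\vec x)^\top\le-1$ and rescaling, as the paper does; the direct cone argument below seems cleaner to me.)

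Next, given $\vec x\in\sol$ and $l>0$, I would use that $\sol$ is a cone, so $l\vec x\in\sol\subseteq\rsol$, and write $l\vec x=\sum_{i\in S}\lambda_i\vec g_i$ with $\lambda_i>0$ for $i\in S\subseteq[k]$. Define $\vec x^*:=\sum_{i\in S}\lceil\lambda_i\rceil\,\vec g_i$, which is an integer vector and a nonnegative integral combination of the $\vec g_i$, hence lies in $\rsol\cap\mathbb Z^q$; in particular $\ba\,(\vec x^*)^\top=(\vec 0)^\top$ and $\bb\,(\vec x^*)^\top\le(\vec 0)^\top$. The distance bound is then immediate: $|\vec x^*-l\vec x|_\infty=\bigl|\sum_{i\in S}(\lceil\lambda_i\rceil-\lambda_i)\vec g_i\bigr|_\infty\le\sum_{i\in S}|\vec g_i|_\infty<C'$.

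The one step that needs care is checking $\vec x^*\in\sol$, i.e.\ that rounding up preserves every strict inequality. For each row $\bb_j$ of $\bb$ we have $0>\bb_j\,(l\vec x)^\top=\sum_{i\in S}\lambda_i\,\bb_j\vec g_i^\top$ with every summand $\le0$ (since $\vec g_i\in\rsol$), so some $i_j\in S$ satisfies $\bb_j\vec g_{i_j}^\top<0$; as $\lambda_{i_j}>0$ forces $\lceil\lambda_{i_j}\rceil\ge1$, we get $\bb_j\,(\vec x^*)^\top\le\bb_j\vec g_{i_j}^\top<0$, so indeed $\vec x^*\in\sol$. Here it is crucial that $\sol\ne\emptyset$: that hypothesis is exactly what guarantees the chosen decomposition of $l\vec x$ uses generators that jointly witness every strict inequality, and it is also what makes the final check go through. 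Note that $C'$ comes out independent of both $l$ and $\vec x$, as required.
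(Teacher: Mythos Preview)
Your proof is correct and takes a genuinely different route from the paper's. The paper invokes Theorem~1(i) of \citet{Cook86:Sensitivity} as a black box: it first observes that $\sol$ contains \emph{some} integer point (by scaling a rational solution), then replaces each strict constraint $\bb_j\,(\vec x)^\top<0$ by $\bb_j\,(\vec x)^\top\le-1$ and applies the Cook--Gerards--Schrijver--Tardos proximity bound, which directly yields an integer feasible point within a constant of any real feasible point. You instead work from first principles: Minkowski--Weyl gives integral generators of the rational cone $\rsol$, and rounding up the conic coefficients of $l\vec x$ produces the desired integer point; the strict inequalities survive because each is witnessed by some generator in the support. Your argument is more self-contained and makes the dependence of $C'$ on $\ba,\bb$ explicit, at the cost of redoing a special case of what Cook et al.\ prove; the paper's is shorter but requires the reader to look up the cited theorem. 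One small expository quibble: your closing sentence about the role of the hypothesis $\sol\ne\emptyset$ is slightly off---the strict-inequality check in step~7 uses that the \emph{given} $\vec x$ lies in $\sol$, not the abstract nonemptiness of $\sol$; the hypothesis is only needed so that there is something to prove.
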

\begin{proof} We first prove that $\sol$ contains an integer vector. Because  $\sol\ne\emptyset$ and $\ba$ is an integer matrix, $\sol$ contains a rational solution $\vec x'$, which can be chosen from the neighborhood of any vector in $\sol$. It follows that for any $L>0$, $L\cdot\vec x'\in \sol$ because $\ba \cdot (L\cdot\vec x')^\top = (\vec 0)^\top$ and $\bb \cdot (L\cdot\vec x')^\top < (\vec 0)^\top$. It is not hard to see that there exists $L\in\mathbb N$ such that $L\cdot\vec x'\in\mathbb Z_{\ge 0}^q$. 

The claim then follows after Theorem~1(i) in~\cite{Cook86:Sensitivity} by replacing $\bb \cdot (\vec x)^\top < (\vec 0)^\top$ with equivalent constraints $\bb \cdot (L\cdot\vec x')^\top \le (-\vec 1)^\top$.
\end{proof}
Let $C'$ denote the constant in Claim~\ref{claim:existenceintegersolution}. W.l.o.g.~let $\vec x\in\sol$ denote an arbitrary solution such that $\vec x>C'\cdot \vec 1$; otherwise we consider $\vec x+(|\vec x|_\infty +C')\cdot \vec 1\in \sol$.  For any $l\in \mathbb N$, let $\vec x_l^*$ denote the integer vector in $\sol$ that is guaranteed by Claim~\ref{claim:existenceintegersolution};  let $n_l = \vec x_l^*\cdot \vec 1$; and let $\vec \pi^l\in \Pi^{n_l}$ denote an arbitrary vector of $n_l$ distributions, each of which is chosen from $\Pi$.  Let $\vec y^l\in [q]^{n_l}$ denote an arbitrary vector with $\hist(\vec y^l)=\vec x_l^*$.  We note that for any $l\in \mathbb N$, each distribution in $\vec \pi^l$ is above $\epsilon$. Therefore, $\Pr(\vec X_{\vec \pi^l} = \vec y^l) \ge \epsilon^{n_l} = \exp(n_l\log \epsilon)$, which is $\exp(- O(n_l))$.  It is not hard to verify that for any $l\in\mathbb N$, $\vec x_l^*$ is strictly positive and  $\frac{n_{l+1}}{n_l}$ is bounded above by a constant, denoted by $\hat C$. Let $C=\max(n_1,\hat C)$. This proves the tightness of the upper bound when $\sol \ne \emptyset \text{ and }\rsol\cap\conv(\Pi)= \emptyset$.

\vspace{1mm}
\noindent{\bf \boldmath  Tightness of the upper bound when $\sol \ne \emptyset \text{ and }\rsol\cap\conv(\Pi) \ne \emptyset$.} Let $\vec x^*\in \rsol\cap \conv(\Pi)$ and write
$\vec x^* = \sum_{i=1}^k \alpha_i \pi_i$  as a linear combination of vectors in $\Pi$.  The tightness of the upper bound will be proved in the following four steps. Step~1.~For every $l\in \mathbb N$ we prove that there exists an integer vector $\vec y^l\in\sol$ that is $\Theta(\sqrt l)$ away from $l\cdot \vec x^*$. Then, we define $\vec \pi^l = (\vec \pi_1^l,\ldots,\vec \pi_k^l)\in \Pi^{n_l}$, where for all $i\le k$, $\vec \pi_i^l$ is approximately $l\alpha_i$ copies of $\pi_i$, which means that $n_l = \Theta(l)$.  Step~2.~We  identify $\Omega(n^{\frac{(q-1)(k-1)+q-o-1}{2}})$ combinations of values of $\hist(\vec X_{\vec \pi_1^l}),\ldots,\hist(\vec X_{\vec \pi_k^l})$, such that the sum of each such combination is no more than $O(\sqrt n_l)$ away from $\vec y^l$ and  is a solution to $\cons(\vec x)$. Step~3.~We  prove that the probability of each such combination is $\Omega(n_l^{\frac{(1-q)}{2}})$ by applying Lemma~\ref{lem:probclose} below. Finally, we will have $\Pr(\vec X_{\vec \pi}\in \sol)\ge  \Omega(n_l^{\frac{(q-1)(k-1)+q-o-1}{2}})\times  \Omega(n_l^{\frac{(1-q)}{2}})^k = \Omega(n_l^{-\frac{o}{2}})$.

{\bf Step~1.} Let $\vec x^{\#} \in \sol$ and $\vec x^*\in \rsol\cap \conv(\Pi)$. W.l.o.g.~suppose $\vec x^{\#}$ is strictly positive;  otherwise let $l>0$ denote an arbitrary number such that $\vec x^{\#}+l\cdot\vec 1$ is strictly positive, and because $\ba\cdot (\vec 1)^\top=(\vec 0)^\top$ and $\bb\cdot (\vec 1)^\top=(\vec 0)^\top$, we have $\vec x^{\#}+l\cdot\vec 1\in\sol$. Because $\vec x^*\in \rsol\cap \conv(\Pi)$, we can write  $\vec x^* = \sum_{i=1}^k \alpha_i \pi_i$, where for all $i\le k$, $\alpha_i>0$ and $\pi_i\in \Pi$, and $\sum_{i=1}^k \alpha_i = 1$.  We note that $\vec x^*\ge \epsilon\cdot \vec 1$, because $\Pi$ is above $\epsilon$. 

For any sufficiently large $l\in\mathbb N$, we let $\vec y^{l\#}\in \sol$ denote the integer approximation to $\sqrt l \cdot \vec x^{\#}$ that is guaranteed by Claim~\ref{claim:existenceintegersolution}; let  $\vec y^{l*}$ denote the integer approximation to $l\cdot \vec x^*$ that is guaranteed by  Claim~\ref{claim:existenceintegersolution}, where we merge all  rows $\vec B$ in $\bb$ with $\vec B\cdot \vec x^*=0$ to $\ba$ before applying the claim. It follows that $\vec y^{l*}\in \rsol$. Let $C$ denote an arbitrary constant such that $|\vec y^{l\#}-\sqrt l \cdot \vec x^{\#}|_\infty<C$ and $|\vec y^{l*}-l \cdot \vec x^{*}|_\infty<C$.

Let $\vec y^l = \vec y^{l\#} +\vec y^{l*}$ and let $n_l=\vec y^l\cdot \vec 1$. It follows that $\vec y^l$ is a strictly positive integer solution to $\cons(\vec x)$ and each of its element is $\Theta(n_l)$. We now define $\vec \pi^l$ that  is approximately $l$ copies of $\vec x^{*}$. Formally, for each $i\le k-1$, let $\vec \pi_i^l$ denote the vector of $\beta_i=\lfloor l \alpha_i\rfloor$ copies of $\pi_i$. Let $\vec \pi_k^l$ denote the vector of $\beta_k=n_l-\sum_{i=1}^{k-1} \beta_i$ copies of $\pi_k$. It follows that for any $i\le k-1$, $|\beta_i-l \alpha_i| \le 1$, and $|\beta_k-l \alpha_k|\le k + \vec y^{l\#}\cdot \vec 1  = O(\sqrt l) = O(\sqrt n_l)$. Let $\vec \pi^l = (\vec \pi_1^l,\ldots,\vec \pi_k^l)$ denote the vector of $n_l$ distributions. It follows that  $|\expect(\vec X_{\vec \pi^l})-\vec y^l|_\infty = O(\sqrt n_l)$.

{\bf Step~2.} We define a set $\mx_l\subseteq {\mathbb N}^{qk}$ of vectors $(\vec x_1,\ldots,\vec x_k)$ such that $\vec x_i= (x_{i1},\ldots,x_{iq})\in {\mathbb N}^q$ will be used as a target value for $\vec X_{\vec \pi_i^l}$ soon in the proof. 
Let $\mx_l$ denote the set of all integer vectors $(\vec x_1,\ldots,\vec x_k)$ that satisfy the following three conditions.
\begin{itemize}
\item [
(i)] Let $\zeta>0$ be a constant whose value will be specified later. For any $i\le k-1$ and any $j\le q-1$, we require $|x_{ij}-\pi_{ij}^l \beta_i|< \zeta\sqrt n_l$, where $\vec \pi_i^l= (\pi_{i1}^l,\ldots,\pi_{iq}^l)$.  Therefore, $|x_{iq}-\pi_{iq}^l\beta_i|< (q-1)\zeta\sqrt n_l$, because $\sum_{j\le q}x_{ij} = \beta_i$.

\item[(ii)] Let $\rho$ be the least common multiple of the denominators of all entries in $\bd$. For example, in Example~\ref{ex:bordacondorcetD} we have $\rho=2$.   For each $o+2\le j\le q$, we require $|x_{kj}-(y_j^l-\sum_{i=1}^{k-1}x_{ij})|< \zeta\sqrt n_l$ and $\rho$ divides $x_{kj}-(y_j^l-\sum_{i=1}^{k-1}x_{ij})$.

\item[(iii)] $\vec x =\sum_{i=1}^k\vec x_{i}$ is a solution to $(\vec x_{I_0})^\top = \bd\cdot [\vec x_{I_1}, n_l]^\top$.
\end{itemize}

Each element $(\vec x_1,\ldots,\vec x_k)$ of $\mx_l$ can be generated in the following three steps. First, $\vec x_1,\ldots, \vec x_{k-1}$ are arbitrarily chosen according to condition (i) above. Second, $x_{k(o+2)},\ldots,x_{kq}$ are chosen according to condition (ii) above given $\vec x_1,\ldots, \vec x_{k-1}$. This guarantees that for each $o+2\le j\le q$, $\sum_{i=1}^k x_{ij}$ is no more than $O(\sqrt n_l)$ away from $y_j^l$ and $\sum_{i=1}^k x_{ij}-y_{j}^l$ is divisible by $\rho$. Finally, $x_{k1},\ldots,x_{k(o+1)}$ are determined  by condition (iii) together with other components of $\vec x$ that are specified in the first and second step. More precisely, 
$$\left[\begin{array}{c}x_{k1}\\ \vdots \\ x_{k(o+1)}\end{array}\right] = \bd\cdot \left[\begin{array}{c}\sum_{i=1}^kx_{i(o+2)}\\ \vdots \\ \sum_{i=1}^k x_{iq}\\n_l\end{array}\right] -\sum_{i=1}^{k-1}\left[\begin{array}{c}x_{i1}\\ \vdots \\ x_{i(o+1)}\end{array}\right]$$

$x_{k1},\ldots,x_{k(o+1)}$ are integers because for all $o+2\le j\le q$, $\rho$ divides $y_j-\sum_{i=1}^kx_{ij}$ and $\bd\cdot [\vec y_{I_1},n_l]^T = [\vec y_{I_0},n_l]^T$. 
We let $\zeta>0$ be a sufficiently small constant that does not depend on $n_l$, such that the following two conditions hold. 

\begin{itemize}
\item [(1)] Each $(\vec x_1,\ldots,\vec x_k)\in\mx_l$ is strictly positive. This can be achieved by assigning $\zeta$ a small positive value, because $|\sum_{i=1}^k \vec x_i-\vec y^l|_\infty=O(\zeta)\sqrt n_l$ and each element in $\vec y^l$ is strictly positive and is $\Theta(n_l)$. 

\item [(2)] For any $(\vec x_1,\ldots,\vec x_k)\in\mx_l$, we have $\sum_{i=1}^k\vec x_i\in \sol$. Let $\vec x = \sum_{i=1}^k\vec x_i$. By definition we have $\ba\cdot (\vec x)^\top=(\vec 0)^\top$. We also have 
\begin{align*}
&\bb\cdot (\vec x)^\top = \bb\cdot (\vec x-\vec y^l)^\top+\bb\cdot (\vec y^l)^\top= \bb\cdot (\vec x-\vec y^l)^\top+\bb\cdot (\vec y^{l\#} +\vec y^{l*})^\top\\
\le &\bb\cdot (\vec x-\vec y^l)^\top+\bb\cdot (\vec y^{l\#})^\top
\end{align*}
The inequality follows after recalling that $\vec y^{l*}\in \rsol$, which means that $\bb\cdot (y^{l*})^\top\le (\vec 0)^\top$.
Notice that $|\sum_{i=1}^k \vec x_i-\vec y^l|_\infty=O(\zeta)\sqrt n_l$ and each element in $\bb\cdot (\vec y^{l\#})^\top$ is $\Theta(n_l)$. Therefore, when $\zeta$ is sufficiently small we have  $\bb\cdot (\vec x)^\top<(\vec 0)^\top$. This means that when $\zeta>0$ is sufficiently small we have $\vec x\in \sol$.

\end{itemize}
%

For any $l$ with  $\frac{\zeta\sqrt n_l}{\rho}>1$, we have: 

$$|\mx_l|\ge (\frac{1}{\rho})^{q-o-1}\zeta^{(q-1)(k-1)+q-o-1} n_l^{((q-1)(k-1)+q-o-1)/2}=\Omega(n_l^{((q-1)(k-1)+q-o-1)/2}).$$

This is because according to condition (i) there are at least $(\zeta\sqrt n_l)^{(q-1)(k-1)}$ combinations of values for $\vec x_1,\ldots,\vec x_{k-1}$, and according to condition (ii) there are at least $(\frac{\zeta\sqrt n_l}{\rho})^{q-o-1}$ combinations of values for $x_{k(o+2)},\ldots, x_{kq}$. 

{\bf Step~3.}  By the definition of $\mx_l$, there exists a constant $\alpha>0$ that does not depend on $n_l$ such that for each  $(\vec x_1,\ldots,\vec x_k)\in\mx_l$  and each $i\le k$, we have $|\vec x_i-\beta_i\pi_i|_\infty<\alpha\sqrt n_l$.  Also because all agents' preferences are independently generated, we have: $\Pr(\forall i\le k, \vec X_{\vec \pi_i} = \vec x_i) = \prod_{i=1}^k \Pr(\vec X_{\vec \pi_i}= \vec x_i)$.

We note that for each $i\le k$, $\vec X_{\vec \pi_i}$ is the histogram of $\beta_i$ i.i.d.~random variables, each of which is distributed as $\pi_i$. The next lemma implies that for each $i\le k$, $\Pr(\vec X_{\vec \pi_i}= \vec x_i)$ is $\Omega(n_l^{(1-q)/2})$. 

\begin{lem}[\bf Point-wise concentration bound for i.i.d.~Poisson multinomial variables]\label{lem:probclose}  Given $q\in\mathbb N$,  $\epsilon>0$, $\alpha>0$. There exists a constant $\beta>0$ such that for any distribution $\pi$ over $[q]$ that is above $\epsilon$, any $n\in\mathbb N$, and any vector $\vec x\in \mathbb Z_{\ge 0}^q$ with $\vec x\cdot \vec 1 = n$ and $|\vec x - n\pi|_\infty<\alpha\sqrt n$, we have $\Pr(\vec X_\pi=\vec x)>\beta n^{\frac{1-q}{2}}$, where $\vec X_{\pi}$ is the Poisson multinomial variables corresponding to $n$ i.i.d.~random variables, each of which is distributed as $\pi$.
\end{lem}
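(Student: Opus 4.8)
The plan is to prove Lemma~\ref{lem:probclose} as a \emph{uniform local central limit theorem} for the multinomial distribution, carried out directly through Stirling's approximation. Since the $n$ i.i.d.\ draws are each distributed as $\pi$, we have $\vec X_\pi \sim \Multinom(n,\pi)$, so $\Pr(\vec X_\pi = \vec x) = \frac{n!}{\prod_{i=1}^q x_i!}\prod_{i=1}^q \pi_i^{x_i}$ for any admissible $\vec x$ (i.e.\ $\vec x\in\mathbb Z_{\ge 0}^q$ with $\vec x\cdot\vec 1 = n$). Because $\pi_i\ge\epsilon$ and $|x_i - n\pi_i|<\alpha\sqrt n$, for all sufficiently large $n$ every coordinate satisfies $n\epsilon - \alpha\sqrt n < x_i \le n\pi_i + \alpha\sqrt n$, i.e.\ $x_i = \Theta(n)$; the finitely many remaining small values of $n$ are handled at the end via the crude bound $\Pr(\vec X_\pi = \vec x)\ge \epsilon^n$, which for each fixed $n$ is a positive constant, so only the large-$n$ regime requires real work.

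In the large-$n$ regime I would apply Stirling's formula $m! = \sqrt{2\pi m}\,(m/e)^m(1+O(1/m))$ to $n!$ and to each $x_i!$. Since every $x_i = \Theta(n)$, all the $(1+O(1/x_i))$ corrections collapse into one factor $1+O(1/n)$. The factors $e^{-n}$ and $\prod_i e^{-x_i}$ cancel exactly because $\sum_i x_i = n$; the square-root prefactors contribute $\frac{\sqrt{2\pi n}}{(2\pi)^{q/2}\prod_i\sqrt{x_i}} = \Theta(n^{(1-q)/2})$, again using $x_i = \Theta(n)$. What remains is the term $\prod_i (n\pi_i/x_i)^{x_i} = \exp\!\bigl(-\sum_i x_i\log(x_i/(n\pi_i))\bigr)$, and the heart of the argument is to show this is bounded below by a positive constant depending only on $q,\epsilon,\alpha$.

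For that, write $x_i = n\pi_i + \delta_i$ with $|\delta_i|<\alpha\sqrt n$, so $x_i/(n\pi_i) = 1 + \delta_i/(n\pi_i)$ with $|\delta_i/(n\pi_i)| = O(n^{-1/2})$ uniformly over $\pi$ above $\epsilon$. A second-order Taylor expansion of $f(x) = x\log(x/(n\pi_i))$ about $x = n\pi_i$ (where $f(n\pi_i)=0$, $f'(n\pi_i)=1$, $f''(n\pi_i)=1/(n\pi_i)$, and $|f'''|=O(n^{-2})$ on the relevant interval) gives $x_i\log(x_i/(n\pi_i)) = \delta_i + \frac{\delta_i^2}{2n\pi_i} + O(n^{-1/2})$ with a uniform remainder. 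Summing over $i$: the linear terms cancel because $\sum_i\delta_i = \sum_i(x_i - n\pi_i) = n - n = 0$; the quadratic terms are nonnegative and at most $\sum_i \frac{\alpha^2 n}{2n\epsilon} = \frac{q\alpha^2}{2\epsilon}$; and the remainders sum to $O(n^{-1/2})$. Hence $\sum_i x_i\log(x_i/(n\pi_i)) \le \frac{q\alpha^2}{2\epsilon} + o(1)$, so the exponential term is $\ge \exp(-\frac{q\alpha^2}{2\epsilon} - o(1))$, a positive constant. Multiplying the three pieces --- prefactor $\Theta(n^{(1-q)/2})$, exponential term $\Theta(1)$, Stirling error $1+O(1/n)$ --- yields $\Pr(\vec X_\pi=\vec x) \ge \beta' n^{(1-q)/2}$ for all large $n$; taking $\beta$ to be the minimum of $\beta'$ and the (finitely many) small-$n$ constants finishes the proof.

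The step I expect to be most delicate is ensuring that every error term --- the Stirling correction and, above all, the Taylor remainder in $x_i\log(x_i/(n\pi_i))$ --- is $O(n^{-1/2})$ (or $O(1/n)$) \emph{uniformly} over all $\pi$ above $\epsilon$ and all admissible $\vec x$, not merely for a fixed $\pi$. This is exactly what the hypotheses $\pi_i\ge\epsilon$ and $|\vec x - n\pi|_\infty<\alpha\sqrt n$ provide: they pin $x_i/(n\pi_i)$ inside a fixed shrinking neighborhood of $1$ and keep all coordinates of order $n$, so the constants hidden in the expansions can be chosen to depend only on $q,\epsilon,\alpha$. An alternative route, closer in spirit to the proof of Lemma~\ref{lem:upperboundp}, is induction on $q$: condition on the number of draws in category $q$, which is $\mathrm{Bin}(n,\pi_q)$ and takes the value $x_q$ with probability $\Theta(n^{-1/2})$ by a binomial local limit theorem (since $|x_q-n\pi_q|=O(\sqrt n)$), and then apply the inductive hypothesis to the remaining $q-1$ categories, which conditionally form a multinomial on $n-x_q = \Theta(n)$ trials with probabilities still bounded below by a constant and with $\vec x_{-q}$ within $O(\sqrt n)$ of the conditional mean; the base case $q=1$ is trivial. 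Either way the core difficulty, and the role of the hypotheses, is the same.
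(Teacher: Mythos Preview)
Your proposal is correct and follows essentially the same route as the paper: write the multinomial probability explicitly, apply Stirling's approximation to extract the $n^{(1-q)/2}$ prefactor, and then show the remaining factor $\prod_i (n\pi_i/x_i)^{x_i}$ is bounded below by a positive constant depending only on $q,\epsilon,\alpha$. The only difference is in that last step: you Taylor-expand $x_i\log(x_i/(n\pi_i))$ and use $\sum_i\delta_i=0$ to kill the linear term, whereas the paper factorizes $\prod_i(n\pi_i/x_i)^{x_i}=\prod_i(n\pi_i/x_i)^{n\pi_i}\cdot\prod_i(n\pi_i/x_i)^{d_i}$, observes the first factor equals $\exp(nD_{\text{KL}}(\pi\|\vec x/n))\ge 1$, and bounds the second factor directly via $(1-\frac{\alpha}{\pi_{\min}\sqrt n})^{\alpha\sqrt n}\to e^{-\alpha^2/\pi_{\min}}$. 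Your Taylor route is the textbook local-CLT computation and makes the uniformity in $\pi$ transparent; the paper's KL factorization is slicker but accomplishes the same bound --- either way the argument is the same Stirling-plus-constant-exponential skeleton.
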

\begin{proof} Let $\vec x = (x_1,\ldots,x_q)$, and let $\vec d = \vec x - n\pi$. We have: 
\begin{align} 
& \Pr(\vec X_\pi=\vec x) = {n \choose x_1} {n - x_1\choose x_2} \cdots {x_{q-1}+x_q\choose x_{q-1}}\prod_{i=1}^q\pi_i^{x_i} = \frac{n!}{\prod_{i=1}^q x_i!}\prod_{i=1}^q\pi_i^{x_i}\notag\\
\ge & \frac{\frac{1}{\sqrt {2 {\lambda} n}}(n/e)^n}{\prod_{i=1}^q (\frac{1}{\sqrt{2 \lambda x_i}} e^{1/12}(x_i/e)^{x_i})}\prod_{i=1}^q\pi_i^{x_i} = Cn^{\frac{1-q}{2}}\frac{n^n}{\prod_{i=1}^q x_i^{x_i}}\prod_{i=1}^q \pi_i^{x_i}\label{step:stirling}\\
=&Cn^{\frac{1-q}{2}} \prod_{i=1}^q\left(\frac{n\pi_i}{x_i}\right)^{n\pi_i}\prod_{i=1}^q \left(\frac{n\pi_i}{x_i}\right)^{x_i-n\pi_i} \ge Cn^{\frac{1-q}{2}}\prod_{i=1}^q \left(\frac{n\pi_i}{x_i}\right)^{d_i}\label{step:entropy}
\end{align}
Inequality (\ref{step:stirling}) is due to Robbins' Stirling approximation~\cite{Robbins1955:Remark}, where $\lambda$ denotes ratio of circumference to diameter ($\pi$ has already been used to denote a distribution). $C$ is a constant that does not depend on $n$. Inequality (\ref{step:entropy}) is because $\prod_{i=1}^q\left(\frac{n\pi_i}{x_i}\right)^{n\pi_i} = \exp\left(nD_\text{KL}(\pi\|\frac{\vec x}{n})\right)$, where $D_\text{KL}(\pi\|\frac{\vec x}{n})$ is the KL divergence of $\frac{\vec x}{n}$ from $\pi$, which is non-negative, meaning that  $\prod_{i=1}^q\left(\frac{n\pi_i}{x_i}\right)^{n\pi_i}\ge 1$. Let $\pi_{\min} = \min_{i} \pi_i\ge \epsilon.$
\begin{align*}
&\prod_{i=1}^q \left(\frac{n\pi_i}{x_i}\right)^{d_i} = \prod_{i=1}^q \frac{1}{(1+\frac{d_i}{n\pi_i})^{d_i}} \ge \prod_{i\le q: d_i<0} \frac{1}{(1+\frac{d_i}{n\pi_i})^{d_i}}  \ge  
\left((1-\frac{\alpha\sqrt n}{n\pi_{\min}} )^{\alpha\sqrt n}\right)^{q}\\
=&\left((1-\frac{\alpha}{\pi_{\min}\sqrt n} )^{\frac{\pi_{\min}\sqrt n}{\alpha}}\right)^{q\alpha^2/\pi_{\min}}\end{align*}
As $\lim_{x\ra\infty}(1-\frac{1}{x})^x = \frac{1}{e}$, when $n$ is large enough we have $\prod_{i=1}^q \left(\frac{n\pi_i}{x_i}\right)^{d_i} > (\frac{1}{2e})^{q\alpha^2/\pi_{\min}}$ which is a constant that does not depend on $n$. This proves that $ \Pr(\vec X_\pi=\vec x)  = \Omega(n^{\frac{1-q}{2}})$.
\end{proof}

By Lemma~\ref{lem:probclose}, we have $\Pr(\vec X_{\vec \pi_i}= \vec x_i)=\Omega(\beta_i^{(1-q)/2})= \Omega(n_l^{(1-q)/2})$.

{\bf Finally,} we have $\Pr(\vec X_{\vec \pi}\in \sol)\ge  \Omega(n_l^{\frac{(q-1)(k-1)+q-o-1}{2}})\times  \Omega(n_l^{\frac{(1-q)}{2}})^k = \Omega(n_l^{-\frac{o}{2}})$. Note that we require $l$ to be sufficiently large such that  $\frac{\zeta\sqrt n_l}{\rho}>1$ in order to guarantee that $|\mx_l|$ is large enough.  Because $n_l = \Theta(l)$, there exists a constant $\hat C$ such that for any $l\in\mathbb N$, $\frac{n_{l+1}}{n_l}<\hat C$. Let  $C=\max(n_{L+1},\hat C)$. This proves the tightness of the upper bound when $\sol \ne \emptyset \text{ and }\rsol\cap\conv(\Pi) \ne \emptyset$.
\end{proof}

\subsection{ Examples}
\label{appendix:maintech}

\begin{ex}\label{ex:bordacondorcetm=3} Let $m=3$ and $\ma = \{1,2,3\}$. For any profile $P$, let $x_{123}$ denote the number of  $1\succ 2\succ 3$ in $P$. The event ``alternatives $1$ and $2$ are co-winners under Borda as well as the only two weak Condorcet winner'' can be represented by the following constraints.
\begin{align}
2(x_{123}+x_{132})+x_{213}+x_{312}&=2(x_{213}+x_{231})+x_{123}+x_{321}\label{eq:borda12}\\
x_{123}+x_{132}+x_{312}&=x_{213}+x_{231}+x_{321}\label{eq:pairwise12}\\
2(x_{312}+x_{321})+x_{132}+x_{231}&<2(x_{123}+x_{132})+x_{213}+x_{312}\label{eq:borda13}\\
x_{312}+x_{321}+x_{231}&< x_{123}+x_{132}+x_{213}&\label{eq:pairwise13}\\
x_{312}+x_{321}+x_{132}&<x_{231}+x_{213}+x_{123}&\label{eq:pairwise23}
\end{align}
Equation~(\ref{eq:borda12}) states that the Borda scores of $1$ and $2$ are the same; equation (\ref{eq:pairwise12}) states that $1$ and $2$ are tied in their head-to-head competition; inequality (\ref{eq:borda13}) states that the Borda sore of $1$ is strictly higher than the Borda score of $3$; and inequalities (\ref{eq:pairwise13}) and (\ref{eq:pairwise23}) require that $1$ and $2$ beat $3$ in their head-to-head competitions, respectively. 
\end{ex}

\begin{ex}\label{ex:bordacondorcetD} We show how to obtain $\bd$ using the setting in Example~\ref{ex:bordacondorcetm=3}. Let $\vec x = [x_{123}, x_{132},x_{213},x_{231},x_{312},x_{321}]$. We have $\ba = \left[\begin{array}{r r r r r r}
1& 2 &-1&-2&1&-1\\1& 1 &-1&-1&1&-1\end{array}\right]$ and $\ba'= \left[\begin{array}{c}
\ba\\\vec 1\end{array}\right]$. The reduced echelon form of $\ba'$ and its corresponding equations can be calculated as follows.
\begin{align*}
&\left[\begin{array}{r r r r r r|c}
1& 2 &-1&-2&1&-1&0\\1& 1 &-1&-1&1&-1&0\\
1&1&1&1&1&1& n\end{array}\right]\xrightarrow{R1; R2-R1;R3-R1}\left[\begin{array}{r r r r r r|c}
1& 2 &-1&-2&1&-1&0\\0& -1 &0&1&0&0&0\\
0&-1&2&3&0&2& n\end{array}\right]\\
&\xrightarrow{R1+2R2;-R2;R3-R2} \left[\begin{array}{r r r r r r|c}
1& 0 &-1&0&1&-1&0\\0& 1 &0&-1&0&0&0\\
0&0&2&2&0&2& n\end{array}\right]\\
&\xrightarrow{R1+R2/2;R2;R3/2} \left[\begin{array}{r r r r r r|c}
1& 0 &0&1&1&0&\frac{n}{2}\\0& 1 &0&-1&0&0&0\\
0&0&1&1&0&1& \frac{n}{2}\end{array}\right]
\end{align*} The text above each arrow represents matrix operations. $R1,R2,R3$ represents the first, second, and the third row vector of the matrix on the left. For example, ``$R1; R2-R1;R3-R1$'' represents that the in the right matrix, the first row is R1 of the left matrix; the second row is obtained by subtracting $R1$ from $R2$; and the third row is obtained from subtracting $R1$ from $R3$. Let $I_0= [x_{123}, x_{132},x_{213}]$ and $I_1=[x_{231},x_{312},x_{321}]$. We have:
$$\bd =  \left[\begin{array}{r r r r  }
-1& -1 &0&\frac{1}{2}\\1& 0&0&0\\
-1&0&-1 & \frac{1}{2}\end{array}\right]\text{ and }\left[\begin{array}{c}x_{123}\\ x_{132}\\x_{213}\end{array}\right] = \bd \times \left[\begin{array}{c}x_{231}\\x_{312}\\x_{321}\\n\end{array}\right] $$\end{ex}

\section{Appendix: Proof of Theorem~\ref{thm:smoothedCondorcet}}
\label{sec:smoothedCondorcetproof}

\appThm{thm:smoothedCondorcet}{{\bf (Smoothed likelihood of Codorcet's paradox).}
 Let $\mm= (\Theta,\ml(\ma),\Pi)$ be a strictly positive and closed single-agent preference model. 
 
\vspace{1mm}
\noindent{\bf Smoothed avoidance of Condorcet's paradox.} Suppose for all $\pi\in\conv(\Pi)$,  $\umg(\pi)$ does not contain a weak Condorcet cycle.  Then, for any $n\in\mathbb N$, we have:

$\hfill \inf_{\vec\pi\in \Pi^n}\expect_{P\sim \vec\pi} \ncc(P)= 1- \exp(-\Omega(n))\hfill$

\vspace{1mm}
\noindent{\bf Smoothed Condorcet's paradox.} Suppose there exists $\pi\in\conv(\Pi)$ such that  $\umg(\pi)$ contains a weak Condorcet cycle.  Then, there exist infinitely many $n\in\mathbb N$ such that:

$\hfill \inf_{\vec\pi\in \Pi^n}\expect_{P\sim \vec\pi} \ncc(P)= 1- \Omega(1)\hfill$
}
\begin{proof}
The  theorem is proved by applying Lemma~\ref{lem:maintech} multiple times, where $\cons$ represents the profiles whose UMG contains a specific weak Codorcet cycle. Formally, we have the following definitions.
\begin{dfn}[\bf Variables and pairwise constraints]\label{dfn:varcons} For any linear order $R\in\ml(\ma)$, let $x_R$ be a variable that represents the number of times $R$ occurs in a profile. Let $\mx_\ma=\{x_R:R\in\ml(\ma)\}$ and let $\vec x_\ma$ denote the vector of elements of $\mx_\ma$ w.r.t.~a fixed order. For any pair of different alternatives $a,b$, let $\pair_{a,b}(\vec x_\ma)$ denote the linear combination of variables in $\mx_\ma$, where for any $R\in\ml(\ma)$, the coefficient of $x_R$ is $1$ if $a\succ_R b$; otherwise the coefficient is $-1$. 
\end{dfn}
For any profile $P$, $\pair_{a,b}(\hist(P))$ is the weight on the $a\succ b$ edge in $\wmg(P)$. It is not hard to check that $\pair_{a,b}(\vec 1)=0$. 

\begin{dfn}\label{dfn:cg}
For any unweighted directed graph $G$ over $\ma$, we define $\pcons{G}$ to be the constraints as in Definition~\ref{dfn:cab} that is based on matrices $\pba{G}$ that represents $\{\pair_{a,b}(\vec x_\ma)=0: (a,b)\not\in G \text{ and } (b,a)\not\in G\}$ and $\pbb{G}$ that represents $\{\pair_{b,a}(\vec x_\ma)<0: (a,b)\in G\}$. Let $\ppoly{G}$ and $\ppolyz{G}$ denote the solutions to $\pcons{G}$ and its relaxation $\overline{C}_G$ as in Definition~\ref{dfn:cab}, respectively. 
\end{dfn}
We immediate have the following claim.
\begin{claim}\label{claim:solumg}
For any profile $P$ and any unweighted directed graph $G$ over $\ma$, $\hist(P)\in \ppoly{G}$ if and only if $G=\umg(P)$; $\hist(P)\in {\ppolyz{G}}$ if and only if $\umg(P)$ is a subgraph of $G$. Moreover, $\rank(\pba{G})$ equals to the number of ties in $G$.
\end{claim}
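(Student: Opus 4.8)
\noindent\emph{Proof plan.} The plan is to prove all three assertions by unwinding Definition~\ref{dfn:cg} into statements about the pairwise margins of $P$, using the identity recorded after Definition~\ref{dfn:varcons} that $\pair_{a,b}(\hist(P))=w_P(a,b)=P[a\succ b]-P[b\succ a]$. The only facts I need are three elementary equivalences: $\pair_{a,b}(\hist(P))=0$ iff $a$ and $b$ are tied in $P$, i.e.\ iff $\umg(P)$ has no edge between $a$ and $b$; $\pair_{b,a}(\hist(P))<0$ iff $w_P(a,b)>0$, i.e.\ iff $a\ra b$ in $\umg(P)$; and $\pair_{b,a}(\hist(P))\le 0$ iff $b\ra a\notin\umg(P)$. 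Recalling that, like every $\umg$, the graphs $G$ in question have at most one directed edge between each pair of vertices, these equivalences reduce everything to bookkeeping.

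Plugging the first two equivalences into Definition~\ref{dfn:cg}, membership $\hist(P)\in\ppoly{G}$ holds exactly when every non-adjacent pair of $G$ is a tie of $\umg(P)$ and every edge $a\ra b$ of $G$ is an edge of $\umg(P)$. A pair-by-pair comparison then forces $\umg(P)=G$: each pair is either non-adjacent in $G$, hence a tie (non-edge) in $\umg(P)$, or carries a single edge $a\ra b$ of $G$, hence that same edge in $\umg(P)$; since $\umg(P)$ cannot in addition carry the reverse edge, $\umg(P)$ agrees with $G$ on every pair. The converse is immediate, since $\umg(P)=G$ satisfies every constraint of $\pcons{G}$.

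For the relaxed system, using the first and third equivalences, $\hist(P)\in\ppolyz{G}$ holds exactly when every non-adjacent pair of $G$ is a tie of $\umg(P)$ and, for every edge $a\ra b$ of $G$, $b\ra a\notin\umg(P)$. I would then check both directions of ``$\umg(P)$ is a subgraph of $G$''. If $\hist(P)\in\ppolyz{G}$ and $u\ra v$ is an edge of $\umg(P)$, then $\{u,v\}$ cannot be non-adjacent in $G$ (that would force a tie) and cannot carry the edge $v\ra u$ of $G$ (the relaxed constraint for $v\ra u$ forbids $u\ra v$ in $\umg(P)$), so $u\ra v\in G$; hence $\umg(P)\subseteq G$. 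Conversely, if $\umg(P)\subseteq G$, then a non-adjacent pair of $G$ is a tie of $\umg(P)$, and an edge $a\ra b$ of $G$ cannot have $b\ra a\in\umg(P)$, since that would put $b\ra a$ (alongside $a\ra b$) in $G$; so all relaxed constraints hold.

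Finally, the matrix $\pba{G}$ has exactly one row per tie $\{a,b\}$ of $G$, namely the coefficient vector of the linear form $\pair_{a,b}(\vec x_\ma)$, so the rank claim reduces to showing that the family $\{\pair_{a,b}(\vec x_\ma):a\ne b\}$ is linearly independent in ${\mathbb R}^{m!}$. For a fixed pair $\{a,b\}$, choose rankings $R,R'$ differing only by an adjacent transposition of $a$ and $b$; then every $\pair_{c,d}$ with $\{c,d\}\ne\{a,b\}$ assigns equal coefficients to $x_R$ and $x_{R'}$, while $\pair_{a,b}$ assigns $+1$ and $-1$, so evaluating any vanishing linear combination of the $\pair$'s at $e_R-e_{R'}$ isolates and kills the coefficient of $\pair_{a,b}$; ranging over all pairs gives independence, whence $\rank(\pba{G})$ is the number of ties of $G$. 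I do not expect a genuine obstacle here, as the whole claim is a translation of the definitions; the only points needing care are that the relaxed constraint bounds the \emph{reverse} margin (so the ``subgraph'' direction argues via the missing reverse edge, not the present forward edge) and that the independence argument uses only an adjacent swap, leaving all other pairwise orders untouched.
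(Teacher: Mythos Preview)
Your proposal is correct and follows essentially the same approach as the paper: parts (1) and (2) are unwound from Definition~\ref{dfn:cg} (the paper simply says they ``follow after their definitions''), and for the rank statement both you and the paper exhibit, for each pair $\{a,b\}$, a vector on which $\pair_{a,b}$ is nonzero while every other $\pair_{c,d}$ vanishes. The only cosmetic difference is that the paper uses the histogram of the two-vote profile $\{a\succ b\succ\text{others},\ \text{rev(others)}\succ a\succ b\}$, whereas you use $e_R-e_{R'}$ for an adjacent transposition of $a$ and $b$; both isolate $\pair_{a,b}$ in the same way.
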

\begin{proof}
The necessary and sufficient conditions for $\hist(P)\in \ppoly{G}$ and $\hist(P)\in {\ppolyz{G}}$ follow after their definitions. Because the number of equations in $\pba{G}$ equals to the number of ties in $G$, it suffices to prove that the equations in $\pba{G}$ are independent. This is true because for any pair of alternatives $(a,b)$, the UMG of the following profile of two rankings only contains one edge $[a\ra b$: $a\succ b\succ\text{others}]$ and $[\text{rev}\succ a\succ b]$,  where $\text{rev}$ is the reverse order of other alternatives.
\end{proof}
\noindent{\bf Proof of the smoothed avoidance part.} Let $\mc$ denote the set of all $\pcons{G}$, where $G$ is an unweighted directed graph without weak Condorcet cycles.  We have the following observations. 
\begin{itemize}
\item[(1)] For any profile $P$, $\ncc(P)=0$ if and only if there exists $G\in\mc$ such that $\hist(P)\in \ppoly{G}$. To see this, when $\ncc(P)=0$, we have $\hist(P)\in \ppoly{\umg(P)}$ and $\umg(P)\in\mc$; and vice versa, if $\hist(P)\in \ppoly{G}$ for some $G\in\mc$, then $\ncc(P)=0$. 
\item [(2)] For any graph $G$, $\ppoly{G}\ne\emptyset$, which follows after McGarvey's theorem~\cite{McGarvey53:Theorem}. 
\item [(3)] The total number of UMGs over $\ma$ only depends on $m$ not on $n$, which means that $|\mc|$ can be seen as a constant that does not depend on $n$. 
\end{itemize}

The three observations imply that for any distribution of $P$, we have 
$$\Pr(\ncc(P)=0)\le \sum_{G\in \mc}\Pr(\hist(P)\in \ppoly{G})$$
Therefore, based on observation (3) above, to prove the smoothed avoidance part of the theorem, it suffices to prove that for any $n$, any $\vec\pi\in\Pi^n$, and any $G\in\mc$, we have
$$\Pr_{P\sim \vec\pi}(\hist(P)\in \ppoly{G})=\exp(-\Omega(n))$$ 

This follows after the exponential case in Lemma~\ref{lem:maintech} applied to $\pcons{G}$. To see this, we first note that $\ppoly{G}\ne\emptyset$ according to observation (2) above. Also for each $\pi\in\Pi$, because $\umg(\pi)$ does not contain a weak Condorcet cycle, $\umg(\pi)$ is not a subgraph of $G$, which contains a Condorcet cycle.  Therefore, $\hist(\pi)\not \in {\ppolyz{G}}$ due to Claim~\ref{claim:solumg}, which means that  $\ppolyz{G}\cap \conv(\pi)= \emptyset$.

\noindent{\bf Proof of the  smoothed paradox part.} Let $\pi\in\conv(\Pi)$ denote any distribution such that $\umg(\pi)$ contains a weak Condorcet cycle. Let $G$ denote an arbitrary supergraph of $\umg(\pi)$ that contains a Condorcet cycle, e.g.~by completing the weak Condorcet cycle in  $\umg(\pi)$. The  weak smoothed paradox part is proved by applying the tightness of the polynomial case  in Lemma~\ref{lem:maintech}  to $\ppoly{G}$ following a similar argument with the proof for the smoothed avoidance part. 
%
\end{proof}

\section{Additional Preliminaries and Examples of Group Theory}
\label{sec:prelimgroup}
After this paper is accepted by NeurIPS, we discovered that a 2015 working paper by~\citet{Dogan2015:Anonymous} has already used similar notation and ideals to provide an alternative proof for Moulin's elegant characterization of ANR impossibility~\cite[Problem~1]{Moulin1983:The-Strategy} as well as obtaining a new characterization on ANR impossibility for social welfare functions (that outputs a ranking over $\ma$). Their main results are quite different from the smoothed ANR impossibility in this paper. 

In Appendix~\ref{sec:moulin}, we provide an alternative proof to \citet{Moulin1983:The-Strategy}'s characterization of ANR impossibility. At a high level the proof idea is similar to that by~\citet{Dogan2015:Anonymous} though the details appear different as far as we can tell. We do not claim the group theoretic approach nor the proof in Appendix~\ref{sec:moulin} contributions of this paper, but still include Appendix~\ref{sec:moulin} for information and convenience in case a reader is curious about the proof using notation in this paper.

The {\em symmetric group} over $\ma=[m]$, denoted by $\sgroup$, is the set of all permutations over $\ma$. A permutation $\sigma$ that maps each $a\in \ma$ to $\sigma(a)$ can be represented in two ways.
\begin{itemize}
\item {\em Two-line form:} $\sigma$ is represented by a $2\times m$ matrix, where the first row is $(1,2,\ldots, m)$ and the second row is $(\sigma(1),\sigma(2),\ldots, \sigma(m))$. 
\item {\em Cycle form:} $\sigma$ is represented by non-overlapping cycles over $\ma$, where each cycle $(a_1,\cdots, a_k)$ represent $a_{i+1}=\sigma(a_i)$ for all $i\le k-1$, and with $a_{1}=\sigma(a_k)$. 
\end{itemize}
It follows that any cycle in  the cycle form $(a_1,\cdots, a_k)$ is equivalent to $(a_2,\cdots, a_k, a_1)$. 
Following the convention, in the cycle form $a_1$ is the smallest elements in the cycle. For example, all permutations in $S_3$ are represented in two-line form and cycle form respective in the Table~\ref{tab:s3}.
\begin{table}[htp]
\centering
\caption{${\cal S}_{[3]}$ where $m=3$.\label{tab:s3}}
\begin{tabular}{|@{\small\ }c|@{} c@{}|@{}c@{}|@{}c@{}|@{}c@{}|@{}c@{}|@{}c@{}|}
\hline
Two-line&$\left(\begin{array}{ccc} 1& 2 & 3 \\ 1 & 2 & 3
\end{array}\right)$ & $\left(\begin{array}{ccc} 1& 2 & 3 \\ 2 &1 & 3
\end{array}\right)$& $\left(\begin{array}{ccc} 1& 2 & 3 \\ 1 & 3 & 2
\end{array}\right)$& $\left(\begin{array}{ccc} 1& 2 & 3 \\ 3 & 2 & 1
\end{array}\right)$& $\left(\begin{array}{ccc} 1& 2 & 3 \\ 2 & 3 & 1
\end{array}\right)$& $\left(\begin{array}{ccc} 1& 2 & 3 \\ 3 & 1 & 2
\end{array}\right)$ \\
\hline
Cycle& $()$ or $\id$& $(1,2)$& $(2,3)$& $(1,3)$& $(1,2,3)$& $(1,3,2)$ \\
\hline
\end{tabular}
\end{table}

A {\em permutation group} $G$ is a subgroup of $\sgroup$ where the identity element is the identity permutation $\id$, and for any $\sigma, \eta\in \sgroup$, $\sigma\circ \eta$ is the permutation where for any linear order $V\in \ml(\ma)$, $(\sigma\circ \eta) (V) = \sigma( \eta(V))$. For example, $(1,2)\circ (2,3) = (1,2,3)$, because $1\succ 2\succ 3$ is first mapped to $1\succ 3\succ 2$ by permutation $(2,3)$, then to $2\succ 3\succ 1$ by permutation $(1,2)$. There are six subgroups of ${\cal S}_{[3]}$: $\{\id\}$, $\{\id, (1,2)\}$, $\{\id, (2,3)\}$, $\{\id, (1,3)\}$, $\{\id, (1,2,3), (1,3,2)\}$, and  ${\cal S}_{[3]}$.

\begin{ex}\label{ex:permutationgroups}
 Table~\ref{tab:perm} shows all permutation groups over $\ms_{[3]}$ as the result of $\perm(P)$ for some profiles.
\begin{table}[htp]
\centering
\caption{\small Examples of $\perm(P)$, where $123$ represents $1\succ 2\succ 3$.  \label{tab:perm}}
\begin{tabular}{|@{\hspace{0.3mm}}c@{\hspace{0.3mm}}|@{\hspace{0.3mm}}c@{\hspace{0.3mm}}|@{\hspace{0.3mm}}c@{\hspace{0.3mm}}|@{\hspace{0.3mm}}c@{\hspace{0.3mm}}|@{\hspace{0.3mm}}c@{\hspace{0.3mm}}|@{\hspace{0.3mm}}c@{\hspace{0.3mm}}|@{\hspace{0.3mm}}c@{\hspace{0.3mm}}|@{\hspace{0.3mm}}c@{\hspace{0.3mm}}|}
\hline
$P$ & $P[123]$ & $P[132]$ & $P[213]$ & $P[2 3 1]$ & $P[312]$ & $P[321]$ & $\perm(P)$\\
\hline
$P_1$ & $1$ & $2$ & $2$ & $2$ & $2$ & $2$ & $\{\id\}$ \\
\hline
$P_2$ & $3$ & $5$ & $3$ & $5$ & $4$ & $4$ & $\{\id, (1,2)\}$ \\
\hline
$P_3$ & $3$ & $5$ & $4$ & $4$ & $5$ & $3$ & $\{\id, (1,3)\}$ \\
\hline
$P_4$ & $3$ & $3$ & $5$ & $4$ & $5$ & $4$ & $\{\id, (2,3)\}$ \\
\hline
$P_5$ & $3$ & $5$ & $5$ & $3$ & $3$ & $5$ & $\{\id, (1,2,3), (1,3,2)\}$ \\
\hline
$P_6$ & $1$ & $1$ & $1$ & $1$ & $1$ & $1$ & $\ms_{[3]}$ \\
\hline
\end{tabular}
\end{table}

$P_1$ in the table  consists of one ranking for $1\succ2\succ3$ and two rankings for each of the remaining five linear orders. $\perm(P_1)$ only contains $\id$ because if it contains any other permutation $\sigma$, then we must have $P[1\succ 2\succ 3]=P[\sigma(1\succ 2\succ 3)]$, which is impossible. $(12)\in \perm(P_2)$ because $P_2[1\succ 2\succ 3] = P_2[2\succ 1\succ 3]=3$, $P_2[1\succ 3\succ 2] = P_2[2\succ 3\succ 1]=5$, and $P_2[3\succ 1\succ 2] = P_2[3\succ 2\succ 1]=4$. It is not hard to verify that no other permutations except $\id$ belong to $\perm(P_2)$.
\end{ex}

\section{Proof of Theorem~\ref{thm:smoothedANR}}
\label{sec:proofsmoothedANR}

\appThm{thm:smoothedANR}{\bf (Smoothed ANR (im)possibility theorem).}{ Let $\mm= (\Theta,\ml(\ma),\Pi)$ be a strictly positive and closed single-agent preference model.  Let $\calU_m^\Pi= \{U\in \calU_m: \exists \pi\in \conv(\Pi), \forall \sigma\in U, \sigma(\pi)=\pi\}$, and when $\calU_m^\Pi\ne \emptyset$, let $l_{\min}= \min_{U\in  \calU_m^\Pi}|U|$ and $l_\Pi= \frac{l_{\min}-1}{l_{\min}}m!$.

\vspace{1mm}
\noindent{\bf Smoothed possibility.} There exist an anonymous voting rule $r_\text{ano}$ and a neutral voting rule $r_\text{neu}$ such that for any $r\in \{r_\text{ano}, r_\text{neu}\}$, any $n$, and any $\vec\pi\in \Pi^n$, we have:

$\hfill\Pr_{P\sim\vec\pi}(\rel_\text{ano}(r, P)+\rel_\text{neu}(r, P)<2)=\left\{\begin{array}{ll}O(n^{-\frac{l_\Pi}{2}})&\text{if } \calU_m^\Pi\ne\emptyset \\  \exp(-\Omega(n))&\text{otherwise}\end{array}\right.\hfill$

\vspace{1mm}
\noindent{\bf Smoothed impossibility.} For any voting rule $r$, there exist infinitely many $n\in\mathbb N$ such that:

$\hfill\sup_{\vec\pi\in\Pi^n}\Pr_{P\sim\vec\pi}(\rel_\text{ano}(r, P)+\rel_\text{neu}(r, P)<2)=\left\{\begin{array}{ll}\Omega(n^{-\frac{l_\Pi}{2}})&\text{if } \calU_m^\Pi\ne\emptyset \\  \exp(-O(n))&\text{otherwise}\end{array}\right.\hfill$
}

\begin{proof} The proof is done in three steps.

{\bf Step 1.~Identifying the source of the ANR impossibility}. 
\begin{dfn}\label{dfn:tmn}
For any $n\in\mathbb N$ and any $m\ge 2$, let $\calT_{m,n}$ denote the set of $n$-profiles $P$ such that $\perm(P)\in {\calU_m}$. That is,
$$\calT_{m,n} = \{P\in\ml(\ma)^n: \perm(P)\text{ covers }\ma\}$$
\end{dfn}
The following lemma states that profiles in $\calT_{m,n}$ are the intrinsic source of the ANR impossibility.

\begin{lem}\label{lem:sumdegree} For any voting rule $r$, any $n\in\mathbb N$, any $m\ge 2$, and any $P\in \calT_{m,n}$, we have:
\begin{equation}
\label{equ:anoneu} 
\rel_\text{ano}(r,P)+\rel_\text{neu}(r,P)\le 1
\end{equation}
\end{lem}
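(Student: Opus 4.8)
The plan is to localize the textbook ANR contradiction to the single profile $P$, exploiting that by hypothesis $\perm(P)$ is large enough to ``cover'' $\ma$. Assume toward a contradiction that $\rel_\text{ano}(r,P)=\rel_\text{neu}(r,P)=1$, and set $a=r(P)\in\ma$. First I would invoke the hypothesis $P\in\calT_{m,n}$, i.e.\ $\perm(P)\in\calU_m$: since $\perm(P)$ covers $\ma$ it in particular covers $a$, so there exists some $\sigma\in\perm(P)$ with $\sigma(a)\neq a$.

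Next I would translate $\sigma\in\perm(P)$ into a statement about profiles rather than histograms. By definition $\perm(P)=\{\sigma\in\sgroup:\hist(P)=\sigma(\hist(P))\}$, and the action of $\sigma$ on $\hist(P)$ is precisely the histogram of the profile $\sigma(P)$ obtained by relabelling alternatives inside every vote; hence $\hist(\sigma(P))=\hist(P)$. Now I combine the two axiom conditions evaluated at $P$. Applying $\rel_\text{ano}(r,P)=1$ to the profile $\sigma(P)$, which has the same histogram as $P$, yields $r(\sigma(P))=r(P)=a$. Applying $\rel_\text{neu}(r,P)=1$ to the permutation $\sigma$ yields $r(\sigma(P))=\sigma(r(P))=\sigma(a)$. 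Comparing the two gives $a=\sigma(a)$, contradicting $\sigma(a)\neq a$. Therefore at most one of $\rel_\text{ano}(r,P),\rel_\text{neu}(r,P)$ can equal $1$, which is exactly (\ref{equ:anoneu}).

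I do not expect any genuinely hard step here; the argument is short and elementary. The one place to be slightly careful is the bookkeeping identity $\hist(\sigma(P))=\sigma(\hist(P))$ relating the permutation action on profiles to the one on histograms, together with the observation that both ``$r(\sigma(P))=a$'' and ``$r(\sigma(P))=\sigma(a)$'' are legitimate instantiations of the per-profile axiom indicators \emph{at} $P$ (the first because anonymity at $P$ quantifies over all profiles with histogram $\hist(P)$, the second because neutrality at $P$ quantifies over all permutations $\sigma\in\sgroup$). Once these are in place the contradiction is immediate, and the lemma follows for every $m\ge 2$, every $n\in\mathbb N$, and every voting rule $r$.
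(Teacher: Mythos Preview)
Your proof is correct and uses essentially the same idea as the paper: pick $\sigma\in\perm(P)$ with $\sigma(a)\neq a$, then use anonymity at $P$ to force $r(\sigma(P))=a$ and neutrality at $P$ to force $r(\sigma(P))=\sigma(a)$. The paper reaches the same contradiction via a preliminary partition of $\calT_{m,n}$ by histogram and a case split (either anonymity already fails somewhere in the histogram class, or $r$ is constant on the class and then neutrality fails exactly as you argue), so your direct contradiction argument is a slightly more streamlined presentation of the same proof.
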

\begin{proof} We first partition $\calT_{m,n}$ to $\hist^{-1}({H}_1)\cup\cdots \cup  \hist^{-1}({H}_J)$, where $H_1,\ldots,H_J$ are some histograms of $n$ votes. The partition exists because for any $P\in \calT_{m,n}$ and any profile $P'$ with $\hist(P')=\hist(P)$, we have  $\perm(P')=\perm(P)\in {\calU_m}$, which means that $P'\in \calT_{m,n}$. For any $j\le J$, if there exist $P_1,P_2\in \hist^{-1}({H}_j)$ such that $r(P_1)\ne r(P_2)$, then for all $P\in \hist^{-1}({H}_j)$ we have $\rel_\text{ano}(P)=0$, which means that inequality (\ref{equ:anoneu}) holds for all $P\in \hist^{-1}({H}_j)$. Otherwise if $r(P)=a$ for all $P\in {H}_j$, then by the definition of $\calT_{m,n}$ there exists a permutation $\sigma$ over $\ma$ such that $\hist(\sigma(P))= \hist(P)$ and $\sigma(a)\ne a$. This means that $\sigma(P)\in {H}_j$,  and  therefore, $r(\sigma(P)) =a\ne \sigma(r(P))$, which means that $\rel_\text{neu}(P)=0$. Again, inequality (\ref{equ:anoneu}) holds for $P\in \hist^{-1}({H}_j)$. This proves the lemma.
\end{proof}

{\bf Step 2.~Smoothed possibility.}  First, we define $r_\text{ano}$ and $r_\text{neu}$ by extending a voting rule $r^*$ that is defined on $\ml(\ma)^n\setminus \calT_{m,n}$ and is guaranteed to satisfy anonymity and neutrality simultaneously for all profiles that are not in $\calT_{m,n}$. 

More precisely,   for any pair of profiles $P$ and $P'$, we write $P\equiv P'$ if and only if there exists a permutation $\sigma\in \ms_\ma$ such that $\hist(P') = \sigma(\hist(P))$. Let $(\ml(\ma)\setminus \calT_{m,n})=D_1\cup \cdots \cup D_{L'}$ denote the partition w.r.t.~$\equiv$. For each $l\le L'$,  let $P_l\in D_{l}$ denote an arbitrary profile. Because $P_l\not\in\calT_{m,n}$, there exists an alternative $a_l$ that is not covered by $\perm(P)$. Then, for any $P'\in D_l$ and any permutation $\sigma$ such that $\hist(P')=\sigma(\hist(P))$, we let $r^*(P')=\sigma(a_l)$. We note that the choice of $\sigma$ does not matter, because  all permutations $\sigma$ that map $\hist(P)$  to $\hist(P')$ have the same image for $a_l$. To see this, for the sake of contradiction, suppose $\hist(P')=\sigma_1(\hist(P)) = \sigma_2(\hist(P))$ where $\sigma_1(a_l)\ne \sigma_2(a_l)$. Then, we have $\sigma_1^{-1}\circ \sigma_2\in\perm(P)$, and $(\sigma_1^{-1}\circ \sigma_2)(a_l) \ne a_l$, which is a contradiction to the assumption that $\perm(P)$ does not cover $a_l$.

\begin{claim}\label{claim:rstar} For any profile $P\not\in \calT_{m,n}$ we have $\rel_\text{ano}(r^*,P)=1$ and $\rel_\text{neu}(r^*,P)=1$.
\end{claim}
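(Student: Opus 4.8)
The goal is to prove Claim~\ref{claim:rstar}: for any profile $P\not\in\calT_{m,n}$, the rule $r^*$ defined on $\ml(\ma)^n\setminus\calT_{m,n}$ satisfies both anonymity and neutrality at $P$. The plan is to verify the two properties separately, directly from the definition of $r^*$, using the fact established just above the claim that the image $\sigma(a_l)$ is well-defined (independent of the choice of $\sigma$ mapping $\hist(P_l)$ to $\hist(P')$).

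\textbf{Anonymity.} First I would show $\rel_\text{ano}(r^*,P)=1$, i.e.~$r^*$ is constant on $\hist^{-1}(\hist(P))$. Let $P$ lie in the equivalence class $D_l$ with representative $P_l$, so by definition there is a permutation $\sigma$ with $\hist(P)=\sigma(\hist(P_l))$ and $r^*(P)=\sigma(a_l)$. Now take any $P''$ with $\hist(P'')=\hist(P)$. Then $\hist(P'')=\sigma(\hist(P_l))$ as well, so $P''\in D_l$ and by the same definition $r^*(P'')=\sigma(a_l)=r^*(P)$ — here we use exactly the well-definedness argument preceding the claim, which guarantees that any permutation witnessing $P''\in D_l$ yields the same image of $a_l$, and in particular the specific $\sigma$ above does. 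Hence $r^*$ depends only on $\hist(P)$, so $\rel_\text{ano}(r^*,P)=1$.

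\textbf{Neutrality.} Next I would show $\rel_\text{neu}(r^*,P)=1$, i.e.~for every permutation $\tau\in\ms_\ma$, $r^*(\tau(P))=\tau(r^*(P))$. Observe first that $\tau(P)\not\in\calT_{m,n}$: since $\perm(\tau(P))=\tau\circ\perm(P)\circ\tau^{-1}$, if $\perm(P)$ fails to cover $a_l$ then $\perm(\tau(P))$ fails to cover $\tau(a_l)$, so $\tau(P)$ is in the domain of $r^*$ and lies in the same class $D_l$ (its histogram is $\tau\sigma(\hist(P_l))$). By the definition of $r^*$ applied to $\tau(P)$ with the witnessing permutation $\tau\circ\sigma$, we get $r^*(\tau(P))=(\tau\circ\sigma)(a_l)=\tau(\sigma(a_l))=\tau(r^*(P))$, which is exactly neutrality at $P$.

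The main thing to be careful about — really the only subtlety — is making sure that in each case the permutation I choose to invoke the definition of $r^*$ is a legitimate witness (maps $\hist(P_l)$ to the histogram of the profile in question) and that I am entitled to read off $r^*$'s value as its image of $a_l$; this is precisely what the well-definedness paragraph before the claim supplies, so no further work is needed there. Everything else is a routine unwinding of definitions and the conjugation identity $\perm(\tau(P))=\tau\,\perm(P)\,\tau^{-1}$.
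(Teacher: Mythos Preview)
Your proof is correct and follows essentially the same approach as the paper: anonymity is immediate since $r^*$ depends only on $\hist(P)$, and neutrality follows from the conjugation identity $\perm(\tau(P))=\tau\,\perm(P)\,\tau^{-1}$ together with the well-definedness of $r^*$, which lets you read off $r^*(\tau(P))=(\tau\sigma)(a_l)=\tau(r^*(P))$. One small slip: the alternative that $\perm(P)$ fails to cover is $\sigma(a_l)$, not $a_l$ itself (since $\perm(P)=\sigma\,\perm(P_l)\,\sigma^{-1}$), but this does not affect the argument, as you only need that $\perm(P)$ misses \emph{some} alternative to conclude $\tau(P)\notin\calT_{m,n}$.
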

\begin{proof} 
By definition, for all profiles $P\not\in \calT_{m,n}$, $r^*(P)$ only depends on $\hist(P)$, which means that $\rel_\text{ano}(r^*,P)=1$. To prove $\rel_\text{neu}(r^*,P)=1$, let $P\in D_l$ for some $l\le L$ and let $\sigma\in \ms_\ma$ denote a permutation over alternatives. We first prove that $\sigma(P)\in D_l$. Let $P' = \sigma(P)$. For the sake of contradiction suppose $P'\not\in D_l$. Then we have $P'\in \calT_{m,n}$. Therefore, for any alternative $a\in \ma$, there exists $\eta\in \perm(P')$ such that $\eta(\sigma(a))\ne \sigma(a)$. It follows that $(\sigma^{-1}\circ\eta\circ\sigma)\in \perm(P)$ and $(\sigma^{-1}\circ\eta\circ\sigma)(a)\ne a$. Therefore, $\perm(P)$ covers $\ma$, which contradicts the assumption that $P\not\in \calT_{m,n}$. Let $P = \zeta(P_l)$, where $P_l$ is the profile chosen in the definition of $r^*$. It follows that $P' = (\sigma\circ\zeta) (P_l)$. Therefore, we have $r^*(P') = (\sigma\circ\zeta)(a_l) =\sigma(\zeta(a_l)) =\sigma(r^*(P))$. This proves the claim. \end{proof}

For any profile $P\not\in\calT_{m,n}$, we let $r_\text{ano}(P)=r_\text{neu}(P) = r^*(P)$. For any profile $P$ in $\calT_{m,n}$, we let $r_\text{ano}(P)=a$ for an arbitrary fixed alternative $a$, and let $r_\text{neu}(P)$ be the top-ranked alternative of  agent $1$. By Claim~\ref{claim:rstar}, for any $r\in \{r_\text{ano}, r_\text{neu}\}$ any $n$, and any $\vec \pi\in \Pi^n$, $\Pr_{P\sim\vec\pi}(\rel_\text{ano}(r, P)+\rel_\text{neu}(r, P)<2)\ge \Pr_{P\sim\vec\pi}(P\in {\calT}_{m,n})$. Therefore, it suffices to prove the following lemma.

\begin{lem}\label{lem:smoothedposs}For any $r\in \{r_\text{ano}, r_\text{neu}\}$, any $n$, and any $\vec\pi\in \Theta^n$, we have:
$$\Pr\nolimits_{P\sim\vec\pi}(P\in {\calT}_{m,n})=\left\{\begin{array}{ll}O(n^{- \frac{l_\Pi}{2}})&\text{if } \calU_m^\Pi\ne\emptyset \\  \exp(-\Omega(n))&\text{otherwise}\end{array}\right..$$
\end{lem}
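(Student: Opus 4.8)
The plan is to reduce Lemma~\ref{lem:smoothedposs} to a finite collection of applications of the main technical lemma (Lemma~\ref{lem:maintech}), one for each permutation group that covers $\ma$. First I would observe that, because "covering $\ma$" is monotone under group inclusion, $\calT_{m,n}$ decomposes as $\calT_{m,n}=\bigcup_{U\in\calU_m}\{P\in\ml(\ma)^n:U\subseteq\perm(P)\}$: indeed $\perm(P)$ covers $\ma$ iff $U\subseteq\perm(P)$ for some $U\in\calU_m$ (take $U=\perm(P)$, which is itself a covering permutation group, for one direction; for the other, if $U$ covers $a$ via some $\sigma\in U\subseteq\perm(P)$ then so does $\perm(P)$). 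Since $|\calU_m|$ depends only on $m$, the union bound gives $\Pr_{P\sim\vec\pi}(P\in\calT_{m,n})\le\sum_{U\in\calU_m}\Pr_{P\sim\vec\pi}(U\subseteq\perm(P))$, so it suffices to bound each summand.

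Next I would model, for a fixed $U\in\calU_m$, the event $U\subseteq\perm(P)$ --- i.e.~``$\hist(P)$ is invariant under every $\sigma\in U$'' --- as a homogeneous linear system in the $m!$ histogram variables $\{x_R\}_{R\in\ml(\ma)}$: for each $\sigma\in U$ and each $R$, the equation $x_R-x_{\sigma(R)}=0$. Collecting these into an integer matrix $\ba_U$ (there is no strict-inequality part, so $\bb$ is empty and $\rsol=\sol$), and noting $\ba_U\cdot\vec1=\vec0$, this is exactly a system $\cons$ as in Definition~\ref{dfn:cab} with $q=m!$; it is nontrivial since $U\ne\{\id\}$. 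The solution set $\sol$ is always non-empty (e.g.~$\vec1$, or a scaled non-negative integer $U$-invariant vector), and $\rsol\cap\conv(\Pi)\ne\emptyset$ holds precisely when some $\pi\in\conv(\Pi)$ is $U$-invariant, i.e.~precisely when $U\in\calU_m^\Pi$. Hence Lemma~\ref{lem:maintech} yields $\Pr_{P\sim\vec\pi}(U\subseteq\perm(P))=O(n^{-\rank(\ba_U)/2})$ when $U\in\calU_m^\Pi$, and $\exp(-\Omega(n))$ otherwise, with constants uniform in $\vec\pi$ and $n$.

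The crux is computing $\rank(\ba_U)$. Here I would use that the natural action of $\sgroup$ on $\ml(\ma)$ is \emph{free}: a permutation of alternatives that fixes a linear order fixes each alternative and so is $\id$. Consequently every $U$-orbit in $\ml(\ma)$ has size exactly $|U|$, the number of orbits is $m!/|U|$, the space of $U$-invariant histograms has dimension $m!/|U|$, and therefore $\rank(\ba_U)=m!-m!/|U|=\tfrac{|U|-1}{|U|}m!$. Since $t\mapsto\tfrac{t-1}{t}$ is increasing, over $U\in\calU_m^\Pi$ this rank is minimized at a group of minimum size $l_{\min}$, giving $\min_{U\in\calU_m^\Pi}\rank(\ba_U)=\tfrac{l_{\min}-1}{l_{\min}}m!=l_\Pi$.

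Finally I would assemble the pieces. If $\calU_m^\Pi=\emptyset$, every summand is $\exp(-\Omega(n))$ and, there being constantly many, so is the sum. If $\calU_m^\Pi\ne\emptyset$, the polynomial summands dominate the exponential ones and $\sum_{U\in\calU_m^\Pi}O(n^{-\rank(\ba_U)/2})=O(n^{-l_\Pi/2})$ by the rank computation, so $\Pr_{P\sim\vec\pi}(P\in\calT_{m,n})=O(n^{-l_\Pi/2})$. Given Lemma~\ref{lem:maintech}, there is no serious obstacle: the only nonroutine ingredients are the monotone-covering decomposition of $\calT_{m,n}$ and the rank identity $\rank(\ba_U)=\tfrac{|U|-1}{|U|}m!$, which in turn rests on the freeness of the $\sgroup$-action on $\ml(\ma)$; the remaining care is simply checking that summing over the constantly many $U\in\calU_m$ preserves the asymptotics.
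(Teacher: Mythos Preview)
Your proposal is correct and follows essentially the same approach as the paper: both decompose $\calT_{m,n}$ as a union over $U\in\calU_m$ of the events $\{U\subseteq\perm(P)\}$, encode each such event by the homogeneous linear system $\{x_R-x_{\sigma(R)}=0\}$, apply Lemma~\ref{lem:maintech}, and use the orbit count $m!/|U|$ (equivalently, freeness of the $\sgroup$-action on $\ml(\ma)$) to get $\rank(\ba_U)=\tfrac{|U|-1}{|U|}m!$. Your write-up is arguably a bit tidier in explicitly separating the $U\in\calU_m^\Pi$ and $U\notin\calU_m^\Pi$ summands, whereas the paper simply upper-bounds every summand by the polynomial rate, but this is a cosmetic difference.
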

\begin{proof}
The lemma is proved by applying the upper bound in Lemma~\ref{lem:maintech} as in the proof of Theorem~\ref{thm:smoothedCondorcet}. For any $U\in \calU_m$, we define constrains $\pcons{U}$ as follows.
\begin{dfn}\label{dfn:consu}
For any $U\in \calU_m$, we define  $\pcons{U}$ as in Definition~\ref{dfn:cab} that is based on $\pba{U}$ and $\pbb{U}$, where $\pba{U}$ represents $\{x_{R}-x_{\sigma(R)}=0 :\forall R\in\ml(\ma),\forall \sigma\in U\}$ and $\pbb{U}=\emptyset$. Let $\ppoly{U}$ and $\ppolyz{U}$ denote the solutions to $\pcons{U}$ and its relaxation $\overline{\text{C}}_U$ as in Definition~\ref{dfn:cab}, respectively. 
\end{dfn}
By definition, $\pba{U}(x_R)=0$ if and only if for all $\sigma\in U$, $x_R = \sigma(x_R)$. Because $\pbb{U}=\emptyset$, we have $\pcons{U}=\overline{\text{C}}_U$, which means that  $\ppoly{U}=\ppolyz{U}$.
We have the following claim about $\pcons{U}$.
\begin{claim}\label{claim:anoneusol}
For any profile $P$  and any $U\in \calU_{m}$, $\hist(P)\in \ppoly{U}$ if and only if for all $\sigma\in U$, $\hist(P) = \sigma(\hist(P))$.  Moreover, $\rank(\pba{U})=(1-\frac{1}{|U|})m!$.
\end{claim}
\begin{proof} The  ``if and only if'' part follows after the definition. We now prove that $\rank(\pba{U})=(1-\frac{1}{|U|})m!$. Let $\equiv_U$ denote the relationship over $\ml(\ma)$ such that for any pair of linear orders $R_1, R_2$, $R_1\equiv_U R_2$ if and only if there exists $\sigma\in U$ such that $R_1=\sigma(R_2)$. Because $U$ is a permutation group, $\equiv_U$ is an equivalence relationship that partitions $\ml(\ma)$ into $\frac{m!}{|U|}$ groups, each of which has $|U|$ linear orders. It is not hard to see that each equivalent class is characterized by $|U|-1$ linearly independent equations represented by rows in $\pba{U}$, which means that $\rank(\pba{U})\le (|U|-1)(\frac{m!}{|U|})= (1-\frac{1}{|U|})m!$. For any $s< (1-\frac{1}{|U|})m!$ and any combination of $s$ rows of $\pba{U}$, denoted by $A$, it is not hard to construct $\vec x$ such that $A\cdot (\vec x)^\top=(\vec 0)^\top$ but  $\pba{U}\cdot (\vec x)^\top\ne (\vec 0)^\top$, which proves that $\rank(\pba{U})\ge (1-\frac{1}{|U|})m!$. This proves the claim.
\end{proof}

Let $\mc$ denote the set of all $\pcons{U}$ where $U\in \calU_m$ as in Definition~\ref{dfn:consu}. We have the following observations. 
\begin{itemize}
\item[(1)] $\mc$ characterizes $\calT_{m,n}$, because by Claim~\ref{claim:anoneusol}, for any $P\in \calT_{m,n}$ we have $\hist(P)\in \ppoly{\perm(P)}$ and $\perm(P)\in \mc$; and vice versa, for any $\pcons{U}\in \mc$ and any $P$ such that $\hist(P)\in \ppoly{U}$, by Claim~\ref{claim:anoneusol} we have $\perm(P)\supseteq U$, which means that $\perm(P)$ covers $\ma$, and it follows that $P\in \calT_{m,n}$. 
\item [(2)] For any $U\in \calU_m$, $\ppoly{U}\ne \emptyset$. This because $\vec 1\in \ppoly{U}$.  
\item [(3)] $|\mc|$ can be seen as a constant that does not depend on $n$, because it is no more than the total number of permutation groups over $\ma$. 
\end{itemize}

We now prove the polynomial upper bound when $\calU_m^\Pi\ne \emptyset$. For any $U\in \calU_m^\Pi$, by Claim~\ref{claim:anoneusol}, we have $\rank(\pba{U}) \ge  (1-\frac{1}{|U|})m! \ge  (1-\frac{1}{l_{\min}})m!$. By applying the polynomial upper bound in Lemma~\ref{lem:maintech} to all $\pcons{U}\in \mc$,  for any $\vec\pi\in\Pi^n$, we have:
\begin{align*}
&\Pr\nolimits_{P\sim\vec\pi}(P\in {\calT}_{m,n}) \le \sum_{\pcons{U}\in \mc} \Pr\nolimits_{P\sim\vec\pi}(\hist(P)\in \ppoly{U})\\
=& \sum_{\pcons{U}\in \mc} O \left(n^{- (\frac{|U|-1}{2|U|})m!}\right)
\le | \mc| O \left(n^{- (\frac{l_{\min}-1}{2l_{\min}})m!}\right)
=O\left(n^{- \frac{l_\Pi}{2}m!}\right)
\end{align*}
The exponential upper bound is proved similarly, by applying the exponential upper bound in Lemma~\ref{lem:maintech} to all $\pcons{U}\in \mc$. This proves the lemma.
\end{proof}

{\bf Step 3.~Smoothed impossibility}.  Lemma~\ref{lem:sumdegree} implies that for any $\vec\pi\in\Pi^n$, $\Pr_{P\sim\vec\pi}(\rel_\text{ano}(r, P)+\rel_\text{neu}(r, P)<2)\ge \Pr_{P\sim\vec\pi}(P\in {\calT}_{m,n})$. Therefore, it suffices to prove the following lemma.

\begin{lem}\label{lem:smoothedimp}For any voting rule $r$, there exist infinitely many $n\in\mathbb N$ and corresponding $\vec\pi\in\Pi^n$, such that:
$$\Pr\nolimits_{P\sim\vec\pi}(P\in {\calT}_{m,n})=\left\{\begin{array}{ll}\Omega(n^{- \frac{l_\Pi}{2}})&\text{if } \calU_m^\Pi\ne\emptyset \\  \exp(-O(n))&\text{otherwise}\end{array}\right..$$
\end{lem}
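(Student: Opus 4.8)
The plan is to prove Lemma~\ref{lem:smoothedimp} by mirroring the proof of Lemma~\ref{lem:smoothedposs}, but invoking the \emph{tightness} half of Lemma~\ref{lem:maintech} rather than its upper-bound half. First I would observe that the asserted bound does not in fact depend on the rule $r$: Lemma~\ref{lem:sumdegree} already guarantees $\rel_\text{ano}(r,P)+\rel_\text{neu}(r,P)\le 1$ whenever $P\in\calT_{m,n}$, so it suffices to exhibit, for infinitely many $n$, a $\vec\pi\in\Pi^n$ for which $\Pr_{P\sim\vec\pi}(P\in\calT_{m,n})$ is as large as claimed. I would then reuse the constraint systems $\pcons{U}$ of Definition~\ref{dfn:consu}: recall that $\pbb{U}=\emptyset$ so $\ppoly{U}=\ppolyz{U}$; that $\vec 1\in\ppoly{U}$, so $\ppoly{U}\ne\emptyset$; that $\rank(\pba{U})=(1-\tfrac{1}{|U|})m!$ by Claim~\ref{claim:anoneusol}; and that (observation~(1) in the proof of Lemma~\ref{lem:smoothedposs}) $\hist(P)\in\ppoly{U}$ forces $U\subseteq\perm(P)$, hence $P\in\calT_{m,n}$. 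Thus for \emph{any single} $U\in\calU_m$ I choose, $\Pr_{P\sim\vec\pi}(P\in\calT_{m,n})\ge\Pr(\vec X_{\vec\pi}\in\ppoly{U})$, and the whole argument reduces to picking the right $U$ and reading off the relevant branch of the tightness statement in Lemma~\ref{lem:maintech}.

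In the case $\calU_m^\Pi\ne\emptyset$, I would take $U^*\in\calU_m^\Pi$ achieving $|U^*|=l_{\min}$. By definition of $\calU_m^\Pi$ there is $\pi^*\in\conv(\Pi)$ with $\sigma(\pi^*)=\pi^*$ for every $\sigma\in U^*$; translating this into the histogram coordinates, it says precisely $\pi^*\in\ppoly{U^*}=\ppolyz{U^*}$, so $\ppolyz{U^*}\cap\conv(\Pi)\ne\emptyset$. Applying the $\rsol\cap\conv(\Pi)\ne\emptyset$ branch of the tightness part of Lemma~\ref{lem:maintech} to $\pcons{U^*}$ then yields a constant $C$ such that for every $n'$ there are $n\in[n',Cn']$ and $\vec\pi\in\Pi^n$ with $\Pr(\vec X_{\vec\pi}\in\ppoly{U^*})=\Omega(n^{-\rank(\pba{U^*})/2})$; since $\rank(\pba{U^*})=(1-\tfrac{1}{l_{\min}})m!=l_\Pi$, this is $\Omega(n^{-l_\Pi/2})$, and letting $n'\to\infty$ produces infinitely many admissible $n$. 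In the case $\calU_m^\Pi=\emptyset$, I would fix any $U\in\calU_m$ (for instance the cyclic group generated by the $m$-cycle $(1,2,\ldots,m)$, which covers $\ma$). Since $U\notin\calU_m^\Pi$, no distribution in $\conv(\Pi)$ is $U$-invariant, i.e.\ $\ppolyz{U}\cap\conv(\Pi)=\emptyset$ while $\ppoly{U}\ne\emptyset$, so the exponential branch of the tightness part of Lemma~\ref{lem:maintech} supplies, for infinitely many $n$, a $\vec\pi\in\Pi^n$ with $\Pr(\vec X_{\vec\pi}\in\ppoly{U})=\exp(-O(n))$, hence $\Pr_{P\sim\vec\pi}(P\in\calT_{m,n})=\exp(-O(n))$.

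Almost all of the substance is thus carried by Lemma~\ref{lem:maintech} and Claim~\ref{claim:anoneusol}; the new content is bookkeeping. The one point I would treat carefully is the translation between ``$\pi$ is invariant under every $\sigma\in U$'' and ``$\pi\in\ppoly{U}$'' --- i.e.\ that the homogeneous linear conditions $x_R=x_{\sigma(R)}$ defining $\ppolyz{U}$ on histogram space are exactly $U$-invariance of the underlying distribution --- because this is what connects $\calU_m^\Pi$ to the $\rsol\cap\conv(\Pi)$ hypothesis of Lemma~\ref{lem:maintech}, together with the identity $\rank(\pba{U^*})=l_\Pi$ that makes the polynomial exponent match the upper bound of Lemma~\ref{lem:smoothedposs}. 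Comparing with that upper bound shows the rates obtained along the chosen subsequence of $n$ are in fact tight, completing Step~3 and hence the proof of Theorem~\ref{thm:smoothedANR}.
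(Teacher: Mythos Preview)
Your proposal is correct and follows essentially the same approach as the paper: in both cases you pick a single $U\in\calU_m$ (a minimizer $U^*\in\calU_m^\Pi$ when $\calU_m^\Pi\ne\emptyset$, an arbitrary $U\in\calU_m$ otherwise), verify the appropriate hypothesis of the tightness half of Lemma~\ref{lem:maintech} for $\pcons{U}$, and use $\hist(P)\in\ppoly{U}\Rightarrow P\in\calT_{m,n}$ to pass to the desired lower bound. The only difference is that you spell out explicitly the equivalence between ``$\pi$ is $U$-invariant'' and ``$\pi\in\ppolyz{U}$'' and the consequence $\ppolyz{U}\cap\conv(\Pi)=\emptyset$ in the exponential case, which the paper leaves implicit.
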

\begin{proof} Let $\mc$ be the set as defined in the proof of Lemma~\ref{lem:smoothedposs}. 

{\bf Applying the lower bound in Lemma~\ref{lem:maintech}.} We first prove the polynomial lower bound. Suppose $\calU_m^\Pi\ne \emptyset$ and let $U\in \calU_m^\Pi$ denote the permutation group with the minimum size, i.e.~$|U|=l_{\min}$. By Claim~\ref{claim:anoneusol}, $\rank(\pba{U}) = (1-\frac{1}{l_{\min}})m!$. By applying the tightness of the polynomial bound part in Lemma~\ref{lem:maintech} to $\pcons{U}$, we have that there exists constant $C>0$ such that for any $n'\in \mathbb N$, there exists $n'\le n\le Cn'$ and $\vec \pi\in \Pi^n$ such that 
$$\Pr\nolimits_{P\sim\vec\pi}(\hist(P)\in \ppoly{U})=\Omega\left(n^{- (\frac{l_{\min}-1}{2l_{\min}})m!}\right)$$ 
Note that for any profile $P$ such that $\hist(P)\in\ppoly{U}$, we have $\perm(P)\supseteq U$, which means that $\perm(P)$ covers $\ma$ and therefore $P\in \calT_{m,n}$. Therefore, we have:
\begin{align*}
\Pr\nolimits_{P\sim\vec\pi}(P\in {\calT}_{m,n}) \ge \Pr\nolimits_{P\sim\vec\pi}(\hist(P)\in\ppoly{U})\ge \Omega\left((Cn)^{- (\frac{l-1}{2l})m!}\right)
=\Omega\left(n^{- (\frac{l-1}{2l})m!}\right)
\end{align*} 

The exponential lower bound is proved similarly, by applying the tightness of the exponential bound part in Lemma~\ref{lem:maintech} to an arbitrary $U\in \calU_m$. This proves the lemma.
\end{proof}
This finishes the proof of Theorem~\ref{thm:smoothedANR}.
\end{proof}

\subsection{Connection to the (Non-)Existence of Anonymous and Neutral Voting Rules}
\label{sec:moulin}

As a side note, Lemma~\ref{lem:sumdegree} and Claim~\ref{claim:rstar} can be used to prove Moulin's characterization of existence of anonymous and neutral voting rules, which was stated as Problem 1 in~\cite{Moulin1983:The-Strategy} with hints on the proofs. More hints are given in~\cite[Problem~9.9]{Moulin91:Axioms}. We present a different proof using the group theoretic notation developed in this paper.
\begin{claim}[Problem 1 in~\cite{Moulin1983:The-Strategy}]\label{claim:anoneuexist} Fix any $m\ge 2$ and $n\ge 2$, there exists a voting rule that satisfies anonymity and neutrality if and only if $m$ cannot be written as the sum of $n$'s nontrivial divisors.
\end{claim}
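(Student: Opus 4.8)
The plan is to reduce the claim, via the tools already built for Theorem~\ref{thm:smoothedANR}, to a purely combinatorial statement about $\calT_{m,n}$. The first step is the dichotomy: \emph{an anonymous and neutral resolute voting rule on $\ml(\ma)^n$ exists if and only if $\calT_{m,n}=\emptyset$}. ``Only if'' is immediate from Lemma~\ref{lem:sumdegree} --- if $P\in\calT_{m,n}$ then $\rel_\text{ano}(r,P)+\rel_\text{neu}(r,P)\le 1$ for every rule $r$, so $r$ fails anonymity or neutrality at $P$. ``If'' follows because, when $\calT_{m,n}=\emptyset$, the rule $r^*$ from Step~2 of the proof of Theorem~\ref{thm:smoothedANR} is a total function on $\ml(\ma)^n$, and Claim~\ref{claim:rstar} gives $\rel_\text{ano}(r^*,P)=\rel_\text{neu}(r^*,P)=1$ for all $P$.

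The second step is the orbit-counting lemma I will need: \emph{for any permutation group $U\subseteq\sgroup$, every orbit of $U$ acting on $\ml(\ma)$ has size exactly $|U|$}. This is because if $\sigma\in U$ fixes a ranking $R=a_1\succ\cdots\succ a_m$ then $\sigma(a_i)=a_i$ for all $i$, hence $\sigma=\id$; so every ranking has trivial $U$-stabilizer and orbit--stabilizer gives orbit size $|U|$. Consequently a $U$-invariant non-negative integer histogram (one constant on $U$-orbits) has total weight a multiple of $|U|$, and conversely every positive multiple of $|U|$ is attained (place $n/|U|$ copies of each ranking in a single orbit). Combining this with the fact that every $\sigma\in\perm(P)$ fixes $\hist(P)$ yields the equivalence: \emph{$\calT_{m,n}\ne\emptyset$ iff there is a permutation group $U$ covering $\ma$ with $|U|\mid n$} --- any $P\in\calT_{m,n}$ forces $|\perm(P)|\mid n$ since $\hist(P)$ is $\perm(P)$-invariant, and conversely a $U$-invariant $n$-profile lies in $\calT_{m,n}$ whenever $U$ covers $\ma$.

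The third step translates ``$\exists\,U$ covering $\ma$ with $|U|\mid n$'' into ``$m$ equals a sum of divisors of $n$ each of size $\ge 2$'' (the nontrivial divisors). For the forward direction, let $O_1,\dots,O_j$ be the orbits of such a $U$ on $\ma$ with $|O_i|=m_i$; since $U$ covers $\ma$ no $O_i$ is a singleton, so $m_i\ge 2$, while orbit--stabilizer gives $m_i\mid|U|$, hence $m_i\mid n$, and $\sum_i m_i=m$. Conversely, given $m=d_1+\cdots+d_k$ with each $d_i\ge 2$ and $d_i\mid n$, partition $\ma$ into blocks of sizes $d_1,\dots,d_k$, let $\sigma$ be the product of a $d_i$-cycle on each block, and set $U=\langle\sigma\rangle$: then $\sigma$ has no fixed point so $U$ covers $\ma$, and $|U|=\operatorname{lcm}(d_1,\dots,d_k)$ divides $n$ because each $d_i$ does. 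Chaining the three equivalences gives exactly: an anonymous and neutral rule exists $\iff\calT_{m,n}=\emptyset\iff m$ is not a sum of nontrivial divisors of $n$.

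The main thing to be careful about is the divisibility bookkeeping in the third step: one must take the order of $U$ to be the \emph{least common multiple} of the cycle lengths rather than their product, since only the lcm is guaranteed to divide $n$, and one must use orbit--stabilizer in both directions to propagate the chain $m_i\mid|U|\mid n$. Everything else --- in particular realizing a $U$-invariant profile of prescribed size and checking that $\perm$ of such a profile contains $U$ --- is routine once the orbit-size lemma is established.
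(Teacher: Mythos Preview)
Your proof is correct and follows essentially the same approach as the paper: reduce to $\calT_{m,n}=\emptyset$ via Lemma~\ref{lem:sumdegree} and Claim~\ref{claim:rstar}, then analyze orbits of $\perm(P)$ (resp.\ a cyclic $U=\langle\sigma\rangle$) on $\ma$ to translate into the divisor condition. Your explicit orbit-size lemma on $\ml(\ma)$ (trivial stabilizers, hence every orbit has size $|U|$) cleanly fills in the step the paper leaves as ``it is not hard to see that $|\perm(P)|$ divides $n$,'' and your intermediate equivalence $\calT_{m,n}\ne\emptyset\iff\exists\,U\in\calU_m$ with $|U|\mid n$ is just a tidy factoring of the same argument.
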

\begin{proof} By Lemma~\ref{lem:sumdegree} and Claim~\ref{claim:rstar}, there exists a voting rule that satisfies anonymity and neutrality if and only if ${\calT}_{m,n}=\emptyset$. Therefore, it suffices to prove that ${\calT}_{m,n}=\emptyset$ if and only if $m$ cannot be written as the sum of $n$'s nontrivial divisors.

The ``if'' direction. Suppose for the sake of contradiction that ${\calT}_{m,n}\ne\emptyset$. Let $P\in {\calT}_{m,n}\ne\emptyset$ denote an arbitrary profile. We now partition $\ma$ according to the following equivalence relationship $\equiv$. $a\equiv b$ if and only if there exists $\sigma\in\perm(P)$ such that $\sigma(a) = b$. The partition is well defined because $\perm(P)$ is a permutation group. In other words, if $a\equiv b$ then $b\equiv a$, because the $\perm(P)$ is closed under inversion. If $a\equiv b$ (via $\sigma_1$) and $b\equiv c$ (via $\sigma_2$) then $a\equiv c$ because $c = (\sigma_2\circ\sigma_1)(a)$, and $\sigma_2\circ\sigma_1\in\perm(P)$.

Suppose $\ma=\ma_1\cup\cdots\cup \ma_L$ is divided into $L$ parts according to $\equiv$. Because $P\in{\cal T}_{m,n}$, which means that $\perm(P)$ covers $\ma$, for all $l\le L$, $|\ma_l|\ge 2$. It is not hard to check that for any $l\le L$, we have $|\ma_l|$ divides $|\perm(P)|$. Also it is not hard to see that $\perm(P)$ divides $n$. This means that $m=\sum_{l=1}^L|\ma_l|$ which contradicts the assumption. This proves the ``if'' direction.

The ``only if'' direction. Suppose for the sake of contradiction that $m=m_1+\cdots+m_L$ where each $m_l\ge 2$ and divides $n$. Consider the permutation $\sigma$ whose cycle form consists of $L$ cycles whose sizes are $m_1,\cdots,m_L$, respectively. Let $U$ denote the permutation group generated by $\sigma$. It follows that $|U|$ is the least common multiple of $m_1,\cdots,m_L$, which means that $|U|$ divides $n$. Therefore, it is not hard to construct an $n$-profile $P$ such that for any ranking $R$ and any $\sigma'\in U$, we must have $P[R]  = P[\sigma'(R)]$. It follows that $U\subseteq \perm(P)$, which means that $\perm(P)$ covers $\ma$, and therefore $\in{\cal T}_{m,n}\ne\emptyset$. This contradicts the assumption that $\in{\cal T}_{m,n}=\emptyset$, which proves the ``only if'' direction. 
\end{proof}

\section{Proof of Lemma~\ref{lem:taubounds}}
\label{appendix:lemmatauproof}

\appLem{lem:taubounds}{
For any $m\ge 2$, let  $l^* = \min_{U\in \calU_m}|U|$. We have  $l^* = 2$ if $m$ is even;  $l^* = 3$ if $m$ is odd and $3\mid m$; $l^* = 5$ if $m$ is odd, $3\nmid m$, and $5\mid m$; and $l^*=6$ for other $m$. 
}

\begin{proof} We prove the lemma by discussing the following cases.

{\bf Case 1: $\bm{2\mid m}$.} Let $\sigma=(1,2)(3,4)\cdots(m-1,m)$ denote the permutation that consists of 2-cycles. It follows that $\id = \sigma^2$ and $\{\id,\sigma\}$ is a permutation group that covers $[m]$. This means that $l^* = 2$.

{\bf Case 2: $\bm{2\nmid m}$ and $\bm{3\mid m}$.} Let $\sigma=(1,2,3)(4,5,6)\cdots(m-2,m-1,m)$ denote the permutation that consists of 3-cycles. It follows that $\id = \sigma^3$ and $\{\id,\sigma, \sigma^2\}$ is a permutation group that covers $[m]$. This means that $l^* \le 3$.  We now prove that $l^*$ cannot be $2$ by contradiction. Suppose for the sake of contradiction that $l^* =2$ and let $G$ denote a permutation group with $|G|=2$. Table~\ref{tab:groups5} (part of Table~26.1 in~\cite{Gallian2012:Contemporary}) lists all groups of orders $1$ through $5$ up to isomorphism. The order of a group is the number of its elements.  For any $l\in\mathbb N$, a permutation group $G$ of order $l$ is isomorphic to $Z_l$ if and only if $G=\{\id, \sigma, \ldots, \sigma^{l-1}\}$, where $\sigma$ is an order-$l$ permutation, that is, $\sigma^{l}=\id$ and for all $1\le i<l$, $\sigma^i\ne \id$. A permutation group $G$ of order $4$ is isomorphic to $Z_2\oplus Z_2$ if and only if $G=\{\id, \sigma, \eta, \sigma\circ\eta\}$, where $\sigma,\eta$, and $\sigma\circ\eta$ are three order-$2$ permutations.
\begin{table}[htp]
\caption{All order $1$ through $5$ groups up to isomorphism. \label{tab:groups5}}
\centering
\begin{tabular}{|c|c|c|c|c|c|}
\hline
Order& $1$& $2$& $3$& $4$& $5$ \\
\hline
Groups& $\{\id\}$& $Z_2$& $Z_3$& $Z_4$ or $Z_2\oplus Z_2$& $Z_5$\\
\hline
\end{tabular}

\end{table}

Therefore, $G = \{\id,\sigma\}$. Because $G$ covers $\ma$, the cycle form of $\sigma$ must consist of 2-cycles that cover all alternatives in $\ma$, which means that $2\mid m$, a contradiction.


{\bf Case 3: $\bm{2\nmid m}$, $\bm{3\nmid m}$, and $\bm{5\mid m}$.}  Let $\sigma=(1,2,3,4,5)\cdots(m-4,m-3,m-2,m-1,m)$ denote the permutation that consists of 5-cycles. It follows that $\id = \sigma^5$ and $\{\id,\sigma, \sigma^2, \sigma^3, \sigma^4\}$ is a permutation group that covers $[m]$. This means that $l^* \le 5$.  We now prove by contradiction that $l^*$ cannot be $2$, $3$, or $4$. Suppose for the sake of contradiction that $l^* =2$ and let $G$ denote a permutation group whose order is no more than $4$.  Because $2\nmid m$ and $3\nmid m$, following a similar argument with Case $2$ we know that $G$ cannot be isomorphic to $Z_2,Z_3$, or $Z_4$. Therefore, by Table~\ref{tab:groups5}, $G$ must be isomorphic to $Z_2\oplus Z_2$. However, the next claim shows that this is impossible. Therefore, $l^*=5$.

\begin{claim}\label{claim:z2z2} For any $m$ with $2\nmid m$, no permutation group in $G\in\gm$ is isomorphic to $Z_2\oplus Z_2$, where $\gm$ is the set of all permutation groups over $\ma = [m]$.
\end{claim}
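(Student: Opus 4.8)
The plan is to analyze the orbits of the natural action of $G$ on $\ma=[m]$. The claim is invoked inside Case~3, where $G$ is already assumed to cover $\ma$, so I would first make that hypothesis explicit: suppose, for the sake of contradiction, that $G$ is a permutation group over $[m]$ with $G\in\calU_m$ (i.e.\ $G$ covers $\ma$) and $G\cong Z_2\oplus Z_2$; in particular $|G|=4$.

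Next I would decompose $[m]$ into the orbits $O_1,\dots,O_t$ of $G$, so that $m=\sum_{i=1}^t|O_i|$. By the orbit--stabilizer theorem each $|O_i|$ divides $|G|=4$, hence $|O_i|\in\{1,2,4\}$. A singleton orbit $O_i=\{a\}$ would mean $\sigma(a)=a$ for every $\sigma\in G$, contradicting the assumption that $G$ covers $a$; indeed, ``$G$ covers $\ma$'' is precisely the statement that $\{a\in[m]:\sigma(a)=a\text{ for all }\sigma\in G\}=\emptyset$. Therefore every orbit has size $2$ or $4$, so $m$ is a sum of even numbers and is itself even, contradicting $2\nmid m$.

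I expect no serious obstacle here: once the covering hypothesis is in scope, this is essentially a one-line application of orbit--stabilizer, and the only point needing a little care is spelling out why no orbit can be a singleton (equivalently, that $Z_2\oplus Z_2$ acting with no global fixed point has all orbits of even size). I would then remark in passing that this argument, together with the parallel observation that an order-$4$ permutation covering $[m]$ must be a product of $2$-cycles and $4$-cycles and hence again forces $2\mid m$, is exactly what completes the exclusion of $l^*=4$ in Case~3, since Table~\ref{tab:groups5} leaves only $Z_4$ and $Z_2\oplus Z_2$ as the order-$4$ possibilities.
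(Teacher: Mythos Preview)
Your proposal is correct, and you are right to make the covering hypothesis explicit: the claim as literally stated in the paper (with $\gm$ described as ``the set of all permutation groups over $[m]$'') is false without it---for instance, $\{\id,(1,2),(3,4),(1,2)(3,4)\}\cong Z_2\oplus Z_2$ is a permutation group over $[5]$. The paper's own proof silently uses ``$G$ covers $\ma$'' midway through, so your reading of the intended statement is the only sensible one.

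Your route differs from the paper's. The paper argues directly with the three involutions $\sigma,\eta,\sigma\circ\eta$: it lets $\{1,\dots,k\}$ be the fixed-point set of $\eta$, observes that $k$ is odd (since the remaining $m-k$ points are paired by $2$-cycles of $\eta$), then shows that each $i\le k$ lies in a $2$-cycle $(i,j)$ of $\sigma$ with $j\le k$ as well, forcing $k$ to be even---a contradiction. Your argument via orbit--stabilizer is both shorter and more conceptual: every orbit size divides $|G|=4$, the covering hypothesis kills size-$1$ orbits, so $m$ is a sum of $2$'s and $4$'s. A further bonus is that your argument does not distinguish $Z_4$ from $Z_2\oplus Z_2$; it dispatches any order-$4$ group that covers $[m]$ in one stroke, whereas the paper handles $Z_4$ separately (by the cycle-type argument you sketch) and reserves Claim~\ref{claim:z2z2} for the non-cyclic case.
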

\begin{proof} We prove the claim by contradiction. Suppose for the sake of contradiction $2\nmid m$ and there exists $G\in \gm$ that is isomorphic to $Z_2\oplus Z_2$. This means that
$G=\{\id, \sigma, \eta, \sigma\circ\eta\}$, where $\sigma$, $\eta$, and $\sigma\circ\eta$ only contain $2$-cycles in their cycle forms, respectively. Because $2\nmid m$, at least one alternative is not involved in any cycle in $\eta$. W.l.o.g.~we let $\{1,\ldots,k\}$ denote the alternatives that are not involved in any cycle in $\eta$, which means that they are mapped to themselves by $\eta$. It follows that the remaining $m-k$ alternatives are covered by $2$-cycles in $\eta$, which means that $k$ is an odd number. For any $i\le k$, because $G$ covers $\ma$, $i$ must be involved in a $2$-cycle in $\sigma$, otherwise none of $\eta,\sigma$, or $\sigma\circ \eta$ will map $i$ to a different alternative. Suppose $(i,j)$ is a $2$-cycle in $\sigma$. We note that $(\sigma\circ\eta)(i) = j$, which means that $(i,j)$ is a $2$-cycle in $\sigma\circ\eta$. This means that $\eta(j) = j$, that is, $j\le k$. Therefore, $\{1,\ldots, k\}$ consists of $2$-cycles in $\sigma$. This  means that $2\mid k$, which is a contradiction. 
\end{proof}

{\bf Case 4:  $\bm{2\nmid m}$, $\bm{3\nmid m}$, and $\bm{5\nmid m}$.}  Let $\sigma=(1,2,3)(4,5)\cdots(m-1,m)$. It follows that the order of $\sigma$ is $6$, which means that $\{\id,\sigma, \sigma^2, \sigma^3, \sigma^4, \sigma^5\}$ is a permutation group that covers $[m]$. This means that $l^* \le 6$. The proof for $l^*\ge 6$ is similar with the proof in Case 3. Suppose for the sake of contradiction there exists $G\in\gm$ with $|G|\le 5$. Then, by Table~\ref{tab:groups5}, $G$ must be isomorphic to $Z_2,Z_3,Z_4,Z_5$, or $Z_2\oplus Z_2$. However, this is impossible because none of $2$, $3$, or $5$ can divide $m$, and by claim~\ref{claim:z2z2}, $G$ is not isomorphic to   $Z_2\oplus Z_2$.  This proves that $l^* =6$.
\end{proof}
\subsection{A Corollary of Theorem~\ref{thm:smoothedANR} and Lemma~\ref{lem:taubounds}}\label{appendix:ANRcoro}

\begin{coro}\label{coro:smoothedANR}
Let $\mm= (\Theta,\ml(\ma),\Pi)$ be a strictly positive and closed single-agent preference model with $\pi_\text{uni}\in \conv(\Pi)$.  Let $l^*$ be the number in Lemma~\ref{lem:taubounds}. 

{\bf Smoothed possibility.} Let $r\in \{r_\text{ano}, r_\text{neu}\}$. For any $n$ and any $\vec\pi\in \Pi^n$, we have:
$$\Pr\nolimits_{P\sim\vec\pi}(\rel_\text{ano}(r, P)+\rel_\text{neu}(r, P)<2)=O\left(n^{- (\frac{l^*-1}{2l^*})m!}\right)$$

{\bf Smoothed impossibility.}  For any voting rule $r$, there exist infinitely many $n\in\mathbb N$ such that:
$$\sup\nolimits_{\vec\pi\in\Pi^n}\Pr\nolimits_{P\sim\vec\pi}(\rel_\text{ano}(r, P)+\rel_\text{neu}(r, P)<2)=\Omega\left(n^{- (\frac{l^*-1}{2l^*})m!}\right)$$

\end{coro}
In particular, Corollary~\ref{coro:smoothedANR} applies to all neutral, strictly positive, and closed models, which contains $\mm_\mallows^{[\underline{\varphi},1]}$, $\mm_\pl^{[\underline{\varphi},1]}$, and IC as  special cases. This is because for any neutral model, let $\pi\in\Pi$ denote an arbitrary distribution. Then, for any permutation $\sigma$ over $\ma$, $\sigma(\pi)\in \Pi$. Therefore, $\pi_\text{uni}=\frac{1}{m!}\sum_{\sigma\in\ms_\ma}\sigma(\pi)\in\conv(\Pi)$.

\section{Proof of Proposition~\ref{prop:nonoptlexfa}}
\label{sec:nonoptlexfaproof}

\appProp{prop:nonoptlexfa}{Let $r$  be a voting rule obtained from a positional scoring correspondence by applying  \lex{} or \fa{}. Let $\mm= (\Theta,\ml(\ma),\Pi)$ be a strictly positive and closed single-agent preference model with $\pi_\text{uni}\in \conv(\Pi)$. There exist infinitely many $n\in\mathbb N$ such that:

$\hfill\sup_{\vec\pi\in\Pi^n}\Pr_{P\sim\vec\pi}\left(\rel_\text{ano}(r, P)+\rel_\text{neu}(r, P)<2\right)=\Omega(n^{- 0.5})\hfill$
}

\begin{proof}
Suppose $r$ is obtained from a correspondence $\cor$ by applying   \lex{} or \fa{}. We first prove that any profiles $P$ with $|\cor(P)|=2$ violates anonymity or neutrality under $r$. In other words,  $\rel_\text{ano}(r, P)+\rel_\text{neu}(r, P)<2$. W.l.o.g.~suppose $c(P)=\{1,2\}$. Suppose $r$ is obtained from $\cor$ by applying \lex{}-$R$. W.l.o.g.~suppose $1\succ_R 2$. Let $\sigma$ denote the permutation that exchanges $1$ and $2$. Because $\cor$ is neutral, $\cor(\sigma(P))=\sigma(\cor(P))=\{1,2\}$. By \lex{}-$R$, $r(P)=r(\sigma(P))=1\ne 2 = \sigma(r(P))$, which violates neutrality. Suppose $r$ is obtained from $\cor$ by applying  \fa{}-$j$. W.l.o.g.~suppose $1\succ 2$ in the $j$-th vote in $P$. Because $\cor(P)=\{1,2\}$, there exists a vote in $P$ where $2\succ 1$. Suppose $2\succ 1$ in the $j'$-th vote. Let $P'$ denote the profile obtained from $P$ by switching $j$-th and $j'$-th vote. Because $c$ is anonymous, we have $\cor(P')=\{1,2\}$. By \fa{}-$j$, $r(P')=2\ne 1 = r(P)$, which violates anonymity.

Therefore, for any $n$ and any $\vec\pi\in\Pi^n$, we have $\Pr_{P\sim\vec\pi}\left(\rel_\text{ano}(r, P)+\rel_\text{neu}(r, P)<2\right)\ge \Pr_{P\sim\vec\pi}(|\cor(P)|=2)$. 
By applying the tightness of polynomial bound part in Lemma~\ref{lem:maintech}, it is not hard to prove that for any positional scoring correspondence $\cor$, there exist infinitely many $n\in\mathbb N$ and corresponding $\vec\pi\in\Pi^n$ such that  $\Pr_{P\sim\vec\pi}(|\cor(P)|=2)=\Omega(n^{-0.5})$. This proves the proposition.
\end{proof}

\section{Proof of Theorem~\ref{thm:mpsropt}}\label{sec:mpsroptproof}

\appThm{thm:mpsropt}{Let $\mm= (\Theta,\ml(\ma),\Pi)$ be a strictly positive and closed single-agent preference model with $\pi_\text{uni}\in \conv(\Pi)$. For any voting correspondence $\cor$ that satisfies anonymity and neutrality, let $r_\mpsr$ denote the voting rule obtained from $\cor$ by \mpsr{}-then-$\tb$. For any $n$ and any $\vec\pi\in \Pi^n$, we have:

$\hfill\Pr_{P\sim\vec\pi}(\rel_\text{ano}(r_\mpsr, P)+\rel_\text{neu}(r_\mpsr, P)<2)=O(n^{- \frac{m!}{4}})\hfill$

Moreover, if $\tb$ satisfies anonymity (respectively, neutrality) then $r_\mpsr$ also satisfies anonymity (respectively, neutrality).
}
\begin{proof} The upper bound is proved in the following two steps. First, we show that for any profile $P$ such that $\mpsr(P)\ne\emptyset$, 
\begin{equation}\label{equ:mpsranoneu}
\rel_\text{ano}(r_\mpsr,P)+\rel_\text{neu}(r_\mpsr,P) = 2
\end{equation} 
If  $|\cor(P)|=1$, then we have $r_\mpsr (P)=  \cor(P)$. (\ref{equ:mpsranoneu}) holds because $\cor$ satisfies neutrality and anonymity. If $\mpsr(P)\ne\emptyset$, then for any permutation $\sigma$ over $\ma$, we have $\cor(\sigma(P)) = \sigma(\cor(P))$ and it is not hard to see that  $\mpsr(\sigma(P))=\sigma(\mpsr(P))$. Let $\mpsr(P) = V$. Therefore, $r_\mpsr(\sigma(P))$ is the alternative in $\sigma(c(P))$ that is ranked highest in $\sigma(V)$, which means that $r_\mpsr(\sigma(P))=\sigma(r_\mpsr(P))$. This means that $\rel_\text{neu}(r_\mpsr,P)=1$. For any profile $P'$ with $\hist(P')=\hist(P)$, it is not hard to see that $\mpsr(P')=\mpsr(P)$. Because $\cor$ satisfies anonymity, we have $\cor(P') = \cor(P)$. This means that $\rel_\text{ano}(r_\mpsr,P) =1$.

Second, we show that $\Pr_{P\sim\vec\pi}(\mpsr(P)=\emptyset)=O(n^{- \frac{m!}{4}})$ by applying the polynomial upper bound in Lemma~\ref{lem:maintech} in a way similar to the proof of Theorem~\ref{thm:smoothedANR}. For any partition $\calQ=\{Q_1,\ldots,Q_L\}$ of $\ml(\ma)$, we define $\pcons{\calQ}$ as in Definition~\ref{dfn:cab} based on $\pba{\calQ}$, which represents $\{x_R-x_{R'}=0:\forall l\le L,\forall R,R'\in Q_l\}$, and $\pbb{\calQ}=\emptyset$. We have $\rank(\pba{\calQ})=m!-|\calQ|$. Let $\mc$ denote the set of all $\pcons{\calQ}$, where each set in $\calQ$ contains at least two linear orders. It is not hard to verify that $\mc$ characterizes all $P$ with $\mpsr(P)=\emptyset$, for any $\pcons{\calQ}\in \mc$, $\ppoly{\calQ}\ne\emptyset$ and $\rank(\pba{\calQ})\ge \frac{m!}{2}$, and $|\mc|$ does not depend on $n$. Because for any $\pcons{\calQ}\in \mc$, $\pi_\text{uni}\in \ppolyz{\calQ}\cap \conv(\Pi)$, we can apply the polynomial upper bound in  Lemma~\ref{lem:maintech} to all $\pcons{\calQ}\in \mc$, which gives us $\Pr_{P\sim\vec\pi}(\mpsr(P)=\emptyset)=O(n^{- \frac{m!}{4}})$.

The ``moreover'' part follows after noticing that (1) for any pair of profiles $P$ and $P'$ such that $\mpsr(P)=\emptyset$ and $\hist(P')=\hist(P)$, we have $\mpsr(P')=\emptyset$, and (2) for any profile $P$ with $\mpsr(P)=\emptyset$ and any permutation $\sigma$ over $\ma$, we have $\mpsr(\sigma(P))=\emptyset$.
\end{proof}

\section{Smoothed Likelihood of Other Commonly-Studied Events in Social Choice}\label{sec:application}
In this section we show how to apply Lemma~\ref{lem:maintech} to obtain dichotomy results on smoothed likelihood of various social choice events. Let us start with a dichotomy result on the non-existence of Condorcet cycles. 


\begin{prop}[\bf Smoothed likelihood of non-existence of Condorcet cycles]\label{prop:nowcwcycle} Let $\mm= (\Theta,\ml(\ma),\Pi)$ be a strictly positive and closed single-agent preference model. 

\text{\bf Upper bound.} For any $n\in \mathbb N$ and any $\vec\pi\in \Pi^n$, we have:
$$
\Pr\nolimits_{P\sim\vec\pi}(\ncc(P)=1)=\left\{\begin{array}{ll}O(1)&\text{if  }\exists  \pi\in \conv(\Pi)\text{ s.t. }\umg(\pi)\text{ is acyclic} 
\\ \exp(-\Omega(n))&\text{otherwise}\end{array}\right..$$

\text{\bf Tightness of the upper bound.} There exist infinitely many $n\in \mathbb N$ such that:
$$
\sup\nolimits_{\vec\pi\in\Pi^N}\Pr\nolimits_{P\sim\vec\pi}(\ncc(P)=1)=\left\{\begin{array}{ll}\Omega(1)&\text{if }\exists  \pi\in \conv(\Pi)\text{ s.t. }\umg(\pi)\text{ is acyclic} 
\\ \exp(-O(n))&\text{otherwise}\end{array}\right..
$$

\end{prop}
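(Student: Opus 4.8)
The plan is to mimic the structure of the proof of Theorem~\ref{thm:smoothedCondorcet}, replacing the ``weak Condorcet cycle'' bookkeeping used there by the event that $\umg(P)$ itself is acyclic, and then to feed everything into Lemma~\ref{lem:maintech}. For each directed graph $G$ over $\ma$ I would reuse the constraint system $\pcons{G}$ and its solution sets $\ppoly{G},\ppolyz{G}$ from Definition~\ref{dfn:cg}, together with Claim~\ref{claim:solumg}: $\hist(P)\in\ppoly{G}$ iff $G=\umg(P)$; $\hist(P)\in\ppolyz{G}$ iff $\umg(P)$ is a subgraph of $G$; and $\rank(\pba{G})$ equals the number of ties in $G$. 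Over the finitely many acyclic directed graphs $G$ on $\ma$ (their number depends only on $m$), the events $\{\umg(P)=G\}$ are disjoint with union $\{\ncc(P)=1\}$, so $\Pr_{P\sim\vec\pi}(\ncc(P)=1)=\sum_{G\text{ acyclic}}\Pr_{P\sim\vec\pi}(\hist(P)\in\ppoly{G})$, and it remains to bound each summand via Lemma~\ref{lem:maintech}. Note $\ppoly{G}\ne\emptyset$ for every $G$ by McGarvey's theorem, exactly as in observation (2) of the proof of Theorem~\ref{thm:smoothedCondorcet}.

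For the \emph{upper bound}, the case where some $\pi\in\conv(\Pi)$ has $\umg(\pi)$ acyclic is just the trivial bound $\Pr(\cdot)\le 1=O(1)$. In the complementary case every $\pi\in\conv(\Pi)$ has a cyclic $\umg(\pi)$, and I would argue that then $\ppolyz{G}\cap\conv(\Pi)=\emptyset$ for every acyclic $G$: if $\pi\in\ppolyz{G}\cap\conv(\Pi)$, then $\umg(\pi)$ is a subgraph of the acyclic $G$ by Claim~\ref{claim:solumg}, hence itself acyclic, a contradiction. Thus the exponential branch of Lemma~\ref{lem:maintech} applies to each $\pcons{G}$, giving $\Pr_{P\sim\vec\pi}(\hist(P)\in\ppoly{G})=\exp(-\Omega(n))$; summing over the constantly-many acyclic $G$ yields $\Pr_{P\sim\vec\pi}(\ncc(P)=1)=\exp(-\Omega(n))$.

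For \emph{tightness}, in the all-cyclic case I would take the complete acyclic tournament $G_0$ given by $1\to 2\to\cdots\to m$, which is acyclic, so $\ppolyz{G_0}\cap\conv(\Pi)=\emptyset$ (as above) while $\ppoly{G_0}\ne\emptyset$; the tightness-exponential part of Lemma~\ref{lem:maintech} then gives a constant $C$ such that for every $n'$ there is $n\in[n',Cn']$ and $\vec\pi\in\Pi^n$ with $\Pr_{P\sim\vec\pi}(\hist(P)\in\ppoly{G_0})=\exp(-O(n))$, and since $\{\hist(P)\in\ppoly{G_0}\}\subseteq\{\ncc(P)=1\}$ this forces $\sup_{\vec\pi\in\Pi^n}\Pr_{P\sim\vec\pi}(\ncc(P)=1)=\exp(-O(n))$ for infinitely many $n$. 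In the case where some $\pi^*\in\conv(\Pi)$ has $\umg(\pi^*)$ acyclic, I would topologically sort $\umg(\pi^*)$ and let $G^*$ be the complete transitive tournament obtained by adding all forward edges, so that $\umg(\pi^*)$ is a subgraph of $G^*$; then $\hist(\pi^*)\in\ppolyz{G^*}$, so $\ppolyz{G^*}\cap\conv(\Pi)\ne\emptyset$, and $G^*$ has no ties, so $\rank(\pba{G^*})=0$ by Claim~\ref{claim:solumg}. The tightness-polynomial part of Lemma~\ref{lem:maintech} then produces infinitely many $n$ and $\vec\pi\in\Pi^n$ with $\Pr_{P\sim\vec\pi}(\hist(P)\in\ppoly{G^*})=\Omega(n^{-\rank(\pba{G^*})/2})=\Omega(1)$, and $\{\hist(P)\in\ppoly{G^*}\}\subseteq\{\ncc(P)=1\}$ gives $\sup_{\vec\pi\in\Pi^n}\Pr_{P\sim\vec\pi}(\ncc(P)=1)=\Omega(1)$.

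I expect no serious obstacle: the statement is essentially a corollary of Lemma~\ref{lem:maintech} together with the $\pcons{G}$ calculus already developed for Theorem~\ref{thm:smoothedCondorcet}. The only points needing (routine) care are the elementary graph facts that a subgraph of an acyclic graph is acyclic and that every acyclic directed graph extends to a transitive tournament via topological sort, the observation that a transitive tournament is tie-free (so $\rank(\pba{G^*})=0$ and the polynomial bound degenerates to $\Theta(1)$), and matching the ``infinitely many $n$'' phrasing in the statement to what the tightness part of Lemma~\ref{lem:maintech} delivers (apply it along $n'\to\infty$).
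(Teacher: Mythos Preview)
Your proposal is correct and follows essentially the same route as the paper's proof: both define $\mc$ as the $\pcons{G}$ for acyclic $G$, decompose $\{\ncc(P)=1\}$ over these, invoke McGarvey for $\ppoly{G}\ne\emptyset$, use the ``subgraph of acyclic is acyclic'' contradiction for the exponential upper bound, and handle tightness via a complete transitive supergraph of $\umg(\pi^*)$ (polynomial case, $\rank(\pba{G^*})=0$) and an arbitrary acyclic graph (exponential case). Your choice of the specific tournament $G_0$ for the exponential tightness is just a concrete instance of the paper's ``an arbitrary $\pcons{G}\in\mc$''.
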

\begin{proof}  
The proof proceeds in the following three steps.

{\bf First step: defining $\bm\mc$.}   Let $\mc$ denote the set of all $\pcons{G}$ where $G$ is an acyclic unweighted directed graph, as in Definition~\ref{dfn:cg}.  We have the following observations. 
\begin{itemize}
\item [(1)] For any profile $P$, $\ncc(P)=1$ if and only if $\hist(P)\in\bigcup_{\pcons{G}\in\mc}\ppoly{G}$. To see this, if $\ncc(P)=1$ then $\umg(P)$ is acyclic, which means that $\hist(P)\in \ppoly{\umg(P)}$, where $\pcons{\umg(P)}\in\mc$; and conversely, if $\hist(P)\in \ppoly{G}$ for an acyclic graph $G$, then $\ncc(P)=1$. 
\item [(2)] For any graph $G$, $\ppoly{G}\ne\emptyset$ by McGarvey's theorem~\cite{McGarvey53:Theorem}. 
\item [(3)] $|\mc|$ only depends on  $m$, which means that $|\mc|$ can be seen as a constant that does not depend on $n$. 
\end{itemize}

{\bf \boldmath  Second step, the $O(1)$ case.}  The $O(1)$ upper bound is straightforward.  To prove its tightness, suppose there exists $\pi\in\conv(\Pi)$ such that $\umg(\pi)$ is acyclic. This means that there exists a topological ordering of $\umg(\pi)$. Let $G$ denote an arbitrary complete acyclic supergraph of $\umg(\pi)$. It follows that $\pcons{G}\in\mc$, and $\pi\in\ppolyz{G}$ due to Claim~\ref{claim:solumg}. Therefore, $\pcons{G}\cap \conv(\Pi)\ne \emptyset$. Also note that $\pba{G}=\emptyset$, which means that $\rank(\pba{G})=0$. The tightness follows after applying the tightness of the polynomial part in Lemma~\ref{lem:maintech} to $\pcons{G}$.

{\bf Third step, the exponential case.}  For any $\pcons{G}\in\mc$ and any $\pi\in\conv(\Pi)$, we first show that $\ppolyz{G}\cap \conv(\Pi)=\emptyset$. Suppose for the sake of contradiction there exists an acyclic graph $G$ such that $\pi\in {\ppolyz{G}}\cap \conv(\Pi)$. Then, by Claim~\ref{claim:solumg}, $\umg(\pi)$ is a subgraph of $G$, which means that $\umg(\pi)$ is acyclic. This contradicts the assumption on $\conv(\Pi)$ in the exponential case.  The upper bound (respectively, its tightness)  follows after applying the exponential upper bound (respectively, its tightness) in Lemma~\ref{lem:maintech} to all $\pcons{G}\in \mc$ (respectively, an arbitrary $\pcons{G}\in \mc$).
\end{proof}

The dichotomy results in this section will be presented by the following template exemplified by Proposition~\ref{prop:nowcwcycle}. In the template, we will specify three components: 
\begin{itemize}
\item {\sc event}, which is an event of interest that depends on the profile $P$, 
\item {\sc condition}, which is often about the existence of $\pi\in\conv(\Pi)$ that satisfies a weaker version of  {\sc event},  and 
\item a number $l_\Pi$ that depends on the statistical model $\mm$.
\end{itemize}

{\sc \bf Template for dichotomy results} {\bf(Smoothed likelihood of {\sc event})}. {\em Let $\mm= (\Theta,\ml(\ma),\Pi)$ be a strictly positive and closed single-agent preference model.
  
\text{\bf Upper bound.} For any $n\in \mathbb N$ and any $\vec\pi\in \Pi^n$, we have:
$$
\Pr\nolimits_{P\sim\vec\pi}(\text{\sc event})=\left\{\begin{array}{ll}O(n^{-\frac{l_\Pi}{2}})&\text{if {\sc condition} holds}\\ \exp(-\Omega(n))&\text{otherwise}\end{array}\right..$$

\text{\bf Tightness of the upper bound.} There exist infinitely many $n\in \mathbb N$ such that:
$$
\sup\nolimits_{\vec\pi\in\Pi^N}\Pr\nolimits_{P\sim\vec\pi}(\text{\sc event})=\left\{\begin{array}{ll}\Omega(n^{-\frac{l_\Pi}{2}})&\text{if {\sc condition} holds}\\ \exp(-O(n))&\text{otherwise}\end{array}\right..
$$
}

For example, in Proposition~\ref{prop:nowcwcycle}, {\sc event} is ``there is no Condorcet cycle'', {\sc condition} is ``there exists $\pi\in\conv(\Pi)$ such that $\umg(\pi)$ is acyclic'', and $l_\Pi = 0$. Table~\ref{tab:propositions} summarizes the dichotomy results using the template. A {\em Condorcet winner} is the alternative who beats every other alternative in their head-to-head competition.  If a Condorcet winner exists, then it must be unique. A {\em weak Condorcet winner} is an alternative who never loses in head-to-head competitions. Weak Condorcet winners may not be unique.


\newcommand{\insertRow}[5]{
\parbox[c][#1]{.05\textwidth}{\centering #2} & 
\parbox[c][#1]{.26\textwidth}{ #3} & 
\parbox[c][#1]{.47\textwidth}{\raggedright #4} & 
\parbox[c][#1]{.12\textwidth}{\centering #5} \\
\hline
}
\begin{table}[htp]
\centering
\caption{Summary of dichotomy results on smoothed likelihood of events. $\conv(\Pi)$ is the convex hull of $\Pi$. $\umg$ is the unweighted majority graph.\label{tab:propositions}}
\begin{tabular}{|p{.05\textwidth}|p{.26\textwidth}|p{.47\textwidth}|p{.12\textwidth}|}
\hline
\centering Prop.& 
\centering {\sc event} &\centering {\sc condition} & \parbox[c]{.12\textwidth}{\centering $l_\Pi$} \\
\hline
\insertRow{1cm}{\ref{prop:nowcwcycle}}{ No Condorcet cycles }{ $\exists \pi\in\conv(\Pi)$ s.t.~$\umg(\pi)$ is acyclic}{ $0$}
\insertRow{1cm}{\ref{prop:wcwcycle}}{ $\exists$ Condorcet cycle of length $k$ }{ $\exists \pi\in\conv(\Pi)$ s.t.~$\umg(\pi)$  contains a weak Condorcet cycle of length $k$}{ $0$}
\insertRow{1cm}{\ref{prop:cw}}{$\exists$ Condorcet winner}{ $\exists \pi\in\conv(\Pi)$ that has at least one weak Condorcet winner }{ $0$}
\insertRow{1cm}{\ref{prop:nocw}}{ No Condorcet winner }{ $\exists \pi\in\conv(\Pi)$ and a supergraph $G$ of $\umg(\pi)$ that has no weak Condorcet winner }{ $0$ or  $1$}
\insertRow{1cm}{\ref{prop:wcw}}{$\exists$ exactly $k$ weak Condorcet winners}{$\exists \pi\in\conv(\Pi)$ that contains at least $k$ weak Condorcet winners}{$\dfrac{k(k-1)}{2}$}\insertRow{1cm}{\ref{prop:nowcw}}{No weak Condorcet winners }{$\exists \pi\in\conv(\Pi)$ and a supergraph $G$ of $\umg(\pi)$ that has no weak Condorcet winner }{$0$}

\end{tabular}
\end{table}

Any dichotomy result using the template is quite general, because it applies to all  strictly positive and closed single-agent preference models, any $n$, and any combination of distributions. In particular, it is not hard to verify that if $\pi_\text{uni}\in \conv(\Pi)$, where $\pi_\text{uni}$ is the uniform distribution over $\ml(\ma)$, then {\sc condition} is satisfied for all propositions in Table~\ref{tab:propositions}, which means that the polynomial bounds apply. For example, because $\pi_\text{uni}\in \conv(\Pi)$ for any neutral model (for any $\pi\in \Pi$, the average of $m!$ distributions obtained from $\pi$ by applying all permutations is $\pi_\text{uni}$), we have the following corollary.

\begin{coro}\label{coro:table} The polynomial bounds in Table~\ref{tab:propositions} apply to all neutral,  strictly positive, and closed models including $\mm_\mallows^{[\underline{\varphi},1]}$ and $\mm_\pl^{[\underline{\varphi},1]}$ for all $0<\underline\varphi\le 1$, and IC, which corresponds to $\Pi=\{\pi_\text{uni}\}$.
\end{coro}
In light of Corollary~\ref{coro:table} and as a result of Proposition~\ref{prop:wcwcycle}, the likelihood of Condorcet voting paradox is asymptotically maximized under IC, among all i.i.d.~distributions over $\ml(\ma)$. This givens an asymptotic answer to an open questions by~\citet{Tsetlin2003:The-impartial}.

If {\sc event} is desirable, such as ``Condorcet winner'' or ``No Condorcet cycles'', then a polynomial (sometimes $\Theta(1)$) smoothed likelihood is desirable; if {\sc event} is undesirable, such as ``No Condorcet winner'' or ``there exists a Condorcet cycle of length $k$'', then an exponential likelihood is desirable. 

\paragraph{\bf Overview of proof techniques.}  All propositions are proved by applying Lemma~\ref{lem:maintech} in the following three steps exemplified by the proof of Proposition~\ref{prop:nowcwcycle}. {\bf First, we define a set $\mc$} of constraints $\cons$'s  such that
\begin{itemize}
\item[(1)] {\sc event} is characterized by $\bigcup_{\cons\in\mc}\sol$ in the sense that {\sc event}  holds for a profile $P$ if and only if $\hist(P)\in \sol$ for some $\cons\in\mc$, 
\item[(2)] for each $\cons\in\mc$, $\sol\ne\emptyset$, and 
\item [(3)] $|\mc|$ is a constant that does not depend on $n$ (but may depend on $m$). 
\end{itemize}
{\bf Second, for  the polynomial bound}, the upper bound is proved by applying the polynomial upper bound in Lemma~\ref{lem:maintech} to all (constant number of) $\cons\in\mc$.  The tightness is proved by explicitly choosing $\cons\in\cal C$, often as a function of some $\pi\in\conv(\Pi)$, so that $\pi\in \rsol$, which implies $\rsol\cap \conv(\Pi)\ne\emptyset$, and then applying the tightness of the polynomial bound in Lemma~\ref{lem:maintech}.   {\bf Third, for the  exponential bound}, we first prove that for any $\cons\in\mc$, $\rsol\cap \conv(\Pi)=\emptyset$. Then, the upper bound (respectively, its tightness) is proved by applying the exponential bound (respectively, its tightness) in Lemma~\ref{lem:maintech} to all $\cons\in \mc$ (respectively, an arbitrary $\cons\in \mc$).


\begin{dfn}
For any $3\le k\le m$, we let $\rel_{\text{CC}=k}(P)=1$ (respectively, $\rel_{\text{WCC}=k}(P)=1$) if there exists a  Condorcet cycle (respectively, weak Condorcet cycle) of length $k$ in $P$; otherwise $\rel_{\text{CC}=k}(P)=0$ (respectively, $\rel_{\text{WCC}=k}(P)=0$).
\end{dfn}
\begin{prop}[\bf Smoothed likelihood of existence of Condorcet cycles with length $k$]\label{prop:wcwcycle} 
Let $\mm= (\Theta,\ml(\ma),\Pi)$ be a strictly positive and closed single-agent preference model.
  
\text{\bf Upper bound.} For any $n\in \mathbb N$, any $\vec\pi\in \Pi^n$, and any $3\le k\le m$, we have:
$$
\Pr\nolimits_{P\sim\vec\pi}(\rel_{\text{CC}=k}(P))=\left\{\begin{array}{ll}O(1)&\text{if }\exists \pi\in \conv(\Pi)\text{ s.t. }\rel_{\text{WCC}=k}(\pi)=1\\ \exp(-\Omega(n))&\text{otherwise}\end{array}\right..$$

\text{\bf Tightness of the upper bound.} For any $3\le k\le m$, there exist infinitely many $n\in \mathbb N$ such that:
$$
\sup\nolimits_{\vec\pi\in\Pi^N}\Pr\nolimits_{P\sim\vec\pi}(\rel_{\text{CC}=k}(P))=\left\{\begin{array}{ll}\Omega(1)&\text{if  }\exists \pi\in \conv(\Pi)\text{ s.t. }\rel_{\text{WCC}=k}(\pi)=1\\ \exp(-O(n))&\text{otherwise}\end{array}\right..
$$
\end{prop}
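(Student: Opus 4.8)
The plan is to follow the three–step template spelled out in the ``Overview of proof techniques'' paragraph, instantiated for the event $\rel_{\text{CC}=k}$, exactly as was done for Proposition~\ref{prop:nowcwcycle}. \textbf{First, define $\mc$.} Let $\mc$ be the set of all $\pcons{G}$ as in Definition~\ref{dfn:cg}, where $G$ ranges over unweighted directed graphs on $\ma$ that contain a directed cycle of length $k$. Three facts need to be checked, all immediate from Claim~\ref{claim:solumg}: (1) for any profile $P$, $\rel_{\text{CC}=k}(P)=1$ iff $\umg(P)$ contains a $k$-cycle, iff $\hist(P)\in\ppoly{\umg(P)}$ with $\pcons{\umg(P)}\in\mc$, iff $\hist(P)\in\bigcup_{\pcons{G}\in\mc}\ppoly{G}$; (2) $\ppoly{G}\ne\emptyset$ for every $G$ by McGarvey's theorem~\cite{McGarvey53:Theorem}; (3) $|\mc|$ depends only on $m$, hence is a constant in $n$. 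I will also record that $\rank(\pba{G})$ equals the number of ties in $G$, again by Claim~\ref{claim:solumg}.

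\textbf{Second, the polynomial case}, which here has $l_\Pi=0$, i.e.\ an $O(1)$ bound. The upper bound $\Pr_{P\sim\vec\pi}(\rel_{\text{CC}=k}(P))=O(1)$ is trivial. For tightness, suppose there is $\pi\in\conv(\Pi)$ with $\rel_{\text{WCC}=k}(\pi)=1$, i.e.\ some supergraph of $\umg(\pi)$ has a $k$-cycle; complete all remaining tied pairs arbitrarily to obtain a complete tournament $G$ that is still a supergraph of $\umg(\pi)$ and still contains that $k$-cycle, so $\pcons{G}\in\mc$. Since $G$ has no ties, $\rank(\pba{G})=0$, and since $\umg(\pi)$ is a subgraph of $G$ we have $\pi\in\ppolyz{G}$ by (the fractional version of) Claim~\ref{claim:solumg}, so $\ppolyz{G}\cap\conv(\Pi)\ne\emptyset$. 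Applying the tightness of the polynomial bound in Lemma~\ref{lem:maintech} to $\pcons{G}$ yields, for infinitely many $n$, a $\vec\pi\in\Pi^n$ with $\Pr_{P\sim\vec\pi}(\hist(P)\in\ppoly{G})=\Omega(n^{-0})=\Omega(1)$; and $\hist(P)\in\ppoly{G}$ forces $\umg(P)=G$ to contain the $k$-cycle, hence $\rel_{\text{CC}=k}(P)=1$, giving $\Omega(1)$.

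\textbf{Third, the exponential case}, i.e.\ $\rel_{\text{WCC}=k}(\pi)=0$ for every $\pi\in\conv(\Pi)$. I first claim $\ppolyz{G}\cap\conv(\Pi)=\emptyset$ for every $\pcons{G}\in\mc$: if $\pi\in\ppolyz{G}\cap\conv(\Pi)$ then by Claim~\ref{claim:solumg} $\umg(\pi)$ is a subgraph of $G$, and since $G$ contains a $k$-cycle, $G$ is a supergraph of $\umg(\pi)$ with a $k$-cycle, contradicting $\rel_{\text{WCC}=k}(\pi)=0$. Since each $\ppoly{G}\ne\emptyset$, the exponential upper bound in Lemma~\ref{lem:maintech} applied to each of the $O(1)$ constraints in $\mc$, together with a union bound over $\mc$, gives $\Pr_{P\sim\vec\pi}(\rel_{\text{CC}=k}(P))=\exp(-\Omega(n))$; and applying the tightness of the exponential bound in Lemma~\ref{lem:maintech} to an arbitrary $\pcons{G}\in\mc$ (whose solution set contains the $k$-cycle graph $G$) gives a matching $\exp(-O(n))$ lower bound.

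\textbf{Expected main obstacle.} The only nontrivial point is the tournament-completion step in the polynomial case: one must verify that a supergraph of $\umg(\pi)$ witnessing a weak $k$-cycle can always be extended to a \emph{complete} tournament $G$ that remains a supergraph of $\umg(\pi)$ and still contains a $k$-cycle, so that $\rank(\pba{G})=0$ and the advertised value $l_\Pi=0$ is actually attained. This is easy (orienting extra tied pairs never destroys an existing directed cycle), but it is where the specific value of $l_\Pi$ comes from. Everything else is bookkeeping: the characterization of the event by $\bigcup_{\pcons{G}\in\mc}\ppoly{G}$, the constancy of $|\mc|$, and the appeals to Lemma~\ref{lem:maintech} and Claim~\ref{claim:solumg} (including the fractional reading of Claim~\ref{claim:solumg} already used in the proof of Theorem~\ref{thm:smoothedCondorcet}).
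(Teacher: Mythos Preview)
Your approach is correct and follows the template faithfully, but it differs from the paper's in one structural choice. The paper does \emph{not} take $\mc$ to be $\{\pcons{G}:G\text{ contains a }k\text{-cycle}\}$ as in Definition~\ref{dfn:cg}; instead it indexes $\mc$ by length-$k$ cycles $p=a_1\to\cdots\to a_k\to a_1$ themselves and defines $\pcons{p}$ with $\pba{p}=\emptyset$ and $\pbb{p}$ consisting only of the $k$ strict inequalities $\pair_{a_{i+1},a_i}(\vec x_\ma)<0$. Thus $\ppoly{p}$ is the set of histograms whose UMG contains $p$ (with no constraint on the remaining pairs), and the polynomial tightness step is immediate: if $\rel_{\text{WCC}=k}(\pi)=1$ then for the witnessing weak cycle $p$ one has $\pbb{p}\cdot(\pi)^\top\le(\vec 0)^\top$ directly, so $\pi\in\ppolyz{p}$ and $\rank(\pba{p})=0$ with no tournament-completion step needed.

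Your route---using the full-graph constraints $\pcons{G}$ and then completing to a tournament to force $\rank(\pba{G})=0$---works just as well; the completion step you flagged as the ``main obstacle'' is indeed trivially sound, since adding oriented edges never destroys an existing directed cycle. The trade-off is that the paper's cycle-only constraint system is smaller (only $k$ inequalities, no equations), which makes both the polynomial tightness and the exponential contradiction slightly more direct, while your version reuses Definition~\ref{dfn:cg} and Claim~\ref{claim:solumg} wholesale and is closer in spirit to the proof of Proposition~\ref{prop:nowcwcycle}. Either decomposition yields the same bounds via Lemma~\ref{lem:maintech}.
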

\begin{proof} 
 {\bf First step, defining $\bm\mc$}. For any length-$k$ cycle $p=a_1\ra a_2\ra\cdots\ra a_k\ra a_1$ in $\ma$, we define  $\pcons{p}$, where $\pba{p} = \emptyset$ and $\pbb{p}$ represents the $k$ constraints 
 $$\{\pair_{a_2,a_1}(\vec x_\ma)<0, \pair_{a_3,a_2}(\vec x_\ma)<0, \ldots,\pair_{a_1,a_k}(\vec x_\ma)<0\}$$
  Let $\mc$ denote all such $\pcons{p}$. We have the following observations. 
\begin{itemize}
\item [(1)] For any profile $P$, $\rel_{\text{CC}=k}(P)=1$ if and only if $\hist(P)\in\bigcup_{\pcons{p}\in\mc}\ppoly{p}$. To see this,  if $\rel_{\text{CC}=k}(P)=1$ then there exists a length-$k$ cycle $p$ in $\umg(P)$, which means that $\hist(P)\in \ppoly{p}$, where $\pcons{p}\in\mc$; and conversely, if $\hist(P)\in \ppoly{p}$ for some length-$k$ cycle $p$, then $p$ is a length-$k$ Condorcet cycle in $P$, which means that $\rel_{\text{CC}=k}(P)=1$. 
\item[(2)] by McGarvey's theorem~\cite{McGarvey53:Theorem}, for each $\pcons{p}\in\mc$, there exists a profile $P$ where $p$ is a cycle in $\umg(P)$, which means that $\ppoly{p}\ne\emptyset$. 
\item [(3)] The total number of length-$k$ cycles in $\ma$ only depends on $m$ and $k$, which means that $|\mc|$ can be seen as a constant that does not depend on $n$. 
\end{itemize}

{\bf Second step, the polynomial case.} The $O(1)$ upper bound is straightforward.  To prove the tightness, suppose there exists $\pi\in\conv(\Pi)$ with $\rel_{\text{WCC}=k}(\pi)=1$. Let $p$ denote an arbitrary length-$k$ weak Condorcet cycle in $\umg(\pi)$. It follows that $\pbb{p}\cdot (\pi)^\top\le \invert{\vec 0}$, which means that ${\ppolyz{p}}\cap \conv(\Pi)\ne\emptyset$, because $\pba{p}=\emptyset$. The tightness follows after applying the tightness of the polynomial lower bound in Lemma~\ref{lem:maintech} to $\pcons{p}$, where $\rank(\pba{p})=0$.

{\bf Third step, the exponential case.}  For any $\pcons{p}\in\mc$ and any $\pi\in\conv(\Pi)$, we first prove that $\ppolyz{p}\cap \conv(\Pi)=\emptyset$. Suppose for the sake of contradiction that there exists a length-$k$ cycle $p$ such that $\pi\in {\ppolyz{p}}\cap \conv(\Pi)$. Then, because $\pbb{p}\cdot (\pi)^\top\le \invert{\vec 0}$, for any edge $a_i\ra a_{i+1}$ in $p$ we must have $\pi[a_{i},a_{i+1}]-\pi[a_{i+1},a_{i}]\ge 0$, which means that $p$ is a length-$k$ weak Condorcet cycle in $\umg(\pi)$, meaning that $\rel_{\text{WCC}=k}(\pi)=1$. This contradicts the assumption on $\conv(\Pi)$ in the exponential case.  The upper bound (respectively, its tightness) follows after applying the exponential bound (respectively, its tightness) in Lemma~\ref{lem:maintech} to all (respectively, an arbitrary) $\pcons{p}\in \mc$.
\end{proof}

\begin{dfn}
For any profile $P$, let $\cw(P)=1$ if $P$ has a Condorcet winner; otherwise let $\cw(P)=0$.  Let $\wcw(P)$ denote the number of weak Condorcet winners in $P$.
\end{dfn}
\begin{prop}[\bf Smoothed likelihood of existence of Condorcet winner]\label{prop:cw} Let $\mm= (\Theta,\ml(\ma),\Pi)$ be a strictly positive and closed single-agent preference model.
  
\text{\bf Upper bound.} For any $n\in \mathbb N$ and any $\vec\pi\in \Pi^n$, we have:
$$
\Pr\nolimits_{P\sim\vec\pi}(\cw(P)=1)=\left\{\begin{array}{ll}O(1)&\text{if }\exists\pi\in \conv(\Pi)\text{ s.t. }\wcw(\pi)\ge 1\\ \exp(-\Omega(n))&\text{otherwise}\end{array}\right..$$

\text{\bf Tightness of the upper bound.} There exist infinitely many $n\in \mathbb N$ such that:
$$
\sup\nolimits_{\vec\pi\in\Pi^n}\Pr\nolimits_{P\sim\vec\pi}(\cw(P)=1)=\left\{\begin{array}{ll}\Omega(1)&\text{if }\exists\pi\in \conv(\Pi)\text{ s.t. }\wcw(\pi)\ge 1\\ \exp(-O(n))&\text{otherwise}\end{array}\right..
$$
\end{prop}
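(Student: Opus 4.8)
The plan is to run the three-step template used for Proposition~\ref{prop:nowcwcycle} and Proposition~\ref{prop:wcwcycle}, with the constraint systems now encoding ``alternative $a$ is a Condorcet winner.''

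\textbf{Step 1 (building $\mc$).} For each $a\in\ma$ I would introduce, using the variables $\vec x_\ma$ of Definition~\ref{dfn:varcons}, a constraint system $\pcons{a}$ as in Definition~\ref{dfn:cab} with no equations, $\pba{a}=\emptyset$, and $\pbb{a}$ representing the $m-1$ strict inequalities $\{\pair_{b,a}(\vec x_\ma)<0 : b\in\ma\setminus\{a\}\}$; note $\pair_{b,a}(\hist(P))<0$ says precisely that $a$ beats $b$ head-to-head. Write $\ppoly{a}$ and $\ppolyz{a}$ for the solution sets of $\pcons{a}$ and of its $\le$-relaxation, and set $\mc=\{\pcons{a}:a\in\ma\}$. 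The three facts needed are: (1) $\cw(P)=1$ iff $\hist(P)\in\bigcup_{a\in\ma}\ppoly{a}$, since a Condorcet winner is exactly an alternative beating all others; (2) each $\ppoly{a}\neq\emptyset$ --- e.g.\ the unanimous profile with $a$ on top witnesses this, or invoke McGarvey's theorem~\cite{McGarvey53:Theorem}; (3) $|\mc|=m$ is a constant independent of $n$. I also record $\rank(\pba{a})=0$, so the polynomial exponent coming out of Lemma~\ref{lem:maintech} is $0$, consistent with the $O(1)$/$\Omega(1)$ claim.

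\textbf{Step 2 (the $O(1)$ branch).} The upper bound is trivial: $\Pr_{P\sim\vec\pi}(\cw(P)=1)\le 1=O(1)$. For tightness, given $\pi\in\conv(\Pi)$ with $\wcw(\pi)\ge 1$, let $a$ be a weak Condorcet winner of $\pi$; then $\pair_{b,a}(\pi)\le 0$ for all $b\neq a$, i.e.\ $\pi\in\ppolyz{a}$, so $\ppolyz{a}\cap\conv(\Pi)\neq\emptyset$. Since also $\ppoly{a}\neq\emptyset$ and $\rank(\pba{a})=0$, the tightness part of Lemma~\ref{lem:maintech} applied to $\pcons{a}$ gives infinitely many $n$ and $\vec\pi\in\Pi^n$ with $\Pr_{P\sim\vec\pi}(\hist(P)\in\ppoly{a})=\Omega(n^{0})=\Omega(1)$, whence $\Pr_{P\sim\vec\pi}(\cw(P)=1)=\Omega(1)$.

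\textbf{Step 3 (the exponential branch).} Assume $\wcw(\pi)=0$ for every $\pi\in\conv(\Pi)$. The key claim would be that $\ppolyz{a}\cap\conv(\Pi)=\emptyset$ for every $a$: if $\pi\in\conv(\Pi)\cap\ppolyz{a}$ then $\pair_{b,a}(\pi)\le 0$ for all $b\neq a$, i.e.\ $a$ never loses a head-to-head in $\pi$, making $a$ a weak Condorcet winner of $\pi$, a contradiction. Since $\ppoly{a}\neq\emptyset$, the exponential upper bound of Lemma~\ref{lem:maintech} gives $\Pr_{P\sim\vec\pi}(\hist(P)\in\ppoly{a})=\exp(-\Omega(n))$ for all $n$ and $\vec\pi$; summing over the $m$ systems yields $\Pr_{P\sim\vec\pi}(\cw(P)=1)=\exp(-\Omega(n))$. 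For tightness, fix any $a$ and apply the tightness of the exponential bound of Lemma~\ref{lem:maintech} to $\pcons{a}$ to obtain infinitely many $n$ and $\vec\pi\in\Pi^n$ with $\Pr_{P\sim\vec\pi}(\cw(P)=1)\ge\Pr_{P\sim\vec\pi}(\hist(P)\in\ppoly{a})=\exp(-O(n))$.

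\textbf{Anticipated difficulty.} This proposition is essentially a corollary of Lemma~\ref{lem:maintech} once the encoding is set up, so the only delicate point is bookkeeping: getting the inequality directions right so that $\ppoly{a}$ really expresses ``$a$ is a Condorcet winner'' and $\ppolyz{a}$ really expresses ``$a$ is a weak Condorcet winner'' (this is what makes {\sc condition} the existence of $\pi\in\conv(\Pi)$ with $\wcw(\pi)\ge 1$), together with noting $\pba{a}=\emptyset$ so that $l_\Pi=0$. The union characterization $\cw(P)=1\iff\hist(P)\in\bigcup_a\ppoly{a}$ and the union bound over the constant-size family $\mc$ are then routine, exactly paralleling the proof of Proposition~\ref{prop:nowcwcycle}.
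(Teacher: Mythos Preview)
Your proof is correct. The difference from the paper is in how $\mc$ is built: the paper indexes $\mc$ by all unweighted majority graphs $G$ with a Condorcet winner, using the $\pcons{G}$ machinery of Definition~\ref{dfn:cg} (so equations appear whenever $G$ has ties), and for tightness of the $\Omega(1)$ branch it passes to a complete supergraph $G^*$ of $\umg(\pi)$ to force $\pba{G^*}=\emptyset$. You instead encode ``$a$ is a Condorcet winner'' directly with $m-1$ strict inequalities and no equations, giving a family $\mc$ of size $m$ rather than of size the number of UMGs with a Condorcet winner. Your route is more economical here and makes $\rank(\pba{a})=0$ immediate; the paper's route has the advantage of reusing exactly the same $\pcons{G}$/Claim~\ref{claim:solumg} template as the surrounding propositions. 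Either way, the result is a direct application of Lemma~\ref{lem:maintech}.
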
 
\begin{proof} {\bf First step: defining $\bm\mc$}.   Let  $\mc=\{\pcons{G}: \cw(G) = 1\}$, that is,  $\mc$ contains all $\pcons{G}$ (Definition~\ref{dfn:cg}) where $G$ contains a Condorcet winner. We have the following observations. 
\begin{itemize}
\item [(1)] For any profile $P$, $\cw(P)=1$ if and only if $\hist(P)\in\bigcup_{\pcons{G}\in\mc}\ppoly{G}$. To see this, if $\cw(P)=1$ then $\hist(P)\in\ppoly{\umg(P)}\in \mc$, where $\pcons{G}\in\mc$; and conversely, if $\hist(P)\in \ppoly{G}$ for some $\pcons{G}\in\mc$, then $P$ has a Condorcet winner. 
\item [(2)] For any $G$ with $\cw(G) = 1$,  $\ppoly{G}\ne\emptyset$ due to McGarvey's theorem~\cite{McGarvey53:Theorem}. 
\item [(3)] $|\mc|$ can be seen as a constant that does not depend on $n$. 
\end{itemize}

{\bf Second step: the polynomial case.} The $O(1)$ upper bound trivially holds. To prove the tightness, suppose there exits $\pi\in \conv(\Pi)$ such that $\wcw(\pi)\ge 1$. Let $a$ denote an arbitrary weak Condorcet winner in $\umg(\pi)$. We obtain a complete graph $G^*$ from $\umg(\pi)$ by adding $a\ra b$ for all tied pairs $(a,b)$ in $\umg(\pi)$, and then adding arbitrary edges between other tied pairs in $\umg(\pi)$. It follows that $\cw(G^*)=1$ and $\pba{G^*}=\emptyset$ because $G^*$ is complete, which means that $\rank(\pba{G^*})=0$. The tightness follows after applying the tightness of the polynomial bound  in Lemma~\ref{lem:maintech} to $\pcons{G^*}$.

{\bf Third step: the exponential case.}  For any $\pcons{G}\in\mc$ and any $\pi\in\conv(\Pi)$, we now prove that $\ppolyz{G}\cap \conv(\Pi)=\emptyset$. Suppose for the sake of contradiction there exist $\pcons{G}\in\mc$ and $\pi\in {\ppolyz{G}}\cap \conv(\Pi)$. By Claim~\ref{claim:solumg}, $\umg(\pi)$ is a subgraph of $G$, which means that the Condorcet winner in $G$ is a weak Condorcet winner in $\umg(\pi)$, which contradicts the assumption that  $\wcw(\pi)=0$ in the exponential case. The upper bound  (respectively, its tightness) follows after applying the exponential bound (respectively, its tightness) in Lemma~\ref{lem:maintech} to all (respectively, an arbitrary) $\pcons{G}\in \mc$.
\end{proof}

\begin{prop}[\bf Smoothed likelihood of non-existence of Condorcet winner]\label{prop:nocw}  Let $\mm= (\Theta,\ml(\ma),\Pi)$ be a strictly positive and closed single-agent preference model. Let $\mg_\Pi$ denote the set of all unweighted directed graphs $G$ over $\ma$ such that (1) there exists $\pi\in\conv(\Pi)$ such that $\umg(\pi)\subseteq G$, and (2) $\cw(G)=0$. When $\mg_\Pi\ne\emptyset$, we let $l_\Pi = \min_{G\in  \mg}\ties(G)$, where $\ties(G)$ denote the number of unordered pairs that are tied in $G$. 
  
\text{\bf Upper bound.} For any $n\in \mathbb N$ and any $\vec\pi\in \Pi^n$, we have:
$$
\Pr\nolimits_{P\sim\vec\pi}(\cw(P)=0)=\left\{\begin{array}{ll}O(n^{-\frac{l_\Pi}{2}})&\text{if  }\mg_\Pi\ne\emptyset\\ \exp(-\Omega(n))&\text{otherwise}\end{array}\right..$$

\text{\bf Tightness of the upper bound.} There exist infinitely many $n\in \mathbb N$ such that:
$$
\sup\nolimits_{\vec\pi\in\Pi^N}\Pr\nolimits_{P\sim\vec\pi}(\cw(P)=0)=\left\{\begin{array}{ll}\Omega(n^{-\frac{l_\Pi}{2}})&\text{if  }\mg_\Pi\ne\emptyset\\ \exp(-O(n))&\text{otherwise}\end{array}\right..
$$
\end{prop}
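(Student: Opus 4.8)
The plan is to reuse the three-step template that produced Propositions~\ref{prop:nowcwcycle}--\ref{prop:cw}, now indexing over the unweighted majority graphs that have no Condorcet winner. First I would let $\mc=\{\pcons{G}: G\text{ an unweighted directed graph over }\ma\text{ with }\cw(G)=0\}$, with $\pcons{G}$ as in Definition~\ref{dfn:cg}. Three observations then follow exactly as before: (i) for any profile $P$, $\cw(P)=0$ if and only if $\hist(P)\in\bigcup_{\pcons{G}\in\mc}\ppoly{G}$, because $\cw(P)=0$ exactly when $\umg(P)$ lacks a Condorcet winner and $\hist(P)\in\ppoly{G}\Leftrightarrow G=\umg(P)$ by Claim~\ref{claim:solumg}; (ii) $\ppoly{G}\neq\emptyset$ for every $\pcons{G}\in\mc$ by McGarvey's theorem~\cite{McGarvey53:Theorem}; and (iii) $|\mc|$ depends only on $m$, hence is a constant for the asymptotics in $n$. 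Consequently $\Pr(\cw(P)=0)\le\sum_{\pcons{G}\in\mc}\Pr(\hist(P)\in\ppoly{G})$, and it suffices to bound each summand through Lemma~\ref{lem:maintech}.

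\textbf{Polynomial case} ($\mg_\Pi\neq\emptyset$): Claim~\ref{claim:solumg} gives $\rank(\pba{G})=\ties(G)$, and, using the fractional-profile reading of Claim~\ref{claim:solumg} as in the proof of Proposition~\ref{prop:cw}, $\ppolyz{G}\cap\conv(\Pi)\neq\emptyset$ if and only if some $\pi\in\conv(\Pi)$ satisfies $\umg(\pi)\subseteq G$, i.e.\ if and only if $G\in\mg_\Pi$. Thus for $G\in\mc\setminus\mg_\Pi$ Lemma~\ref{lem:maintech} yields $\Pr(\hist(P)\in\ppoly{G})=\exp(-\Omega(n))$, while for $G\in\mg_\Pi$ it yields $O(n^{-\ties(G)/2})$, which is $O(n^{-l_\Pi/2})$ since $l_\Pi=\min_{G\in\mg_\Pi}\ties(G)$; summing the constantly many terms gives the claimed $O(n^{-l_\Pi/2})$. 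For tightness I would fix $G^*\in\mg_\Pi$ with $\ties(G^*)=l_\Pi$; some $\pi\in\conv(\Pi)$ has $\umg(\pi)\subseteq G^*$, so $\pi\in\ppolyz{G^*}\cap\conv(\Pi)$, and $\rank(\pba{G^*})=l_\Pi$, so applying the tightness of the polynomial bound in Lemma~\ref{lem:maintech} to $\pcons{G^*}$ yields infinitely many $n$ and $\vec\pi\in\Pi^n$ with $\Pr(\hist(P)\in\ppoly{G^*})=\Omega(n^{-l_\Pi/2})$; since $\hist(P)\in\ppoly{G^*}$ forces $\umg(P)=G^*$ and hence $\cw(P)=0$, this is the matching lower bound.

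\textbf{Exponential case} ($\mg_\Pi=\emptyset$): for every $\pcons{G}\in\mc$ and every $\pi\in\conv(\Pi)$ one has $\ppolyz{G}\cap\conv(\Pi)=\emptyset$, since otherwise such a $\pi$ would give $\umg(\pi)\subseteq G$ with $\cw(G)=0$, placing $G$ in $\mg_\Pi$; so the exponential bound in Lemma~\ref{lem:maintech} applies to each $\pcons{G}\in\mc$ and, summed over constantly many graphs, produces $\exp(-\Omega(n))$, while the matching lower bound follows by applying the tightness of the exponential bound to any single $\pcons{G}\in\mc$ (and $\mc\neq\emptyset$ because $m\ge 3$, e.g.\ a graph containing a $3$-cycle has no Condorcet winner). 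The main obstacle, such as it is, is the bookkeeping in the polynomial case: one must verify that the graphs in $\mc$ whose relaxed polytope meets $\conv(\Pi)$ are exactly those in $\mg_\Pi$, that for each such graph the rate is governed by $\ties(G)$ via the rank computation of Claim~\ref{claim:solumg}, and that every remaining graph in $\mc$ contributes only exponentially and hence negligibly; everything else is a direct instantiation of Lemma~\ref{lem:maintech} in the style of Propositions~\ref{prop:nowcwcycle}--\ref{prop:cw}.
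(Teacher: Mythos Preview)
Your proposal is correct and follows essentially the same approach as the paper's proof: defining $\mc=\{\pcons{G}:\cw(G)=0\}$, invoking Claim~\ref{claim:solumg} to identify $\rank(\pba{G})=\ties(G)$ and to show that $\ppolyz{G}\cap\conv(\Pi)\ne\emptyset$ precisely when $G\in\mg_\Pi$, and then applying Lemma~\ref{lem:maintech} termwise. Your explicit partition of $\mc$ into $\mg_\Pi$ and $\mc\setminus\mg_\Pi$ and the remark that $\mc\ne\emptyset$ for $m\ge 3$ are slightly more detailed than the paper's presentation, but the argument is the same.
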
 
\begin{proof} {\bf First step: defining $\bm\mc$}.   Let  $\mc=\{\pcons{G}: \cw(G) = 0\}$, where $\pcons{G}$ is defined in Definition~\ref{dfn:cg}. That is,  $\mc$ contains all $\pcons{G}$ where $G$ does not contain a Condorcet winner. We have the following observations. 
\begin{itemize}
\item [(1)] For any profile $P$, $\cw(P)=0$ if and only if $\hist(P)\in\bigcup_{\pcons{G}\in\mc}\ppoly{G}$. To see this, if  $\cw(P)=0$ then $\hist(P)\in\ppoly{\umg(P)}$, where $\pcons{\umg(P)}\in\mc$; and conversely, 
if $\hist(P)\in \ppoly{G}$ for some $\pcons{G}\in\mc$, then $P$ does not has a Condorcet winner. 
\item [(2)] For any $G$ with $\cw(G) = 0$,  we have $\ppoly{G}\ne\emptyset$ due to McGarvey's theorem~\cite{McGarvey53:Theorem}. 
\item [(3)] $|\mc|$ can be seen as a constant that does not depend on $n$. 
\end{itemize}

{\bf Second step: the polynomial case.}  To prove the upper bound, we note that for any $G$ such that (1) there is no Condorcet winner and (2) $G$ is a supergraph of the UMG of some $\pi\in\conv(\Pi)$, the number of ties in $G$ is at least $l_\Pi$, which means that $\rank(\pba{G})\ge l_\Pi$. Therefore, according to observation (1) above, we have: 
$$\Pr\nolimits_{P\sim\vec\pi}(\wcw(P)=k)\le \sum_{G: \pcons{G}\in \mc}\Pr\nolimits_{P\sim\vec\pi}(\hist(P)\in\ppoly{G})= O\left(n^{-\frac{l_\Pi}{2}}\right)$$ 
The last part follows after applying the polynomial upper bound in Lemma~\ref{lem:maintech} to all $\pcons{G}\in \mc$ and the observation (3) above. In particular, for any graph $G$ that is not a supergraph of the UMG of any $\pi\in\conv(\Pi)$, $\Pr_{P\sim\vec\pi}(\hist(P)\in\ppoly{G})$ is exponentially small due to Claim~\ref{claim:solumg} and the exponential upper bound in Lemma~\ref{lem:maintech} applied to $\pcons{G}$.

To prove the tightness, let $\pi\in \conv(\Pi)$ denote a distribution such that there exists a supergraph $G^*\in \mg_\Pi$ of $\umg(\pi)$ where $G^*$ contains $l_\Pi$ ties. By Claim~\ref{claim:solumg}, $\pi\in \ppolyz{G^*}$, which means that $\ppolyz{G^*}\cap  \conv(\Pi)\ne \emptyset$. Also by Claim~\ref{claim:solumg}, $\rank(\pba{G^*})=l_\Pi$. The tightness follows after applying the polynomial  tightness in Lemma~\ref{lem:maintech} to $\pcons{G^*}$.

{\bf Third step: the exponential case.}  For any $\pcons{G}\in \mc$ and any $\pi\in\conv(\Pi)$, we first prove that $\ppolyz{G}\cap \conv(\Pi)=\emptyset$. Suppose for the sake of contradiction that such $\pcons{G}\in\mc$ and $\pi\in {\ppolyz{G}}\cap \conv(\Pi)$ exist. It follows from Claim~\ref{claim:solumg} that $\umg(\pi)$ is a subgraph of $G$, which means that $G\in\mg_\Pi$. This contradicts the assumption that $\mg=\emptyset$. The upper bound (respectively, its tightness) follows after applying the exponential bound (respectively, its tightness) in Lemma~\ref{lem:maintech} to all  (respectively, an arbitrary) $\pcons{G}\in \mc$.
\end{proof}
The following claim implies that $l$ in Proposition~\ref{prop:nocw} can only be $0$ or $1$.
\begin{claim}\label{claim:cond01} For any unweighted directed graph $G$ over $\ma$ that does not contain a Condorcet winner, there exists a supergraph of $G$, denoted by $G^*$, such that $G^*$ does not contain a Condorcet winner, and the number of ties in $G^*$ is no more than one. The upper bound of one is tight. 
\end{claim}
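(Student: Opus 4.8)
The plan is to track which vertices can possibly become Condorcet winners once edges are added, and to orient the tied pairs so as to disqualify all of them while spending at most one tie. Let $S\subseteq\ma$ be the set of vertices of $G$ with no incoming edge (the undominated alternatives). Since $G^*$ is a supergraph of $G$, every incoming edge of $G$ survives, so any Condorcet winner of $G^*$ must lie in $S$; moreover any two vertices of $S$ are tied in $G$, since by definition neither beats the other. I would then split into cases on $|S|$.

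\textbf{Case $|S|\le 1$.} If $S=\emptyset$, orient every tied pair of $G$ arbitrarily; the resulting tournament $G^*$ has no undominated vertex, hence no Condorcet winner, and zero ties. If $S=\{a\}$, then because $G$ has no Condorcet winner, $a$ fails to beat some $b$; since $a\in S$, $b$ does not beat $a$ either, so $\{a,b\}$ is tied, and I would orient it as $b\to a$ and orient all other tied pairs arbitrarily. Now every vertex of $G^*$ has an incoming edge, so $G^*$ has no Condorcet winner and zero ties. (This is the only place the hypothesis that $G$ has no Condorcet winner is used.)

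\textbf{Case $|S|\ge 2$.} Fix distinct $a_1,a_2\in S$; the pair $\{a_1,a_2\}$ is tied. I would orient every tied pair of $G$ \emph{except} $\{a_1,a_2\}$, taking care to orient each tied pair $\{a_1,a_j\}$ with $a_j\in S\setminus\{a_1,a_2\}$ as $a_1\to a_j$, and all remaining tied pairs arbitrarily. Then $G^*$ has exactly one tie, $\{a_1,a_2\}$, and no Condorcet winner: a Condorcet winner would lie in $S$, but $a_1,a_2$ do not beat each other in $G^*$, each other $a_j\in S$ has the incoming edge $a_1\to a_j$, and every vertex outside $S$ already has an incoming edge in $G$. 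This proves the first assertion, with zero ties when $|S|\le 1$ and one tie when $|S|\ge 2$.

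For the tightness of the bound of one, I would exhibit $G$ on $\ma=\{1,2,3\}$ with edges $1\to 3$, $2\to 3$ and $\{1,2\}$ tied: $G$ has no Condorcet winner, but each of its two tournament completions turns $1$ (resp.\ $2$) into a Condorcet winner, so no Condorcet-winner-free supergraph of $G$ has zero ties, while $G$ itself is such a supergraph with exactly one tie. The only step needing any care is the $|S|\ge 2$ construction—one must make sure that, after keeping the single tie $\{a_1,a_2\}$, no vertex of $S$ is left beating everybody; orienting a star of edges from $a_1$ into the rest of $S$ handles this cleanly.
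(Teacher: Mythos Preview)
Your proof is correct and takes a genuinely different route from the paper's. The paper argues non-constructively: take a Condorcet-winner-free supergraph $G^*$ with the minimum number of ties; if some pair $\{a,b\}$ is tied, then by minimality adding either $a\to b$ or $b\to a$ must create a Condorcet winner, forcing $a$ and $b$ each to beat every other alternative in $G^*$; this in turn forces $\{a,b\}$ to be the \emph{only} tie, since adding an edge within any other tied pair could not create a Condorcet winner. Your argument is instead explicitly constructive: you identify the set $S$ of undominated vertices, observe that only these can ever become Condorcet winners in a supergraph, and then orient ties so as to give each of them an incoming edge (or leave a single tie between two of them when $|S|\ge 2$).

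What each buys: the paper's minimality argument is shorter and avoids any case split, but gives no recipe for $G^*$. Your construction is a few lines longer but is algorithmic and yields the finer statement that zero ties suffice exactly when $|S|\le 1$. Your tightness example (the three-alternative graph with $1\to 3$, $2\to 3$, $\{1,2\}$ tied) is a concrete instance of the paper's more tersely described family ``$a$ and $b$ are the only two weak Condorcet winners.''
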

\begin{proof}
Let $G^*$ denote a supergraph of $G$ without a Condorcet winner and with the minimum number of ties. If there is no tie in $G^*$ then the claim is proved. For any pair of tied alternatives $a$ and $b$ in $G^*$, adding $a\ra b$ to $G^*$ must lead to a Condorcet winner due to the minimality of $G^*$, and $a$ must be the Condorcet winner. This means that $a$ beats all alternatives other than $b$ in $G^*$. Similarly, $b$ beats all alternatives other than $a$ in $G^*$. This means that $\{a,b\}$  is the only tie in $G^*$ because if there exists another tie $\{c,d\}$, then adding $c\ra d$ to $G^*$ will not introduce a Condorcet winner, which contradicts the minimality of $G^*$. Therefore, the number of ties in $G^*$ is upper bounded by $1$. The tightness of the upper bound is proved by letting $G$ be a graph where $a$ and $b$ are the only weak Condorcet winners.\end{proof}

\begin{prop}[\bf Smoothed likelihood of exactly $k$ weak Condorcet winners]\label{prop:wcw} Let $\mm= (\Theta,\ml(\ma),\Pi)$ be a strictly positive and closed single-agent preference model.
  
\text{\bf Upper bound.} For any $n\in \mathbb N$, any $1\le k\le m$,  and any $\vec\pi\in \Pi^n$, we have:
$$
\Pr\nolimits_{P\sim\vec\pi}(\wcw(P)=k)=\left\{\begin{array}{ll}O(n^{-\frac{k(k-1)}{4}})&\text{if  } \exists\pi\in \conv(\Pi)\text{ s.t. }\wcw(\pi)\ge k\\ \exp(-\Omega(n))&\text{otherwise}\end{array}\right..$$

\text{\bf Tightness of the upper bound.} For any $1\le k\le m$, there exist infinitely many $n\in \mathbb N$ such that:
$$
\sup\nolimits_{\vec\pi\in\Pi^N}\Pr\nolimits_{P\sim\vec\pi}(\wcw(P)=k)=\left\{\begin{array}{ll}\Omega(n^{-\frac{k(k-1)}{4}})&\text{if  } \exists\pi\in \conv(\Pi)\text{ s.t. }\wcw(\pi)\ge k\\ \exp(-O(n))&\text{otherwise}\end{array}\right..
$$
\end{prop}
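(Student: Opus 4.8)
The plan is to follow the three‑step template used in Propositions~\ref{prop:nowcwcycle}--\ref{prop:nocw}. \emph{Step 1 (defining $\mc$).} Let $\mc$ be the set of all $\pcons{G}$ (Definition~\ref{dfn:cg}) where $G$ ranges over unweighted directed graphs over $\ma$ having exactly $k$ weak Condorcet winners, i.e.~exactly $k$ vertices with no incoming edge. By Claim~\ref{claim:solumg}, for any profile $P$ we have $\wcw(P)=k$ iff $\hist(P)\in\ppoly{\umg(P)}$ and $\umg(P)$ has $k$ weak Condorcet winners, hence $\wcw(P)=k$ iff $\hist(P)\in\bigcup_{\pcons{G}\in\mc}\ppoly{G}$; each $\ppoly{G}\ne\emptyset$ by McGarvey's theorem~\cite{McGarvey53:Theorem}; and $|\mc|$ depends only on $m$. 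The crucial combinatorial fact is that if $a_1,\dots,a_k$ are the weak Condorcet winners of such a $G$, then for every pair $i\ne j$ the edge $a_i\to a_j$ is forbidden (it would make $a_j$ lose), so $\{a_i,a_j\}$ must be tied in $G$; hence $G$ has at least $\binom{k}{2}=\frac{k(k-1)}{2}$ ties, and by the ``moreover'' clause of Claim~\ref{claim:solumg}, $\rank(\pba{G})\ge\frac{k(k-1)}{2}$ for every $\pcons{G}\in\mc$.

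\emph{Step 2 (upper bound).} Apply Lemma~\ref{lem:maintech} to each of the constantly many $\pcons{G}\in\mc$: when $\ppolyz{G}\cap\conv(\Pi)\ne\emptyset$ the contribution is $O(n^{-\rank(\pba{G})/2})=O(n^{-\frac{k(k-1)}{4}})$ by Step 1, and when $\ppolyz{G}\cap\conv(\Pi)=\emptyset$ it is $\exp(-\Omega(n))$; summing gives $\Pr_{P\sim\vec\pi}(\wcw(P)=k)=O(n^{-\frac{k(k-1)}{4}})$ unconditionally, in particular in the ``if'' case. For the sharper exponential bound in the ``otherwise'' case, I would show that if no $\pi\in\conv(\Pi)$ has $\wcw(\pi)\ge k$, then $\ppolyz{G}\cap\conv(\Pi)=\emptyset$ for all $\pcons{G}\in\mc$: if $\pi\in\ppolyz{G}$ then $\umg(\pi)$ is a subgraph of $G$ by Claim~\ref{claim:solumg}, so the $k$ vertices of $G$ without incoming edge still have no incoming edge in $\umg(\pi)$, giving $\wcw(\pi)\ge k$, a contradiction; the exponential upper bound then follows from the exponential case of Lemma~\ref{lem:maintech} applied to all $\pcons{G}\in\mc$.

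\emph{Step 3 (tightness).} In the polynomial case, assuming some $\pi\in\conv(\Pi)$ has $\wcw(\pi)\ge k$, fix any $k$ weak Condorcet winners $a_1,\dots,a_k$ of $\umg(\pi)$, put $S=\{a_1,\dots,a_k\}$, and build $G^*$ from $\umg(\pi)$ by keeping tied exactly the $\binom{k}{2}$ pairs inside $S$ (they are already tied since the $a_i$ never lose), orienting every tied pair $\{a_i,b\}$ with $a_i\in S$, $b\notin S$ as $a_i\to b$, and orienting the remaining tied pairs arbitrarily. Then $G^*\supseteq\umg(\pi)$, each $a_i$ still has no incoming edge, and each $b\notin S$ has an incoming edge in $G^*$ (one it already had in $\umg(\pi)$ if $b$ was not a weak Condorcet winner there, and otherwise the newly added $a_1\to b$, since such a $b$ is tied with $a_1$ in $\umg(\pi)$); hence $\wcw(G^*)=k$, $\pcons{G^*}\in\mc$, $G^*$ has exactly $\frac{k(k-1)}{2}$ ties so $\rank(\pba{G^*})=\frac{k(k-1)}{2}$, and $\pi\in\ppolyz{G^*}$ by Claim~\ref{claim:solumg}, so $\ppolyz{G^*}\cap\conv(\Pi)\ne\emptyset$ and $\ppoly{G^*}\ne\emptyset$ by McGarvey. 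Applying the polynomial‑tightness part of Lemma~\ref{lem:maintech} to $\pcons{G^*}$ gives infinitely many $n$ with $\sup_{\vec\pi\in\Pi^n}\Pr(\hist(P)\in\ppoly{G^*})=\Omega(n^{-\frac{k(k-1)}{4}})$, and $\hist(P)\in\ppoly{G^*}$ forces $\umg(P)=G^*$, hence $\wcw(P)=k$. In the exponential case, note $\mc\ne\emptyset$ (e.g.~let the vertices of $S$ be mutually tied, each beating every vertex outside $S$, with a transitive tournament among the vertices outside $S$), and apply the exponential‑tightness part of Lemma~\ref{lem:maintech} to any fixed $\pcons{G}\in\mc$, for which $\ppolyz{G}\cap\conv(\Pi)=\emptyset$ by the argument in Step 2.

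\emph{Main obstacle.} The only nontrivial steps are the two graph‑combinatorics facts: the lower bound $\rank(\pba{G})\ge\binom{k}{2}$ for every admissible $G$ (needed for the upper bound), and, conversely, the construction of an admissible $G^*\supseteq\umg(\pi)$ with exactly $\binom{k}{2}$ ties (needed for tightness), where one must take care that unselected weak Condorcet winners of $\umg(\pi)$ acquire an incoming edge in $G^*$; everything else is a routine instantiation of the template and of Lemma~\ref{lem:maintech}.
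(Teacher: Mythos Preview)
Your proposal is correct and follows essentially the same three-step template as the paper's own proof: the same choice of $\mc=\{\pcons{G}:\wcw(G)=k\}$, the same key observation that any such $G$ has at least $\binom{k}{2}$ ties (hence $\rank(\pba{G})\ge\frac{k(k-1)}{2}$), and the same tightness witness $G^*\supseteq\umg(\pi)$ with exactly $\binom{k}{2}$ ties. If anything, your construction of $G^*$ is spelled out more carefully than the paper's (you explicitly verify that every $b\notin S$ acquires an incoming edge, including extra weak Condorcet winners of $\umg(\pi)$ not selected into $S$), and you also make explicit that $\mc\ne\emptyset$ for the exponential-tightness step, which the paper leaves implicit.
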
 
\begin{proof} 
\noindent{\bf First step: defining $\bm\mc$}.   Let  $\mc=\{\pcons{G}: \wcw(G) = k\}$, where $\pcons{G}$ is defined in Definition~\ref{dfn:cg}. That is,  $\mc$ contains all $\pcons{G}$ where $G$ has exactly $k$ weak Condorcet winners. 
We have the following observations. 
\begin{itemize}
\item [(1)] For any profile $P$, $\wcw(P)=k$ if and only if $\hist(P)\in\bigcup_{\pcons{G}\in\mc}\ppoly{G}$. To see this, if $\wcw(P)=k$ then $\hist(P)\in\ppoly{\umg(P)}$, where $\umg(P)\in\mc$; and conversely, if $\hist(P)\in \ppoly{G}$, then $G=\umg(P)$ contains exactly $k$ weak Condorcet winners, which means that $\wcw(P)=k$. 
\item [(2)] by McGarvey's theorem~\cite{McGarvey53:Theorem}, for each $\pcons{G}\in\mc$, there exists a profile $P$ with $\umg(P)=G$, which means that $\ppoly{G}\ne\emptyset$. 
\item [(3)] The total number of unweighted directed graphs over  $\ma$ only depends on $m$, which means that $|\mc|$ can be seen as a constant that does not depend on $n$. 
\end{itemize}

\vspace{1mm}
\noindent{\bf Second step: the polynomial case.} To prove the upper bound, we note that for any $G$ that has exactly $k$ weak Condorcet winners, the number of ties is at least $\frac{k(k-1)}{2}$, which means that $\rank(\pba{G})\ge \frac{k(k-1)}{2}$. Therefore, according to observation (1) above, we have: 
$$\Pr\nolimits_{P\sim\vec\pi}(\wcw(P)=k)\le \sum_{G: \pcons{G}\in \mc}\Pr\nolimits_{P\sim\vec\pi}(\hist(P)\in\ppoly{G})= O\left(n^{-\frac{k(k-1)}{4}}\right)$$ 
The last part follows after applying the polynomial upper bound in Lemma~\ref{lem:maintech} to all $\pcons{G}\in \mc$ and the observation (3) above. 

To prove the tightness, suppose there exists $\pi\in\conv(\Pi)$ with $\wcw(\pi)\ge k$. This means that there exists a supergraph  of $\umg(\pi)$ over $\ma$, denoted by $G^*$, that has exactly $k$ weak Condorcet winners, and there is an edge from any weak Condorcet winner to any other alternative. By Claim~\ref{claim:solumg}, $\pi\in \ppolyz{G^*}$ and  $\rank(\pba{G^*})=\frac{k(k-1)}{2}$. The tightness follows after applying the tightness of the polynomial   bound in Lemma~\ref{lem:maintech} to $\pcons{G^*}$.

\vspace{1mm}
\noindent{\bf Third step: the exponential case.}  For any $\pcons{G}\in\mc$ and any $\pi\in\conv(\Pi)$, we first prove that $\ppolyz{G}\cap \conv(\Pi)=\emptyset$. Suppose for the sake of contradiction there exist $\pcons{G}\in\mc$ and $\pi\in {\ppolyz{G}}\cap \conv(\Pi)$. It follows that $\umg(\pi)$ is a subgraph of $G$, which means that all weak Condorcet winners in $G$ must also be weak Condorcet winners in $\pi$. Because there are $k$ weak Condorcet winners in $G$, we have $\wcw(\pi)\ge k$, which is a contradiction. The lower (respectively, upper) bound follows after applying Lemma~\ref{lem:maintech} to an arbitrary (respectively, all) $\pcons{G}\in \mc$.
\end{proof}
Consider the special case where  $\pi_\text{uni}\in \conv(\Pi)$. Notice that all edge weights in $\pi_\text{uni}$ are $0$, which means that $\pi_\text{uni}$ satisfies all pairwise constraints $\pair_{a,b}$ defined in Definition~\ref{dfn:varcons}. Consequently, for any $\ba$ and $\bb$ that only contain pairwise constraints, we have $\rsol\cap \conv(\Pi)\ne \emptyset$. This observation leads to the following corollary of Proposition~\ref{prop:wcw}.

\begin{coro}\label{coro:cc} Let $\mm= (\Theta,\ml(\ma),\Pi)$ be a  strictly positive and closed single-agent preference model with $\pi_\text{uni}\in \conv(\Pi)$.  For any $1\le k\le m$, any $n\in\mathbb N$, and any $\vec \pi\in\Pi^n$, we have $\Pr_{P\sim \vec\pi}(\wcw(P)=k)=O(n^{-\frac{k(k-1)}{4}})$. The bound is tight for infinitely many $n\in\mathbb N$ and corresponding $\vec \pi\in \Pi^n$. 
\end{coro}
When $\mm$ is neutral, we have $\pi_\text{uni}\in \conv(\Pi)$. This is because for any $\pi\in \Pi$, the average of $m!$ distributions obtained from $\pi$ by applying all permutations is $\pi_\text{uni}$. Therefore, Corollary~\ref{coro:cc} applies to all neutral,  strictly positive, and closed models including $\mm_\mallows^{[\underline{\varphi},1]}$ and $\mm_\pl^{[\underline{\varphi},1]}$ for all $0<\underline{\varphi}\le 1$.

\begin{prop}[\bf Smoothed likelihood of non-existence of weak Condorcet winners]\label{prop:nowcw} Let $\mm= (\Theta,\ml(\ma),\Pi)$ be a strictly positive and closed single-agent preference model.
  
\text{\bf Upper bound.} For any $n\in \mathbb N$ and any $\vec\pi\in \Pi^n$, we have:
$$
\Pr\nolimits_{P\sim\vec\pi}(\wcw(P)=0)=\left\{\begin{array}{ll}O(1)&\text{if  }\exists\pi\in \conv(\Pi)\text{ and }G\supseteq \umg(\pi)\text{ s.t. }\wcw(G)=0\\ \exp(-\Omega(n))&\text{otherwise}\end{array}\right.$$

\text{\bf Tightness of the upper bound.} There exist infinitely many $n\in \mathbb N$ such that:
$$
\sup\nolimits_{\vec\pi\in\Pi^N}\Pr\nolimits_{P\sim\vec\pi}(\wcw(P)=0)=\left\{\begin{array}{ll}\Omega(1)&\begin{array}{c}\text{if  }\exists\pi\in \conv(\Pi)\text{ and }G\supseteq \umg(\pi)\\ \text{ s.t. }\wcw(G)=0\end{array}\\ \exp(-O(n))&\text{otherwise}\end{array}\right.$$
\end{prop}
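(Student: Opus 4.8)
The plan is to follow the three-step template used for the other propositions in this section, instantiated with \textsc{event} being $\wcw(P)=0$ and $l_\Pi=0$. First I would define $\mc$ to be the set of all $\pcons{G}$ (as in Definition~\ref{dfn:cg}) where $G$ ranges over unweighted directed graphs over $\ma$ with $\wcw(G)=0$; note that $\wcw(G)=0$ is equivalent to saying that every alternative has at least one incoming edge in $G$ (an alternative is a weak Condorcet winner exactly when it has in-degree $0$ in $\umg$, i.e.\ never loses). Using Claim~\ref{claim:solumg} one then gets the three standard observations: (1) for any profile $P$, $\wcw(P)=0$ iff $\hist(P)\in\bigcup_{\pcons{G}\in\mc}\ppoly{G}$, since $\hist(P)\in\ppoly{\umg(P)}$ and $\wcw(P)$ is determined by $\umg(P)$; (2) $\ppoly{G}\ne\emptyset$ for each such $G$ by McGarvey's theorem~\cite{McGarvey53:Theorem}; and (3) $|\mc|$ depends only on $m$, hence is a constant independent of $n$.

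For the polynomial case the $O(1)$ upper bound is immediate because any probability is $O(1)$. For its tightness, given $\pi\in\conv(\Pi)$ and a supergraph $G\supseteq\umg(\pi)$ with $\wcw(G)=0$, I would first complete $G$ to a tournament $G^*\supseteq G\supseteq\umg(\pi)$ by orienting all remaining ties arbitrarily; since ``no weak Condorcet winner'' is upward-closed under adding edges (every vertex already has an incoming edge in $G$, and $G^*$ only adds edges), we still have $\wcw(G^*)=0$, so $\pcons{G^*}\in\mc$. By Claim~\ref{claim:solumg}, $\pi\in\ppolyz{G^*}$, so $\ppolyz{G^*}\cap\conv(\Pi)\ne\emptyset$, and $\rank(\pba{G^*})=\ties(G^*)=0$. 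Applying the tightness of the polynomial bound in Lemma~\ref{lem:maintech} to $\pcons{G^*}$ then gives $\sup_{\vec\pi}\Pr_{P\sim\vec\pi}(\hist(P)\in\ppoly{G^*})=\Omega(n^{0})=\Omega(1)$, and hence $\sup_{\vec\pi}\Pr_{P\sim\vec\pi}(\wcw(P)=0)=\Omega(1)$.

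For the exponential case, suppose the condition fails, i.e.\ for every $\pi\in\conv(\Pi)$ and every supergraph $G$ of $\umg(\pi)$ one has $\wcw(G)\ge 1$. I claim $\ppolyz{G}\cap\conv(\Pi)=\emptyset$ for every $\pcons{G}\in\mc$: if some $\pi\in\ppolyz{G}\cap\conv(\Pi)$, then by Claim~\ref{claim:solumg} $\umg(\pi)$ is a subgraph of $G$, so $G$ is a supergraph of $\umg(\pi)$ with $\wcw(G)=0$, contradicting the assumption. The exponential upper bound then follows by applying the exponential upper bound of Lemma~\ref{lem:maintech} to each of the constantly many $\pcons{G}\in\mc$ and summing; the tightness follows by applying the exponential tightness of Lemma~\ref{lem:maintech} to an arbitrary $\pcons{G}\in\mc$, which is nonempty because, e.g., for $m\ge 3$ a tournament containing a Hamiltonian cycle has every vertex of in-degree $\ge 1$ and hence $\wcw=0$.

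I expect the only genuinely non-routine point to be the observation that $l_\Pi=0$ always holds here, in contrast with Proposition~\ref{prop:nocw} where $l_\Pi\in\{0,1\}$: this is precisely the monotonicity remark above, namely that ``no weak Condorcet winner'' is preserved when orienting ties, so one can always pass to a tournament supergraph and eliminate all ties, whereas ``no Condorcet winner'' is not monotone and may force one remaining tie (cf.\ Claim~\ref{claim:cond01}). Everything else is a verbatim instantiation of the template, so I do not anticipate further obstacles.
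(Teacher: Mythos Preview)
Your proposal is correct and follows essentially the same three-step template as the paper's own proof: define $\mc=\{\pcons{G}:\wcw(G)=0\}$, verify the three standard observations, handle the $O(1)$ case by completing to a tournament supergraph $G^*$ (using that ``no weak Condorcet winner'' is upward-closed under adding edges), and handle the exponential case by showing $\ppolyz{G}\cap\conv(\Pi)=\emptyset$ via Claim~\ref{claim:solumg}. Your explicit monotonicity remark and the justification that $\mc\ne\emptyset$ for $m\ge 3$ are in fact slightly more careful than the paper's write-up, which (harmlessly) writes ``Condorcet winner'' where ``weak Condorcet winner'' is meant in the tightness argument; the two notions coincide on tournaments, so the argument is unaffected.
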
 

\begin{proof}  {\bf First step: defining $\bm\mc$}.   Let  $\mc=\{\pcons{G}: \wcw(G) = 0\}$, where $\pcons{G}$ is defined in Definition~\ref{dfn:cg}. That is,  $\mc$ contains all $\pcons{G}$ where $G$ has no weak Condorcet winners. We have the following observations. 
\begin{itemize}
\item [(1)] For any profile $P$, $\wcw(P)=0$ if and only if $\hist(P)\in\bigcup_{\pcons{G}\in\mc}\ppoly{G}$. To see this, if $\wcw(P)=0$ then $\hist(P)\in\ppoly{\umg(P)}$, where $\umg(P)\in\mc$; and conversely, if $\hist(P)\in \ppoly{G}$, then $G=\umg(P)$ does not contain a weak Condorcet winner, which means that $\wcw(P)=0$. 
\item [(2)]  $\ppoly{G}\ne\emptyset$ due to McGarvey's theorem~\cite{McGarvey53:Theorem}. 
\item [(3)] $|\mc|$ can be seen as a constant that does not depend on $n$. 
\end{itemize}

{\bf Second step: the polynomial case.} The upper bound trivially holds. To prove the tightness, suppose there exit $\pi\in \conv(\Pi)$ and a supergraph $G$ of $\umg(\pi)$ that does not contain a Condorcet winner. Let $G^*$ denote an arbitrary complete supergraph of $G$. It follows that $G^*$ is a supergraph of $\umg(\pi)$ and $G^*$ does not contain a Condorcet winner, which means that $\pcons{G^*}\in\mc$. By Claim~\ref{claim:solumg}, $\pi\in \ppolyz{G^*}$, which means that $\ppolyz{G^*}\cap \conv(\Pi) \ne\emptyset$. Note that $\pba{G^*}=\emptyset$, which means that $\rank(\pba{G^*})=0$. The tightness follows after applying the tightness of the polynomial   bound in Lemma~\ref{lem:maintech} to $\pcons{G^*}$.

{\bf Third step: the exponential case.}  For any $\pcons{G}\in\mc$ and any $\pi\in\conv(\Pi)$, we first prove that $\ppolyz{G}\cap \conv(\Pi)=\emptyset$. Suppose for the sake of contradiction that there exist $\pcons{G}\in\mc$ and  $\pi\in {\ppolyz{G}}\cap \conv(\Pi)$, which means that $\wcw(G)=0$. By Claim~\ref{claim:solumg}, $\umg(\pi)$ is a subgraph of $G$, which contradicts the assumption of the exponential case, that all supergraphs of $\umg(\pi)$  contains at least one weak Condorcet winner. The upper bound (respectively, its tightness) follows after applying the exponential bound  (respectively, its tightness) in Lemma~\ref{lem:maintech} to all (respectively, an arbitrary) $\pcons{G}\in \mc$.
\end{proof}


\end{document}